\documentclass[onecolumn,11pt,draftcls]{IEEEtran}
\usepackage{amsfonts,amssymb,amsmath,amsthm}
\usepackage{algorithmic,algorithm}
\usepackage{xcolor}
\usepackage{ifpdf}
\ifpdf
\usepackage[]{graphicx}
\else
\usepackage[]{graphicx}
\fi
\usepackage{url}
\usepackage{cite}
\usepackage{changepage}
\newtheorem{theorem}{Theorem}
\newtheorem{lemma}{Lemma}
\newtheorem{definition}{Definition}
\newtheorem{example}{Example}
\newtheorem{algorithms}{Algorithm}
\def\calE{{\cal E}}
\def\calH{{\cal H}}

\def\calF{{\cal F}}

\def\calO{{\cal O}}

\def\calG{{\cal G}}

\newenvironment{namelist}[1]{%
\begin{list}{}
  {
   
   \settowidth{\labelwidth}{#1}
   \setlength{\leftmargin}{1.1\labelwidth}
   }
  }{%
\end{list}}
\newcommand{\step}[1]{\textbf{Step {#1}:}}
\hyphenation{net-works}
\def\aa{\hspace{5mm}}

\usepackage[normalem]{ulem}

\begin{document}
\title{Beyond the MDS Bound in Distributed Cloud Storage}
\author{Jian Li\aa Tongtong Li\aa Jian Ren\footnote{The authors are with the Department of ECE, Michigan State University, East Lansing, MI 48824-1226.  Email: \{lijian6, tongli, renjian\}@msu.edu. 

This paper was presented in part at IEEE INFOCOM 2014.}}

\date{May 28, 2014}

\maketitle

\begin{abstract}
Distributed storage plays a crucial role in the current cloud computing framework. After the theoretical bound for distributed storage was derived by the pioneer work of the regenerating code, Reed-Solomon code based regenerating codes were developed.  The RS code based minimum storage regeneration code (RS-MSR) and the minimum bandwidth regeneration code (RS-MBR) can achieve theoretical bounds on the MSR point and the MBR point respectively in code regeneration.  They can also maintain the MDS property in code reconstruction.
However, in the hostile network where the storage nodes can be compromised and the packets can be tampered with, the storage capacity of the network can be significantly affected.
In this paper, we propose a Hermitian code based minimum storage regenerating~(H-MSR) code and a minimum bandwidth regenerating~(H-MBR) code. We first prove that our proposed Hermitian code based regenerating codes can achieve the theoretical bounds for MSR point and MBR point respectively.
We then propose data regeneration and reconstruction algorithms for the H-MSR code and the H-MBR code in both error-free network and hostile network.
%---------
Theoretical evaluation shows that our proposed schemes can detect the erroneous decodings and correct more errors in hostile network than the RS-MSR code and the RS-MBR code with the same code rate.  Our analysis also demonstrates that the proposed H-MSR and H-MBR codes have lower computational complexity than the RS-MSR/RS-MBR codes in both code regeneration and code reconstruction.

%Distributed storage plays a crucial role in the current cloud computing framework. After the theoretical bound for distributed storage was derived by the pioneer work of the regenerating code, a Reed-Solomon code based regenerating code was developed.  This code can achieve both theoretical bound on the minimum storage regeneration (MSR) and the minimum bandwidth regeneration (MBR) in code regeneration. This code  can also maintain the MDS property in code reconstruction.
%However, in the hostile network where the storage nodes can be compromised and the packets can be tampered with, the storage capacity of the network can be significantly affected.
%In this paper, we propose a Hermitian code based regenerating~(H-MSR) code.  We first prove that this code can achieve the theoretical storage bound. We then propose data regeneration and reconstruction algorithms for the H-MSR code in both error-free network and hostile network.
%---------
%Theoretical evaluation shows that our proposed schemes can detect the erroneous decodings and correct more errors in the hostile network than the Reed-Solomon code based regenerating (RS-MSR) code with the same code rate.  Our analysis also demonstrates that the proposed H-MSR code has a lower complexity than the RS-MSR in both code regeneration and code reconstruction.
\end{abstract}

\begin{IEEEkeywords}
regenerating code, Reed-Solomon code, error-correction, Hermitian code.
\end{IEEEkeywords}

\section{Introduction}

Cloud storage is an on-demand network data storage and access paradigm.  To ensure accessibility of remotely stored data at any time, a typical solution is to store data across multiple servers or clouds, often in a replicated fashion. Data replication not only lacks flexibility in data recovery, but also requires secure data management and protection.

It is well known that secure data management is generally very costly and may be vulnerable to data compromising attacks.
Distributed data storage provides an elegant tradeoff between the costly secure data management task and the cheap storage media.  The main idea for distributed storage is that instead of storing the entire data in one server, we can split the data into $n$ data components and store the components across multiple servers. The original data can be recovered only when the required (threshold) number of components, say $k$, are collected.
In fact, the original data is information theoretically secure for anyone who has access up to $k-1$ data components.
%----------
In this case, when the individual components are stored distributively across multiple cloud storage servers, each cloud storage server only needs to ensure data integrity and data availability.  The costly data encryption and secure key management may no longer be needed any more.  The distributed cloud storage can also increase data availability while reducing network congestion that leads to increased resiliency.  A popular approach is to employ an $(n,k)$ maximum distance separable (MDS) code such as an Reed-Solomon (RS) code \cite{Ocean,Total}.  For RS code, the data is stored in $n$ storage nodes in the network. The data collector~(DC) can reconstruct the data by connecting to any $k$ healthy nodes.

While RS code works perfect in reconstructing the data, it lacks scalability in repairing or regenerating a failed node.  To deal with this issue, the concept of regenerating code was introduced in \cite{Dimakis}.  The main idea of the regenerating code is to allow a replacement node to connect to some individual nodes directly and regenerate a substitute of the failed node, instead of first recovering the original data then regenerating the failed component.

Compared to the RS code, regenerating code achieves an optimal tradeoff between bandwidth and storage within the minimum storage regeneration (MSR) and the minimum bandwidth regeneration (MBR) points.  RS code based MSR (RS-MSR) code and MBR (RS-MBR) code have been explicitly constructed in \cite{Rashmi}.
However, the existing research either has no error detection capability, or has the error correction capability limited by the RS code.
Moreover, the schemes with error correction capability are unable to determine whether the error correction is successful.

In this paper, we construct the regenerating codes by combining the Hermitian code and regenerating code at the MSR point (H-MSR code) and the MBR point (H-MBR code).
%then we construct the H-MBR code by combining the Hermitian code and regenerating code at the MBR point.
We prove that these codes can achieve the theoretical MSR bound and MBR bound respectively. We also propose data regeneration and reconstruction algorithms for the H-MSR code and the H-MBR code in both error-free network and hostile network. Our proposed algorithms can detect the errors in the network while achieving the error correction capability \emph{beyond the RS code}.  Moreover, our proposed algorithms can determine whether the error correction is successful.

The rest of this paper is organized as follows:
In Section \ref{Sec:RelatedWork}, related work is reviewed.
The preliminary of this paper is presented in Section \ref{Sec:Preliminary}.  In Section \ref{Sec:H-MSR}, our proposed encoding of  H-MSR code is described.   In Section \ref{Sec:H-MSR-Regeneration}, regeneration of the H-MSR code is discussed.  Reconstruction of the H-MSR code is analyzed in Section \ref{Sec:H-MSR-Reconstruct}.
In Section \ref{Sec:H-MBR}, our proposed encoding of  H-MBR code is described.   In Section \ref{Sec:H-MBR-Regeneration}, regeneration of the H-MBR code is discussed.  Reconstruction of the H-MBR code is analyzed in Section \ref{Sec:H-MBR-Reconstruct}.   We conduct performance analysis in Section \ref{Sec:Performance}.  The paper is concluded in Section \ref{Sec:Conclusion}.

\section{Related Work} \label{Sec:RelatedWork}

When a storage node in the distributed storage network that employing the conventional $(n,k)$ RS code (such as OceanStore~\cite{Ocean} and Total Recall~\cite{Total}) fails,
the replacement node connects to $k$ nodes and downloads the whole file to recover the symbols stored in the failed node. This approach is a waste of bandwidth because the whole file has to be downloaded to recover a fraction of it.
To overcome this drawback, Dimakis \emph{et al.} ~\cite{Dimakis} introduced the concept of $\{n, k, d, \alpha, \beta, B\}$ regenerating code. In the context of regenerating code, the replacement node can regenerate the contents stored in a failed node by downloading $\gamma$ help symbols from $d$ helper nodes. The bandwidth consumption to regenerate a failed node could be far less than the whole file. A data collector~(DC)
can reconstruct the original file stored in the network by downloading $\alpha$ symbols from each of the $k$ storage nodes.
In~\cite{Dimakis}, the authors proved that there is a tradeoff between bandwidth $\gamma$ and per node storage $\alpha$.  They find two optimal points: minimum storage regeneration (MSR) and minimum bandwidth regeneration (MBR) points.

The regenerating code can be divided into functional regeneration and exact regeneration. In the functional regeneration, the replacement node
regenerates a new component that can functionally replace the failed component instead of being the same as the original stored component.
%does not need to regenerate the exact symbols originally stored in the failed node as long as the regenerated nodes can be used the same as other nodes to reconstruct the original file.
\cite{Ywu} formulated the data regeneration as a multicast network coding problem and constructed functional regenerating codes.
\cite{Duminuco} implemented a random linear regenerating codes
%. The computational cost of this code can be measured. This code is very attractive
for distributed storage systems.
\cite{Shum} proved that by allowing data exchange among the replacement nodes, a better tradeoff between repair bandwidth $\gamma$ and per node storage $\alpha$ can be achieved.
In the exact regeneration, the replacement node regenerates the exact symbols of a failed node.
\cite{Ywu2} proposed to reduce the regeneration bandwidth through algebraic alignment.
\cite{Shah} provided a code structure for exact regeneration using interference alignment technique.
\cite{Rashmi} presented optimal exact constructions of MBR codes and MSR codes under product-matrix framework. This is the first work that allows independent selection of the nodes number $n$ in the network.

However, none of the existing work above considered code regeneration under node corruption or adversarial manipulation attacks. In fact, all these schemes will fail in both regeneration and reconstruction in these scenarios.

In~\cite{Pawar}, the authors discussed the amount of information that can be safely stored against passive eavesdropping and active adversarial attacks based on the regeneration structure.
In~\cite{Han}, the authors proposed to add CRC codes in the regenerating code to check the integrity of the data in hostile network. Unfortunately, the CRC checks can also be manipulated by the malicious nodes, resulting in the failure of the regeneration and reconstruction.
In \cite{Rashmi-err}, the authors analyzed the error resilience of the RS code based regenerating code in the network with errors and erasures. They provided the theoretical error correction capability.  Their result is an extension of the MDS code to the regenerating code and their scheme is unable to determine whether the errors in the network are successfully corrected.

% The proposed H-MSR code in this paper can correct more errors than the MDS codes and can determine whether the error correction is successful.

In this paper, we propose a Hermitian code based minimum storage regeneration (H-MSR) code and a Hermitian code based minimum bandwidth regeneration (H-MBR) code.
%The proposed H-MSR code in this paper can detect the malicious nodes with little extra cost.
The proposed H-MSR/H-MBR codes can correct more errors than the RS-MSR/RS-MBR codes and can always determine whether the error correction is successful.  Our design is based on the structural analysis of the Hermitian code and the efficient decoding algorithm proposed in~\cite{Hermitian}.

It is worthwhile to point out that although there are strong connections between regenerating code in distributed storage and general network communication of which security problems have been well studied, 
our proposed H-MSR/H-MBR codes are fundamentally different from these security studies of network communication 
e.g.~\cite{sbjaggi0,sbjaggi1,sbjaggi2,sbjaggi3}, for two main reasons. 
First, the significant error correction capability of the proposed H-MSR/H-MBR codes is due to the underlying Hermitian code~\cite{Hermitian}, instead of relying on an error-detection layer, and/or shared secret keys between the sender and the receiver for error detection~\cite[Section 8.6.1]{sbjaggi0}. 
Second, the regenerating codes studied in this paper and the general network communication are different in that besides the overall data reconstruction, the regenerating codes also emphasize repairing of the corrupted code components (regeneration), while general network communication only focuses on data reproducing (reconstruction).  
Therefore, both the principle and the scope of this paper are different from the researches of security in general network communication.

\section{Preliminary and Assumptions} \label{Sec:Preliminary}

\subsection{Regenerating Code}
Regenerating code introduced in~\cite{Dimakis} is a linear code over the finite field $\mathbb{F}_q$ with a set of parameters $\{n, k, d, \alpha, \beta, B\}$. A file of size $B$ is stored in $n$ storage nodes, each of which stores $\alpha$ symbols. A replacement node can regenerate the contents of a failed node by downloading $\beta$ symbols from each of $d$ randomly selected storage nodes. So the total bandwidth needed to regenerate a failed node is $\gamma = d\beta$. The data collector (DC) can reconstruct the whole file by downloading $\alpha$ symbols from each of $k\leq d$ randomly selected storage nodes. In~\cite{Dimakis}, the following theoretical bound was derived:
\begin{equation}
\label{eq:min_cut}
B \leq \sum_{i=0}^{k-1}\min \{ \alpha, (d-i)\beta \}.
\end{equation}
From equation~(\ref{eq:min_cut}), a tradeoff between the regeneration bandwidth $\gamma$ and the storage requirement $\alpha$ was derived.
%$\gamma$ and $\alpha$ cannot be decreased at the same time.
There are two special cases: minimum storage regeneration (MSR) point in which the storage parameter $\alpha$ is minimized;
\begin{equation}
\label{eq:MSR_tradeoff}
(\alpha_{MSR},\gamma_{MSR})= \left(\frac Bk, \frac{Bd}{k(d-k+1)}\right),
\end{equation}
and minimum bandwidth regeneration (MBR) point in which the bandwidth $\gamma$ is minimized:
\begin{equation}
\label{eq:MBR_tradeoff}
(\alpha_{MBR},\gamma_{MBR})= \left(\frac{2Bd}{2kd-k^2 + k},\frac{2Bd}{2kd-k^2 + k} \right).
\end{equation}

In this paper, we assume that DC keeps the encoding matrix secret and each storage node only knows its own encoding vector.

\subsection{Hermitian Code}\label{sec:Hermitian_Code}

A Hermitian curve $\calH(q)$ over $\mathbb{F}_{q^2}$ in affine coordinates is defined by:
\begin{equation}
\label{eq:hermitian_curve}
\calH(q):y^q + y = x^{q+1}.
\end{equation}
The genus of $\calH(q)$ is $\varrho =(q^2-q)/2$ and there are
%$q^3 +1$ rational points on $\calH(q)$: one point at infinity, and
$q^3$ points that satisfy equation~(\ref{eq:hermitian_curve}),  denoted as $P_{0,0},\cdots,P_{0,q-1},\cdots$, $P_{q^2-1,0},\cdots,P_{q^2-1,q-1}$ (See Table~\ref{tb:points}), where $\theta_0,\theta_1,\cdots,\theta_{q-1}$ are the $q$ solutions to $y^q + y = 0$ and $\phi$ is a primitive element in $\mathbb{F}_{q^2}$.  $L(mQ)$ is defined as:
\begin{eqnarray}
L(mQ) &=& \{ f_0(x) + yf_1(x) + \cdots + y^{q-1}f_{q-1}(x) | \nonumber \\
& & \deg f_j(x) <\kappa(j), j = 0,1,\cdots,q-1 \},\ \ \ \ \ \
\end{eqnarray}
where %$k(j)$ is defined as:
\begin{equation}
\kappa(j) =  \max \{ t | tq + j(q+1) \leq m \} + 1,
\end{equation}
for $m \geq q^2 - 1$.
A codeword of the Hermitian code \cite{Hermitian} $\calH_m$ is defined as
\begin{equation}\label{Eq:H-Encode}
(\varrho (P_{0,0}),\cdots,\varrho (P_{0,q-1}), \cdots,\varrho (P_{q^2-1,0}), \cdots, \varrho (P_{q^2-1,q-1})),
\end{equation}
where $\varrho  \in L(mQ)$. The dimension of the message before encoding can be calculated as $\dim(\calH_m) = \sum_{j=0}^{j=q-1} (\deg f_j(x) + 1)$.

%{\textcolor{red}{Move table 1 here (displayed on next page)}}

\begin{table*}[ht]
\centering
\caption{$q^3$ rational points of the Hermitian curve}
\label{tb:points}
\begin{tabular}{cccc}
\hline
$P_{0,0} = (0,\theta_0)$ & $P_{1,0} = (1,\phi + \theta_0)$ &$\cdots$ & $P_{q^2-1,0} = (\phi^{q^2-2},\phi^{(q^2-2)(q+1)+1)} + \theta_0$\\
$P_{0,1} = (0,\theta_1)$ & $P_{1,1} = (1,\phi + \theta_1)$ &$\cdots$ & $P_{q^2-1,1} = (\phi^{q^2-2},\phi^{(q^2-2)(q+1)+1)} + \theta_1$\\
$\vdots$ & $\vdots$ &$\ddots$ &$\vdots$ \\
$P_{0,q-1} = (0,\theta_{q-1})$ & $P_{1,q-1} = (1,\phi + \theta_{q-1})$ &$\cdots$ & $P_{q^2-1,q-1} = (\phi^{q^2-2},\phi^{(q^2-2)(q+1)+1)} + \theta_{q-1}$\\  \hline
\end{tabular}
\end{table*}

\subsection{Adversarial Model}

In this paper, we assume some network nodes may be corrupted due to hardware failure or communication errors, and/or be compromised by malicious users.  As a result, upon request, these nodes may send out incorrect responses to disrupt the data regeneration and reconstruction. 
The adversary model is the same as~\cite{Rashmi-err}, 
We assume that the malicious users can take full control of $\tau$ ($\tau \leq n$ and corresponds to $s$ in~\cite{Rashmi-err})  storage nodes and collude to perform attacks. 

We will refer these symbols as \emph{bogus} symbols without making distinction between the corrupted symbols and compromised symbols.
We will also use corrupted nodes, malicious nodes and compromised nodes interchangeably without making any distinction.

%\section{Encoding Hermitian Code for Minimum Storage Regeneration}\label{Sec:H-MSR}
\section{Encoding H-MSR Code}\label{Sec:H-MSR}

In this section, we will analyze the
%Hermitian code for minimum storage regeneration (MSR)
H-MSR code based on the MSR point with $d = 2k - 2=2\alpha$. The code based on the MSR point with $d > 2k - 2$ can be derived the same way through truncating operations.

%$\alpha=(d-k+1)\beta$;\\
%$\alpha = B/k$, \\
%$\beta = B/(k(d-k+1))$.

Let
$\alpha_0, \cdots , \alpha_{q-1}$ be a strictly decreasing integer sequence satisfying $0 < \alpha_i \leq\kappa(i), \, 0 \leq i \leq q-1$, where $\alpha_i$ is the parameter $\alpha$ for the underlying regenerating code.
The least common multiple of $\alpha_0, \cdots, \alpha_{q-1}$ is $A$.
Suppose the data contains $B = A \sum_{i=0}^{q-1}(\alpha_i + 1)$ message symbols from the finite field $\mathbb{F}_{q^2}$. In practice, if the size of the actual data is larger than $B$ symbols, we can fragment it into blocks of size $B$ and process each block individually.

We arrange the $B$ symbols into two matrices $S,T$ as below:
\begin{equation}
\label{eqn:matrices_s_t}
S = \begin{bmatrix}
S_0\\
S_1\\
\vdots \\
S_{q-1}
\end{bmatrix},\ \ \
T = \begin{bmatrix}
T_0\\
T_1\\
\vdots \\
T_{q-1}
\end{bmatrix},
\end{equation}
where
\begin{eqnarray}
S_i &=&
[S_{i,1},S_{i,2}, \cdots, S_{i,A/\alpha_i}],
\nonumber \\
T_i &=&
[T_{i,1}, T_{i,2}, \cdots, T_{i,A/\alpha_i}].
\end{eqnarray}
$S_{i,j}, \, 0 \leq i \leq q-1, 1 \leq j \leq A/\alpha_i$, is a symmetric matrix of size $\alpha_i \times \alpha_i$ with the upper-triangular entries filled by data symbols. Thus $S_{i,j}$ contains $\alpha_i(\alpha_i + 1)/2$ symbols. The number of columns of each submatrix $S_i,\, 0 \leq i \leq q-1$, is the same: $\alpha_i \cdot A / \alpha_i = A$. The size of matrix $S$ is $(\sum_{i=0}^{q-1}\alpha_i) \times A$. So it contains $\sum_{i=0}^{q-1}(\alpha_i (\alpha_i + 1)/2)  A / \alpha_i =  (A  \sum_{i=0}^{q-1}(\alpha_i + 1))/2$ data symbols.

$T_{i,j}, \, 0 \leq i \leq q-1, 1 \leq j \leq A/\alpha_i$, is constructed the same as $S_{i,j}$.  So $T$ contains the other $(A\sum_{i=0}^{q-1}(\alpha_i + 1))/2$ data symbols.

\begin{definition}
For a Hermitian code $\calH_m$ over $\mathbb{F}_{q^2}$, we encode matrix $M_{\dim(\calH_m)\times A}=[M_1,M_2\cdots,M_{A}]$ by encoding each column $M_i,\ i=1,2,\cdots,A$, individually using $\calH_m$. The codeword matrix is defined as
\begin{equation}\label{Eq:CodewordMatrix}
\calH_m(M)=[\calH_m(M_1),\calH_m(M_2),\cdots,\calH_m(M_{A})],
\end{equation}
where $\calH_m(M_i)$ has the following form ($\varrho  \in L(mQ)$):
\begin{equation}\label{Eq:EncodeForm}
[\varrho (P_{0,0}),\cdots,\varrho (P_{0,q-1}),\cdots,\varrho (P_{q^2-1,0}),\cdots,\varrho (P_{q^2-1,q-1})]^T,
\end{equation}
and the elements of $M_i$ are viewed as the coefficients of the polynomials $f_0(x),\cdots,f_{q-1}(x)$ in $\varrho$ when $M_i$ is encoded.
\end{definition}

%\begin{definition}
Let
\begin{equation}\label{eq:phi}
\Phi_i =
\begin{bmatrix}
1 & 0 & 0 & \cdots  & 0 \\
1 & 1 & 1 & \cdots  & 1 \\
1 & \phi & \phi^2 & \cdots  & \phi^{\alpha_i-1} \\
\vdots & \vdots & \vdots & \ddots  & \vdots \\
1 & \phi^{q^2-2}& (\phi^{q^2-2})^2 & \cdots  & (\phi^{q^2-2})^{\alpha_i - 1} \\
\end{bmatrix}
\end{equation}
be a Vandermonde matrix, where $\phi$ is the primitive element in $\mathbb{F}_{q^2}$ mentioned in section~\ref{sec:Hermitian_Code} and $0 \leq i \leq q-1$.
%\end{definition}

%\begin{definition}
Define
\begin{equation}
\Delta = \begin{bmatrix}
\lambda_0 & 0 & \cdots &  0 \\
0       & \lambda_1 & \cdots & 0 \\
 \vdots & \vdots & \ddots & \vdots \\
0       & 0 & \cdots & \lambda_{q^2-1}
\end{bmatrix}
\end{equation}
to be a diagonal matrix comprised of $q^2$ elements, where $\lambda_i, \,0 \leq i \leq q^2-1$, is chosen using the following two criteria: (i) $\lambda_i \neq \lambda_j,\, \forall i \neq j,  \,0 \leq i,j \leq q^2-1$. (ii) Any $d_i=2\alpha_i$ rows of the matrix $[\Phi_i, \Delta \cdot \Phi_i]$, $0 \leq i \leq q-1$, are linearly independent.

We also define
\begin{equation}
\Lambda_i = \lambda_i I
\end{equation}
to be a $q \times q$ diagonal matrix for $0 \leq i \leq q^2-1$, where I is the $q \times q$ identical matrix.
And
\begin{equation}
\Gamma = \begin{bmatrix}
\Lambda_0 & 0 & \cdots &  0 \\
0       & \Lambda_1 & \cdots & 0 \\
 \vdots & \vdots & \ddots & \vdots \\
0       & 0 & \cdots & \Lambda_{q^2-1}
\end{bmatrix}
\end{equation}
is a $q^3\times q^3$ diagonal matrix formed by $q^2$ diagonal submatrices $\Lambda_0, \cdots, \Lambda_{q^2 -1}$.

%\end{definition}

%%\begin{definition}
%We also define $\Lambda_i\,(0 \leq i \leq q^2-1$) to be $q \times q$ diagonal matrices as:
%\begin{equation}
%\Lambda_i = \lambda_i I,
%\end{equation}
%where I is the $q \times q$ identical matrix.
%%\end{definition}

%\begin{definition}
%We define $\Gamma$ to be a $q^3\times q^3$ diagonal matrix formed by $q^2$ diagonal submatrices $\Lambda_0, \cdots, \Lambda_{q^2 -1}$:
%\begin{equation}
%\Gamma = \begin{bmatrix}
%\Lambda_0 & 0 & \cdots &  0 \\
%0       & \Lambda_1 & \cdots & 0 \\
% \cdots & \cdots & \ddots & \cdots \\
%0       & 0 & \cdots & \Lambda_{q^2-1}
%\end{bmatrix},
%\end{equation}
%\end{definition}

For distributed storage, we encode each pair of matrices $(S,T)$ using Algorithm~\ref{alg:enc}.  We will name this encoding scheme as \emph{Hermitian-MSR code encoding}, or \emph{H-MSR code encoding}.

\begin{algorithms}Encoding H-MSR Code
\label{alg:enc}

\normalfont

\begin{namelist}{\textbf{Step n:}}
\item[\step{1}] Encode the data matrices $S,T$ defined in equation (\ref{eqn:matrices_s_t}) using a Hermitian code $\calH_m$ over $\mathbb{F}_{q^2}$ with parameters $\kappa(j)\ (0 \leq j \leq q-1)$ and $m\ (m \geq q^2 -1)$.  Denote the generated $q^3 \times A$ codeword matrices as $\calH_m(S)$ and $\calH_m(T)$.

\item[\step{2}]  Compute the $q^3 \times A$ codeword matrix $Y=\calH_m(S) + \Gamma\calH_m(T)$.

\item[\step{3}]  Divide $Y$ into $q^2$ submatrices $Y_0,\cdots,Y_{q^2-1}$ of size $q\times A$ and store each submatrix in a storage node as shown in Fig.~\ref{fig:store_codeword}.
\end{namelist}
\end{algorithms}

\begin{figure}
\centering
\includegraphics[width=1.5in]{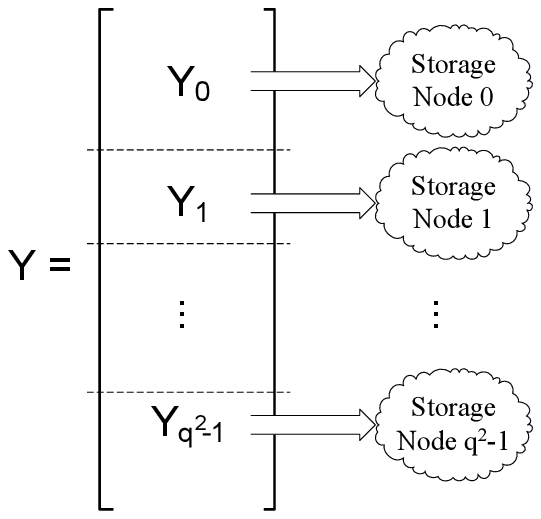}
\caption{Illustration of storing the codeword matrices in distributed storage nodes}
\label{fig:store_codeword}
\end{figure}

For H-MSR coding encoding, we have the following theorem.
\begin{theorem}
\label{th:msr}
The H-MSR code encoding described in Algorithm~\ref{alg:enc} can achieve the MSR point in distributed storage.
\end{theorem}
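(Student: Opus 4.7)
The plan is to reduce Theorem~\ref{th:msr} to the MSR property of the Rashmi--Shah--Kumar product-matrix construction already established in~\cite{Rashmi}, by exhibiting the H-MSR code as an aggregation of $q$ parallel, independent layers of that construction, one layer per polynomial index $j\in\{0,\ldots,q-1\}$ of the Hermitian code, and then checking that the aggregate parameters satisfy equation~(\ref{eq:MSR_tradeoff}).

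First, I would analyze what a single storage node actually holds. Because $\Gamma$ is block diagonal with each $q\times q$ block equal to the scalar matrix $\Lambda_i=\lambda_i I$, node $i$ stores $Y_i=\calH_m(S)_i+\lambda_i\calH_m(T)_i$. The $q$ rows of $Y_i$ correspond to the points $P_{i,0},\ldots,P_{i,q-1}$, which share a common $x$-coordinate $x_i$ and whose $y$-coordinates differ only by the additive shifts $\theta_0,\ldots,\theta_{q-1}$. Expanding $\varrho(P_{i,\ell})=\sum_{t=0}^{q-1} f_t(x_i)\,y_{i,\ell}^{\,t}$ and inverting the $q\times q$ Vandermonde matrix formed by $y_{i,0},\ldots,y_{i,q-1}$, node $i$ can isolate, for each layer index $t$, the combination $f^S_t(x_i)+\lambda_i f^T_t(x_i)$, where $f^S_t$ and $f^T_t$ are the layer-$t$ polynomials used to encode $S$ and $T$.

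Second, fixing a layer $j$ and a column-block $\ell\in\{1,\ldots,A/\alpha_j\}$, I would observe that the $q^2$ values $f^S_j(x_i)+\lambda_i f^T_j(x_i)$ across the nodes, arranged as a $q^2\times\alpha_j$ matrix, are exactly $\Phi_j S_{j,\ell}+\Delta\Phi_j T_{j,\ell}$. This is precisely the product-matrix MSR codeword of~\cite{Rashmi} for the message $[S_{j,\ell};T_{j,\ell}]$ with encoding matrix $[\Phi_j,\Delta\Phi_j]$ and parameters $(n,k_j,d_j)=(q^2,\alpha_j+1,2\alpha_j)$. The two criteria imposed on $\lambda_0,\ldots,\lambda_{q^2-1}$ are exactly the non-degeneracy conditions required so that this construction meets its MSR bound. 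Hence every layer-block stores $\alpha_j$ symbols per node, carries $\alpha_j(\alpha_j+1)$ data symbols, and independently attains the MSR tradeoff.

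Finally, I would aggregate over the $q$ layers and the $A/\alpha_j$ blocks per layer: per-node storage becomes $\sum_{j=0}^{q-1}(A/\alpha_j)\alpha_j=qA$ and the total payload becomes $\sum_{j=0}^{q-1}(A/\alpha_j)\alpha_j(\alpha_j+1)=A\sum_{j=0}^{q-1}(\alpha_j+1)=B$, so $B=k\alpha$ with $k=\sum_{j=0}^{q-1}(\alpha_j+1)/q$, and the aggregate regeneration bandwidth inherits the same additive form from the per-block MSR bandwidths. The main obstacle I anticipate is the first step: showing rigorously that a single storage node can always, without outside help, extract the $q$ layer-indexed combinations from its $q$ stored symbols. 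This Vandermonde inversion is what cleanly decouples the aggregate into $q$ independent Rashmi--Shah--Kumar layers and is, ultimately, what carries the MSR property from each layer to the whole H-MSR code.
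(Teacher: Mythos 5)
Your proposal follows essentially the same route as the paper's proof: invert the $q\times q$ Vandermonde matrix in the $y$-coordinates at each node to separate the $q$ layers, recognize each layer-block $\Phi_j S_{j,\ell}+\Delta\Phi_j T_{j,\ell}$ as a product-matrix MSR codeword of~\cite{Rashmi} with parameters $d=2\alpha_j$, $k=\alpha_j+1$, and conclude that every block attains the MSR point. The ``obstacle'' you flag is handled exactly as you suggest (the $y$-coordinates $y(P_{i,0}),\ldots,y(P_{i,q-1})$ are distinct, so $B_i$ is invertible), and your explicit aggregation of storage and payload is a slightly more detailed bookkeeping than the paper's, which defers the equivalent-parameter check to the regeneration section.
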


\begin{proof}
We first study the structure of the codeword matrix $\calH_m(S)$.
Since every column of the matrix is an independent Hermitian codeword, we can decode the first column $\mathbf{{h}} = [h_{0,0},\cdots,h_{0,q-1},\cdots,h_{q^2-1,0},$ $\cdots,h_{q^2-1,q-1}]^T$ as an example without loss of generality. We arrange the $q^3$ rational points of the Hermitian curve following the order in Table~\ref{tb:points}. In the table, we can find that for each $i,\, i=0,1,\cdots,q^2-1$, the rational points $P_{i,0}, P_{i,1},\cdots,P_{i,q-1}$ all have the same first coordinate.

%\begin{table*}[ht]
%\centering
%\begin{tabular}{cccc}\hline
%$P_{0,0} = (0,\beta_0)$ & $P_{1,0} = (1,\phi + \beta_0)$ & $\cdots$ & $P_{q^2-1,0} = (\phi^{q^2-2},\phi^{(q^2-2)(q+1)+1)} + \beta_0$\\
%$P_{0,1} = (0,\beta_1)$ & $P_{1,1} = (1,\phi + \beta_1)$ & $\cdots$ & $P_{q^2-1,1} = (\phi^{q^2-2},\phi^{(q^2-2)(q+1)+1)} + \beta_1$\\
%$\vdots$ & $\vdots$ &$\ddots$ &$\vdots$ \\
%$P_{0,q-1} = (0,\beta_{q-1})$ & $P_{1,q-1} = (1,\phi + \beta_{q-1})$ & $\cdots$ & $P_{q^2-1,q-1} = (\phi^{q^2-2},\phi^{(q^2-2)(q+1)+1)} + \beta_{q-1}$\\
%\hline
%\end{tabular}
%\caption{$q^3$ rational points of the Hermitian curve}
%\label{tab_points}
%\end{table*}

Suppose $\varrho  \in L(mQ)$: $\varrho (P_{i,l}) = f_0(P_{i,l}) + y(P_{i,l})f_1(P_{i,l}) + \cdots + (y(P_{i,l}))^{q-1}f_{q-1}(P_{i,l}), \, 0 \leq i \leq q^2-1$, $0 \leq l \leq q-1, \deg f_j(x) = \alpha_j - 1$ for $0 \leq j \leq q-1$. Since $P_{i,0},P_{i,1},\cdots,P_{i,q-1}$ all have the same first coordinate and $f_j(P_{i,l})$ is only applied to the first coordinate of $P_{i,l}$, we have $f_j(P_{i,l})=f_j(\phi^{s_i}), \, s_0=-\infty, s_i=i-1, \mbox{for } i \geq 1, \phi^{-\infty}=0$, which does not depend on $l$.
Therefore, we can derive $q^2$ sets of equations for $0 \leq i \leq q^2 - 1$:
\begin{equation}
\label{equ_derive_f}
\left \{\begin{matrix}
f_0(\phi^{s_i}) + y(P_{i,0})f_1(\phi^{s_i}) + \cdots + (y(P_{i,0}))^{q-1}f_{q-1}(\phi^{s_i}) = h_{i,0}\\
f_0(\phi^{s_i}) + y(P_{i,1})f_1(\phi^{s_i}) + \cdots + (y(P_{i,1}))^{q-1}f_{q-1}(\phi^{s_i}) = h_{i,1}\\
\dotfill \\
f_0(\phi^{s_i}  )  +  y(P_{i,q-1} )f_1(\phi^{s_i} )  +  \cdots  +  (y(P_{i,q-1}))^{q-1}  f_{q-1}  (\phi^{s_i}  )  =  h_{i,q-1}
\end{matrix}.  \right.
\end{equation}\normalsize
If we store the codeword matrix in storage nodes according to Fig.~\ref{fig:store_codeword}, each set of the equations corresponds to a storage node. As we mentioned above, the set of equations in equation~(\ref{equ_derive_f}) can be derived in storage node $i$.

Since the coefficient matrix $B_i$ is a Vandermonde matrix:
\begin{equation}
B_i =
\begin{bmatrix}
1 & y(P_{i,0}) & \cdots  & y(P_{i,0})^{q-1} \\
1 & y(P_{i,1}) & \cdots  & y(P_{i,1})^{q-1} \\
\vdots & \vdots & \ddots & \vdots \\
1 & y(P_{i,q-1}) & \cdots  & y(P_{i,q-1})^{q-1} \\
\end{bmatrix}.
\end{equation}
we can solve $\mathbf{u}_i = [f_0(\phi^{s_i}), f_1(\phi^{s_i}), \cdots , f_{q-1}(\phi^{s_i})]^T$ from
\begin{equation}
\label{eqn_solve_f}
\mathbf{u}_i = B_i^{-1}  \mathbf{{h}}_i,
\end{equation}
where $\mathbf{{h}}_i = [h_{i,0}, h_{i,1}, \cdots , h_{i,q-1}]^T$.

From all the $q^2$ storage nodes, we can get vectors $\calF_i = [f_i(0),f_i(1),\cdots, f_i(\phi^{q^2-2})]^T,$ $i=0,\cdots,q-1$, which can be viewed as extended Reed-Solomon codes.

Now consider all the columns of $\calH_m(S)$, we can get the following equation:
\begin{equation}
\Phi_i  S_{i,j} = F_{i,j},
\end{equation}
where $F_{i,j}=[\calF_i^{(1)},\cdots,\calF_i^{(\alpha_i)}]$, $0 \leq i \leq q-1$, $1 \leq j \leq A/\alpha_i$, and $\calF_i^{(l)}$ corresponds to the $l^{th}$ column of the submatrix $S_{i,j}$.

Next we will consider the structure of the codeword matrix $\calH_m(T)$. Because the encoding process for $\calH_m(T)$ is the same as that of $\calH_m(S)$, for $\Gamma \calH_m(T)$,
we can derive
\begin{equation}
\Delta  \Phi_i  T_{i,j} = \Delta  E_{i,j},
\end{equation}
where $\calE_i=[e_i(0),e_i(1),\cdots,e_i(\phi^{q^2-2})]^T$,
$E_{i,j}=[\calE_i^{(1)},\cdots,\calE_i^{(\alpha_i)}]$, $0 \leq i \leq q-1$, $1 \leq j \leq A/\alpha_i$, and $\calE_i^{(l)}$ corresponds to the $l^{th}$ column of the submatrix $T_{i,j}$.

Thirdly, we will study the optimality of the code in the sense of the MSR point. For $\Phi_i S_{i,j} + \Delta \Phi_i T_{i,j},\, 0 \leq i \leq q-1,1 \leq j \leq A/\alpha_i$, since $S_{i,j},T_{i,j}$ are symmetric and satisfy the requirements for MSR point according to~\cite{Rashmi} with parameters $d= 2\alpha_i, k= \alpha_i + 1, \alpha = \alpha_i, \beta = 1, B= \alpha_i \cdot (\alpha_i + 1)$.
By encoding $S,T$ using $\calH_m(S) + \Gamma  \calH_m(T)$ and distributing $Y_0,\cdots,Y_{q^2-1}$ into $q^2$ storage nodes, each row of the matrix $\Phi_i  S_{i,j} + \Delta \Phi_i T_{i,j}, \, 0 \leq i \leq q-1,\, 1 \leq j \leq A/\alpha_i$,  can be derived in a corresponding storage node. Because $\Phi_i S_{i,j} + \Delta  \Phi_i T_{i,j}$ achieves the MSR point, data related to matrices $S_{i,j},T_{i,j}, \, 0 \leq i \leq q-1,\, 1 \leq j \leq A/\alpha_i$, can be regenerated at the MSR point. Therefore,  Algorithm~\ref{alg:enc} can achieve the MSR point.
\end{proof}

\section{Regeneration of the H-MSR Code}
\label{Sec:H-MSR-Regeneration}

In this section, we will first discuss regeneration of the H-MSR code in error-free network. Then we will discuss regeneration in hostile network.

\subsection{Regeneration in Error-free Network}
\label{Sec:H-MSR-Regeneration-error-free}

Let $\mathbf{v}_i=[e_0(\phi(^{s_i})), e_1(\phi(^{s_i})), \cdots, e_{q-1}(\phi(^{s_i}))]^T$, then
\begin{eqnarray}
\label{Eq:solve_y}
\mathbf{u}_i + \Lambda_i  \mathbf{v}_i =  B_i^{-1}  \mathbf{{y}}_i
= [f_0(\phi^{s_i}) +  \lambda_ie_0(\phi^{s_i}), \cdots , f_{q-1}(\phi^{s_i})+  \lambda_ie_{q-1}(\phi^{s_i})]^T,
\end{eqnarray}
for every column $\mathbf{y}_i$ of $Y_i$.

The main idea of the regeneration algorithms is to regenerate $f_l(\phi^{s_i}) +  \lambda_ie_l(\phi^{s_i})$, $0 \leq l \leq q-1$, by downloading help symbols from $d_l = 2\alpha_l$ nodes, where $d_l$ represents the regeneration parameter $d$ for $f_l(\phi^{s_i}) +  \lambda_ie_l(\phi^{s_i})$ in the H-MSR code regeneration.

Suppose node $z$ fails, we devise Algorithm~\ref{alg:reg_err_free} in the network to regenerate the exact H-MSR code symbols of node $z$ in a replacement node $z'$. For convenience, we suppose $d_q = 2\alpha_q = 0$ and define
\begin{equation} \label{Eq:V-Definition}
\mathbf{V}_{i,j,l}=\begin{bmatrix}
\nu_{i,l}\\
\nu_{i+1,l}\\
\vdots \\
\nu_{j,l}
\end{bmatrix},
\end{equation}
where $\mathbf{\nu}_{t,l}, \, i\leq t\leq j$, is the $t^{th}$ row of $[\Phi_l, \Delta  \Phi_l]$. Each node $i$, $0 \leq i \leq q^2 - 1$, only stores its own encoding vector $\nu_{i,l}$, $0 \leq l \leq q-1$.

First, replacement node $z'$ will send requests to helper nodes for regeneration: $z'$ sends the integer $j$ to $d_j-d_{j+1}$ helper nodes, to which $z'$ has not sent requests before, for every $j$ from $q-1$ to $0$ in descending order.
%$j$ indicates that $z'$ is requesting symbols calculated by $[f_0(\phi^{s_i}) +  \lambda_ie_0(\phi^{s_i}), \cdots , f_j(\phi^{s_i})+  \lambda_ie_j(\phi^{s_i})]^T$.

Upon receiving the request integer $j$, helper node $i$ will calculate and send the help symbols as follows:
node $i$ will first calculate $\widetilde{Y}_i =  B_i^{-1}  Y_i$ to remove the coefficient matrix $B_i$ from the codeword matrix.
Since
%$j$ indicates that $z'$ is requesting symbols calculated by $[f_0(\phi^{s_i}) +  \lambda_ie_0(\phi^{s_i}) , \cdots , f_j(\phi^{s_i})+  \lambda_ie_j(\phi^{s_i})]^T$ and
the $(l+1)^{th}$ row of $\widetilde{Y}_i$ corresponds to the symbols related to $f_l(\phi^{s_i}) +  \lambda_ie_l(\phi^{s_i})$, for $0 \leq l \leq j$, node $i$ will divide the $(l+1)^{th}$ row of $\widetilde{Y}_i$ into $A/\alpha_{l}$ row vectors of the size $1 \times \alpha_l$: $[{\bf\tilde{y}}_{i,l,1},{\bf\tilde{y}}_{i,l,2},\cdots,{\bf\tilde{y}}_{i,l,A/\alpha_{l}}]$.
Then for every $0 \leq l \leq j$ and $1 \leq t \leq A/\alpha_{l}$, node $i$ will calculate the help symbol $\tilde{p}_{i,l,t} = {\bf\tilde{y}}_{i,l,t}  {\bf\mu}_{z,l}^T$, where ${\bf\mu}_{z,l}$ is the $z^{th}$ row of the encoding matrix $\Phi_l$ defined in equation~(\ref{eq:phi}). At last, node $i$ will send out all the calculated symbols $\tilde{p}_{i,l,t}$. Here $j$ indicates that $z'$ is requesting symbols $\tilde{p}_{i,l,t}$, $0 \leq l \leq j$ and $1 \leq t \leq A/\alpha_{l}$, calculated by $[f_0(\phi^{s_i}) +  \lambda_ie_0(\phi^{s_i}) , \cdots, f_j(\phi^{s_i})+  \lambda_ie_j(\phi^{s_i})]^T$

Since $d_{l_1} > d_{l_2}$ for $l_1 < l_2$, for efficiency consideration, only $d_{q-1}$ helper nodes need to send out symbols $\tilde{p}_{i,l,t}$, $0 \leq l \leq q-1$ and $1 \leq t \leq A/\alpha_{l}$, calculated by $[f_0(\phi^{s_i}) +  \lambda_ie_0(\phi^{s_i}) , f_1(\phi^{s_i})+  \lambda_ie_1(\phi^{s_i}), \cdots , f_{q-1}(\phi^{s_i})+  \lambda_ie_{q-1}(\phi^{s_i})]^T$. Then $d_j - d_{j+1}$ nodes only need to send out symbols $\tilde{p}_{i,l,t}$, $0 \leq l \leq j$ and $1 \leq t \leq A/\alpha_{l}$, calculated by $[f_0(\phi^{s_i}) +  \lambda_ie_0(\phi^{s_i}) , f_1(\phi^{s_i})+  \lambda_ie_1(\phi^{s_i}), \cdots , f_j(\phi^{s_i})+  \lambda_ie_j(\phi^{s_i})]^T$ for $0 \leq j \leq q-2$. In this way, the total number of helper nodes that send out symbols $\tilde{p}_{i,l,t}$, $1 \leq t \leq A/\alpha_{l}$, calculated by $f_l(\phi^{s_i}) +  \lambda_ie_l(\phi^{s_i})$ is $d_{q-1} + \sum_{j=l}^{q-2}(d_j - d_{j+1}) = d_l$.

When the replacement node $z'$ receives all the requested symbols, it can regenerate the symbols stored in the failed node $z$ using the following algorithm:
\begin{algorithms}$z'$ regenerates symbols of the failed node $z$
\label{alg:reg_err_free}

\normalfont
\begin{namelist}{\textbf{Step n:}}
\item [\step{1}] For every $0 \leq l \leq q-1$ and $1 \leq t \leq A/\alpha_{l}$, calculate the regenerated symbols related to the help symbols $\tilde{p}_{i,l,t}$ from $d_l$ helper nodes. Without loss of generality, we assume $0 \leq i \leq d_l-1$:

%\begin{adjustwidth}{1.0em}{0pt}
\textbf{Step 1.1:} Let $\mathbf{{p}}=[\tilde{p}_{0,l,t}, \tilde{p}_{1,l,t}, \cdots, \tilde{p}_{d_l-1,l,t}]^T$, solve the equation: $\mathbf{V}_{0,d_l-1,l} \mathbf{x} =\mathbf{p}$.

\textbf{Step 1.2:} Since $\mathbf{x} = \begin{bmatrix}
S_{l,t}\\
T_{l,t}
\end{bmatrix}  \mu_{z,l}^T$ and $S_{l,t},T_{l,t}$ are symmetric, we can calculate $\mathbf{{x}}^T = [\mu_{z,l}  S_{l,t}, \mu_{z,l}  T_{l,t}]$.

\textbf{Step 1.3:} Compute $\mathbf{\tilde{y}}_{z,l,t}=\mu_{z,l} S_{l,t} + \lambda_z \mu_{z,l}T_{l,t} = \nu_{z,l}  \begin{bmatrix}
S_{l,t}\\
T_{l,t}
\end{bmatrix}$.
%\end{adjustwidth}

%\item [\step{2}] For every $0 \leq l \leq q-1$, we stack the symbols $\mathbf{\tilde{y}}_{z,l,t}$, $1 \leq t \leq A/\alpha_l$, together to form a row $[\mathbf{\tilde{y}}_{z,l,1}, \cdots, \mathbf{\tilde{y}}_{z,l,A/\alpha_l}]$.

\item [\step{2}] Let $\widetilde{Y}_z$ be a $q \times A$ matrix with the $l^{th}$ row defined as $[\mathbf{\tilde{y}}_{z,l,1}, \cdots, \mathbf{\tilde{y}}_{z,l,A/\alpha_l}], 0 \leq l \leq q-1$.

%By putting all the rows calculated in \textbf{Step 2} in one matrix, we can get the $q \times A$ matrix $\widetilde{Y}_z$.

\item [\step{3}] Calculate the regenerated symbols of the failed node $z$: $Y_{z'} = Y_z =  B_z  \widetilde{Y}_z$.
\end{namelist}
\end{algorithms}

From Algorithm~\ref{alg:reg_err_free}, we can derive the equivalent storage parameters for each symbol block of size $B_j = A  (\alpha_j +1)$: $d = 2\alpha_j$, $k = \alpha_j +1$, $\alpha = A$, $\beta = A/\alpha_j,\,  0 \leq j \leq q-1$ and equation~(\ref{eq:MSR_tradeoff}) of the MSR point holds for these parameters. Theorem \ref{th:msr} guarantees that Algorithm~\ref{alg:reg_err_free} can achieve the MSR point for data regeneration of the H-MSR code.

\subsection{Regeneration in Hostile Network}

In hostile network, Algorithm \ref{alg:reg_err_free} may not be able to regenerate the failed node due to possible bogus symbols received from the responses.
In fact,  even if the replacement node $z'$ can derive the symbol matrix $Y_{z'}$ using Algorithm~\ref{alg:reg_err_free}, it cannot verify the correctness of the result.

There are two modes for the helper nodes to regenerate the contents of a failed storage node in hostile network.
One mode is the detection mode, in which no error has been found in the symbols received from the helper nodes.
Once errors are detected, the recovery mode will be used to correct the errors and locate the malicious nodes.

\subsubsection{Detection Mode}

In the detection mode, the replacement node $z'$ will send requests in the way similar to that of the error-free network in Section~\ref{Sec:H-MSR-Regeneration-error-free}.
The only difference is that when $j=q-1$, $z'$ sends requests to $d_{q-1} - d_q + 1$ nodes instead of $d_{q-1} - d_q$ nodes. Helper nodes will still use the way similar to that of the error-free network in Section~\ref{Sec:H-MSR-Regeneration-error-free} to send the help symbols.
The regeneration algorithm is described in Algorithm~\ref{alg:reg_with_err_normal} with the detection probability characterized in Theorem~\ref{th:reg_with_err_normal}.

\begin{algorithms}[Detection mode] $z'$ regenerates symbols of the failed node $z$ in hostile network
\label{alg:reg_with_err_normal}

\normalfont
\begin{namelist}{\textbf{Step n:}}
\item [\step{1}] For every $0 \leq l \leq q-1$ and $1 \leq t \leq A/\alpha_{l}$, we can calculate the regenerated symbols which are related to the help symbols ${\tilde{p}_{i,l,t}}'$ from $d_l$ helper nodes. ${\tilde{p}_{i,l,t}}' = \tilde{p}_{i,l,t} + e_{i,l,t}$ is the response from the $i^{th}$ helper node.  If $\tilde{p}_{i,l,t}$ has been modified by the malicious node $i$, we have $e_{i,l,t} \in{\mathbb{F}_{q^2}}\backslash \{0\}$. Otherwise we have $e_{i,l,t} = 0$. To detect whether there are errors, we will calculate symbols from two sets of helper nodes then compare the results. (Without loss of generality, we assume $0 \leq i \leq d_l$.)

%\begin{adjustwidth}{1.0em}{0pt}
\textbf{Step 1.1:} Let ${\mathbf{{p}}_1}' = [{\tilde{p}_{0,l,t}}', {\tilde{p}_{1,l,t}}', \cdots, {\tilde{p}_{d_l-1,l,t}}']^T$, where the symbols are collected from node $0$ to node $d_l-1$, solve the equation $\mathbf{V}_{0,d_l-1,l} \mathbf{{x}}_1  = {\mathbf{{p}}_1}'$.

\textbf{Step 1.2:} Let ${\mathbf{{p}}_2}' = [{\tilde{p}_{1,l,t}}', {\tilde{p}_{2,l,t}}', \cdots, {\tilde{p}_{d_l,l,t}}']^T$, where the symbols are collected from node $1$ to node $d_l$, solve the equation $\mathbf{V}_{1,d_l,l} \mathbf{{x}}_2 = {\mathbf{{p}}_2}'$.

\textbf{Step 1.3:} Compare $\mathbf{{x}}_1$ with $\mathbf{{x}}_2$. If they are the same, compute $\mathbf{\tilde{y}}_{z,l,t}=\mu_{z,l}  S_{l,t} + \lambda_z  \mu_{z,l}  T_{l,t}$ as described in Algorithm~\ref{alg:reg_err_free}. Otherwise, errors are detected in the help symbols. Exit the algorithm and switch to recovery regeneration mode.
%\end{adjustwidth}

\item [\step{2}] No error has been detected for the calculating of the regeneration so far.
%If we can get to this step, it means that no error is detected during the calculating of the regeneration.
%
Let $\widetilde{Y}_z$ be a $q\times A$ matrix with the $l^{th}$ row defined as $[\mathbf{\tilde{y}}_{z,l,1}, \cdots, \mathbf{\tilde{y}}_{z,l,A/\alpha_l}], 0 \leq l \leq q-1$.
%
%For every $0 \leq l \leq q-1$, we stack the symbols $\mathbf{\tilde{y}}_{z,l,t}$, $1 \leq t \leq A/\alpha_l$, together to form a row $[\mathbf{\tilde{y}}_{z,l,1}, \cdots, \mathbf{\tilde{y}}_{z,l,A/\alpha_l}]$ as in Algorithm~\ref{alg:reg_err_free}.

%\item [\step{3}] By putting all the rows calculated in \textbf{Step 2} in one matrix, we can get the $q \times A$ matrix $\widetilde{Y}_z$.

\item [\step{3}] Calculate the regenerated symbols of the failed node $z$: $Y_{z'} = Y_z =  B_z  \widetilde{Y}_z$.
\end{namelist}
\end{algorithms}

\begin{lemma}
\label{lm:reg_with_err_normal}
%Suppose $e_0,\cdots,e_{d_l}$ are $d_l + 1$ random numbers from $\mathbb{F}_{q^2}$, $\hat{\mathbf{x}}_1 = \mathbf{V}_{0,d_l-1,l}^{-1}  \cdot [e_0,\cdots,e_{d_l-1} ]^T$ and $\hat{\mathbf{x}}_2 = \mathbf{V}_{1,d_l,l}^{-1}  \cdot [e_1,\cdots,e_{d_l} ]^T$. The probability that $\hat{\mathbf{x}}_1 = \hat{\mathbf{x}}_2 $ is at most $1/q^2$.
Suppose $e_0,\cdots,e_{d_l}$ are the $d_l + 1$ errors $e_{0,l,t},\cdots,e_{d_l,l,t}$ in Algorithm~\ref{alg:reg_with_err_normal}, $\hat{\mathbf{x}}_1 = \mathbf{V}_{0,d_l-1,l}^{-1}  \cdot [e_0,\cdots,e_{d_l-1} ]^T$ and $\hat{\mathbf{x}}_2 = \mathbf{V}_{1,d_l,l}^{-1}  \cdot [e_1,\cdots,e_{d_l} ]^T$. When the number of malicious nodes in the $d_l + 1$ helper nodes is less than $d_l + 1$, the probability that $\hat{\mathbf{x}}_1 = \hat{\mathbf{x}}_2 $ is at most $1/q^2$.
\end{lemma}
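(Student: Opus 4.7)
The plan is to reformulate the event $\hat{\mathbf{x}}_1 = \hat{\mathbf{x}}_2$ as a single linear constraint on the error vector $\mathbf{e} = [e_0,\ldots,e_{d_l}]^T$ and then count the error patterns that satisfy it. First, observe that the $(d_l+1)\times d_l$ matrix $\mathbf{V}_{0,d_l,l}$ contains $\mathbf{V}_{0,d_l-1,l}$ as its top $d_l$ rows and $\mathbf{V}_{1,d_l,l}$ as its bottom $d_l$ rows. Hence $\hat{\mathbf{x}}_1 = \hat{\mathbf{x}}_2 = \hat{\mathbf{x}}$ holds if and only if the augmented system $\mathbf{V}_{0,d_l,l}\,\hat{\mathbf{x}} = \mathbf{e}$ is consistent, i.e., $\mathbf{e}$ lies in the column space $W$ of $\mathbf{V}_{0,d_l,l}$. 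By criterion (ii) imposed on $\Delta$, any $d_l = 2\alpha_l$ rows of $[\Phi_l,\Delta\Phi_l]$ are linearly independent, so $\mathbf{V}_{0,d_l,l}$ has rank $d_l$ and $W$ is a hyperplane in $\mathbb{F}_{q^2}^{d_l+1}$ cut out by a single linear equation $\mathbf{c}^T\mathbf{e}=0$, where $\mathbf{c}$ spans the one-dimensional left null space of $\mathbf{V}_{0,d_l,l}$.

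Next I would show that every entry of $\mathbf{c}$ is nonzero. If some $c_j=0$, then $\sum_{i\neq j} c_i\,\nu_{i,l} = 0$ would be a nontrivial linear relation among the remaining $d_l$ rows of $[\Phi_l,\Delta\Phi_l]$, contradicting criterion (ii). So $\mathbf{c}$ has full support. Because strictly fewer than $d_l+1$ of the $d_l+1$ helpers are malicious, at least one index $j$ is honest and thus $e_j=0$. The collision event $\mathbf{c}^T\mathbf{e}=0$ then reduces to the nontrivial equation $\sum_{i\neq j} c_i e_i = 0$ in $d_l$ unknowns over $\mathbb{F}_{q^2}$, and exactly $q^{2(d_l-1)}$ of the $q^{2d_l}$ possible injection patterns satisfy it. This gives the claimed bound of $1/q^2$.

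The main point to be careful about is the full-support property of $\mathbf{c}$; without it, the hyperplane $\mathbf{c}^T\mathbf{e}=0$ could be vacuously satisfied at whichever coordinate happens to be honest, and the counting argument would collapse. Luckily, this is precisely the role of criterion (ii) on $\Delta$, so once that step is in place the rest is a routine counting ratio. The one wording choice I would make is to interpret the probability as the fraction of admissible error patterns (equivalently, the adversary's best achievable success over any strategy that is independent of the secret encoding $\mathbf{c}$), matching the assumption earlier in the paper that the encoding matrix is kept private and each node only knows its own encoding vector.
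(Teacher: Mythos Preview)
Your argument is correct and lands on the same single linear constraint as the paper, but the packaging is cleaner. The paper multiplies both $\hat{\mathbf x}_1$ and $\hat{\mathbf x}_2$ by $\mathbf V_{0,d_l-1,l}$, then explicitly expands $\nu_{0,l}=\sum_{r=1}^{d_l}\zeta_r\nu_{r,l}$ in the basis $\nu_{1,l},\ldots,\nu_{d_l,l}$ to evaluate $\mathbf V_{0,d_l-1,l}\mathbf V_{1,d_l,l}^{-1}$ row by row, obtaining the condition $e_0=\sum_{r=1}^{d_l}\zeta_r e_r$ with every $\zeta_r\neq 0$. Your left-null-space formulation short-circuits that computation: your vector $\mathbf c$ is exactly $(1,-\zeta_1,\ldots,-\zeta_{d_l})$ up to scaling, and your full-support argument via criterion~(ii) is the same step the paper uses to conclude that each $\zeta_r$ is nonzero. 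What you gain is brevity and a viewpoint that generalizes (consistency of an overdetermined system equals membership in a hyperplane); what the paper's explicit calculation buys is that the constraint is written down concretely, which it reuses verbatim in later proofs. Both versions are equally informal about the probability model: the paper argues that fewer than $d_l+1$ colluding nodes cannot recover the $\zeta_r$, while you count error patterns on the hyperplane; under the paper's standing assumption that the encoding matrix is secret, these interpretations coincide at the $1/q^2$ bound.
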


\begin{proof}
Since $\mathbf{V}_{0,d_l-1,l}$ and $\mathbf{V}_{1,d_l,l}$ are full rank matrices, we can get their corresponding inverse matrices. $\hat{\mathbf{x}}_1 = \hat{\mathbf{x}}_2$ is equivalent to $\mathbf{V}_{0,d_l-1,l} \cdot \hat{\mathbf{x}}_1 = \mathbf{V}_{0,d_l-1,l} \cdot \hat{\mathbf{x}}_2$.

First, we have
\begin{equation}
\mathbf{V}_{0,d_l-1,l} \cdot \hat{\mathbf{x}}_1 = [e_0, e_1, \cdots,e_{d_l-1} ]^T.
\end{equation}
Suppose $\mathbf{V}_{1,d_l,l}^{-1} =[\eta_0, \eta_1, \cdots, \eta_{d_l-1}]$, then we have:
\begin{equation}
\label{eq:psi_repair2_inv}
\nu_{r,l} \cdot  \eta_s = \left\{\begin{matrix}
 1,& r=s + 1\\
 0,& r \neq s +1
\end{matrix}\right., \, 1 \leq r\leq d_l,0 \leq s\leq d_l-1.
\end{equation}
\begin{eqnarray}
\mathbf{V}_{0,d_l-1,l}\cdot \hat{\mathbf{x}}_2
&=& \mathbf{V}_{0,d_l-1,l}\cdot \mathbf{V}_{1,d_l,l}^{-1} \cdot [e_1, e_2, \cdots,e_{d_l} ]^T  \nonumber\\
&=& \mathbf{V}_{0,d_l-1,l} \cdot [\eta_0, \eta_1, \cdots, \eta_{d_l-1}] \cdot [e_1, e_2, \cdots,e_{d_l} ]^T\\
&=&[x_{2,0}, e_1,\cdots,e_{d_l-1}]^T.\nonumber
\end{eqnarray}

To calculate $x_{2,0}$, we first derive the expression of $\nu_{0,l}$. Because $\nu_{1,l}, \nu_{2,l}, \cdots ,\nu_{d_l,l} $ are linearly independent, they can be viewed as a set of bases of the $d_l$ dimensional linear space. So we have
\begin{equation}
\label{eqn:mali_solve_coe}
\nu_{0,l} = \sum_{r=1}^{r=d_l} \zeta_r \cdot \nu_{r,l},
\end{equation}
where $\zeta_r\ne 0, \, r=1,\cdots,d_l$, because any $d_l$ vectors out of $\nu_{0,l}, \nu_{1,l}, \cdots ,\nu_{d_l,l} $ are linearly independent. Then
\begin{eqnarray}
x_{2,0} &=&
\left(\sum_{r=1}^{r=d_l} \zeta_r \cdot \nu_{r,l}\right)[\eta_0, \eta_1, \cdots, \eta_{d_l-1}] [e_1, e_2, \cdots,e_{d_l} ]^T\nonumber \\
&=& \sum_{r=1}^{r=d_l} \zeta_r \cdot e_r.
\end{eqnarray}

If
\begin{equation}
\label{eq:msr_reg_detect}
e_0 =  \sum_{r=1}^{r=d_l} \zeta_r \cdot e_r,
\end{equation}
then $\mathbf{V}_{0,d_l-1,l} \cdot \hat{\mathbf{x}}_1=\mathbf{V}_{0,d_l-1,l} \cdot \hat{\mathbf{x}}_2$ and $\hat{\mathbf{x}}_1=\hat{\mathbf{x}}_2$.

The number of errors corresponds to the number of malicious nodes. 
When only one element of $e_0, e_1, \cdots,e_{d_l}$ is nonzero, since $\zeta_1, \cdots,\zeta_{d_l}$ are all nonzero, equation~(\ref{eq:msr_reg_detect}) will never hold. In this case, the probability is $0$. When there are more than one nonzero elements, it means there are more than one malicious nodes. If the number of malicious nodes is less than $d_l + 1$, they will not be able to collude to solve the coefficients $\zeta_r$ in~(\ref{eqn:mali_solve_coe}). The probability that equation~(\ref{eq:msr_reg_detect}) holds will be $1/q^2$.
\end{proof}

\begin{theorem}[H-MSR Regeneration--Detection Mode]
\label{th:reg_with_err_normal}
%For bogus symbols ${\tilde{p}_{i,l,t}}',\, l=0,\cdots,q-1, t=1,\cdots,A/\alpha_l$, received from the malicious nodes, the probability for the bogus symbols to be detected using Algorithm~\ref{alg:reg_with_err_normal} is at least $1-1/q^2$.
When the number of malicious nodes in the $d_l + 1$ helper nodes of Algorithm~\ref{alg:reg_with_err_normal} is less than $d_l + 1$, the probability for the bogus symbols sent from the malicious nodes to be detected is at least $1-1/q^2$.
\end{theorem}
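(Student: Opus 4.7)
The plan is to reduce the theorem directly to Lemma~\ref{lm:reg_with_err_normal} by decomposing each candidate solution into a common ``correct part'' plus an ``error part'', and then invoking linearity of the two Vandermonde systems.

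First, I would make explicit the error model underlying Steps~1.1--1.2 of Algorithm~\ref{alg:reg_with_err_normal}: the right-hand sides are $\mathbf{p}_1' = \mathbf{p}_1 + [e_0,\ldots,e_{d_l-1}]^T$ and $\mathbf{p}_2' = \mathbf{p}_2 + [e_1,\ldots,e_{d_l}]^T$, where $\mathbf{p}_1,\mathbf{p}_2$ are the noise-free help-symbol vectors and $e_i := e_{i,l,t}$. From Algorithm~\ref{alg:reg_err_free} applied to either set of $d_l$ helper nodes, the clean solutions coincide, namely $\mathbf{V}_{0,d_l-1,l}^{-1}\mathbf{p}_1 = \mathbf{V}_{1,d_l,l}^{-1}\mathbf{p}_2 = \bigl[\,S_{l,t}\,\mu_{z,l}^T\,;\,T_{l,t}\,\mu_{z,l}^T\,\bigr]$. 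By linearity of matrix inversion, the computed solutions therefore decompose as $\mathbf{x}_1 = \mathbf{x}_1^{\mathrm{true}} + \hat{\mathbf{x}}_1$ and $\mathbf{x}_2 = \mathbf{x}_2^{\mathrm{true}} + \hat{\mathbf{x}}_2$, where the true parts are equal and $\hat{\mathbf{x}}_1,\hat{\mathbf{x}}_2$ are precisely the quantities appearing in Lemma~\ref{lm:reg_with_err_normal}. Hence $\mathbf{x}_1 = \mathbf{x}_2$ if and only if $\hat{\mathbf{x}}_1 = \hat{\mathbf{x}}_2$.

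Second, whenever any malicious node sends a bogus symbol, at least one of $e_0,\ldots,e_{d_l}$ is nonzero, and Step~1.3 declares an error precisely when $\mathbf{x}_1 \neq \mathbf{x}_2$. By the equivalence above, the undetected event is exactly $\{\hat{\mathbf{x}}_1 = \hat{\mathbf{x}}_2\}$. Under the hypothesis that fewer than $d_l+1$ of the $d_l+1$ helper nodes are malicious, Lemma~\ref{lm:reg_with_err_normal} bounds the probability of this event by $1/q^2$; taking the complement gives detection probability at least $1 - 1/q^2$, which is the claim.

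The main (and only mild) obstacle is the bookkeeping in the first paragraph: one must be careful to point out that the two overlapping $d_l$-subsets produce identical true solutions, so that the test $\mathbf{x}_1 \stackrel{?}{=} \mathbf{x}_2$ is governed solely by the error pattern and matches the hypothesis of the lemma verbatim. Once this reduction is in place, the probability estimate is inherited from Lemma~\ref{lm:reg_with_err_normal} with no further computation.
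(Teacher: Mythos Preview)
Your proposal is correct and follows essentially the same approach as the paper: decompose $\mathbf{x}_1$ and $\mathbf{x}_2$ by linearity into a common true solution plus the error terms $\hat{\mathbf{x}}_1,\hat{\mathbf{x}}_2$, then invoke Lemma~\ref{lm:reg_with_err_normal} to bound $\Pr[\hat{\mathbf{x}}_1=\hat{\mathbf{x}}_2]\le 1/q^2$. Your write-up is in fact slightly more careful than the paper's in explicitly justifying why the two clean solutions coincide.
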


\begin{proof}
Since $\mathbf{V}_{0,d_l-1,l}$ and $\mathbf{V}_{1,d_l,l}$ are full rank matrices, $\mathbf{x}_1$ can be calculated by (For convenience, use $e_i$ to represent $e_{i,l,t}$):
\begin{eqnarray}
\mathbf{{x}}_1  &=& \mathbf{V}_{0,d_l-1,l}^{-1}  \cdot
\begin{bmatrix}
\tilde{p}_{0,l,t} + e_0,
%\tilde{p}_{1,l,t} + e_1,
\cdots,
\tilde{p}_{d_l-1,l,t} + e_{d_l-1}
\end{bmatrix}^T
\nonumber \\
&=& \mathbf{{x}} +
\mathbf{V}_{0,d_l-1,l}^{-1}  \cdot [e_0, e_1, \cdots,e_{d_l-1} ]^T \nonumber \\
& =& \mathbf{x} +  \hat{\mathbf{x}}_1.
\end{eqnarray}

$\mathbf{x}_2$ can be calculated the same way:
\begin{eqnarray}
\mathbf{x}_2  &=& \mathbf{x} +
\mathbf{V}_{1,d_l,l}^{-1}  \cdot
[e_1, e_2, \cdots,e_{d_l} ]^T
  =\mathbf{x} +  \hat{\mathbf{x}}_2.
\end{eqnarray}
If $\hat{\mathbf{x}}_1 = \hat{\mathbf{x}}_2$, Algorithm~\ref{alg:reg_with_err_normal} will fail to detect the errors. So we will focus on the relationship between $\hat{\mathbf{x}}_1$ and $\hat{\mathbf{x}}_2$. 
%Since the malicious nodes create bogus symbol independently, we can treat $e_0, e_1, \cdots,e_{d_l}$ as independent random numbers from $\mathbb{F}_{q^2}$.
According to Lemma~\ref{lm:reg_with_err_normal}, when the number of malicious nodes in the $d_l + 1$ helper nodes is less than $d_l + 1$, the probability that $\hat{\mathbf{x}}_1 = \hat{\mathbf{x}}_2$ is at most $1/q^2$. So the probability that $\mathbf{x}_1 \neq \mathbf{x}_2$, equivalently the detection probability, is at least $1- 1/q^2$.
\end{proof}

%As an example, if there are $100$ calculated symbols $\tilde{p}_{i,l,t} \in GF(4^2)$ in a malicious node $i$ and the malicious node will add errors to $5\%$ of all the symbols.  So there will be $5$ bogus symbols in total. According to Theorem~\ref{th:reg_with_err_normal}, the probability that the replacement node $z'$ can detect the bogus symbols is $1- (1/16)^5 = 0.999999$.

\subsubsection{Recovery Mode}

Once the replacement node $z'$ detects errors using Algorithm~\ref{alg:reg_with_err_normal}, it will send integer $j = q-1$ to all the other $q^2 - 1$ nodes in the network requesting help symbols. Helper node $i$ will send help symbols similar to Section~\ref{Sec:H-MSR-Regeneration-error-free}. $z'$ can regenerate symbols using Algorithm~\ref{alg:reg_with_err_recovery}.

\begin{algorithms}[Recovery mode] $z'$ regenerates symbols of the failed node $z$ in hostile network
\label{alg:reg_with_err_recovery}

\normalfont
\begin{namelist}{\textbf{Step n:}}
\item [\step{1}] For every $q-1 \geq l \geq 0$ in descending order and $1 \leq t \leq A/\alpha_{l}$ in ascending order, we can regenerate the symbols when the errors in the received help symbols ${\tilde{p}_{i,l,t}}'$ from $q^2 - 1$ helper nodes can be corrected. Without loss of generality, we assume $0 \leq i \leq q^2-2$.

%\begin{adjustwidth}{1.0em}{0pt}
\textbf{Step 1.1:} Let $\mathbf{{p}}'=[{\tilde{p}_{0,l,t}}', {\tilde{p}_{1,l,t}}', \cdots,  {\tilde{p}_{q^2-2,l,t}}']^T$. Since $\mathbf{V}_{0,q^2-2,l} \cdot \mathbf{{x}}  = \mathbf{p}'$,  $\mathbf{p}'$ can be viewed as an MDS code with parameters $(q^2-1,d_l,q^2-d_l)$.

\textbf{Step 1.2:} Substitute ${\tilde{p}_{i,l,t}}'$ in $\mathbf{{p}}'$ with the symbol $\otimes$ representing an erasure if node $i$ has been detected to be corrupted in the previous loops (previous values of $l,t$).

\textbf{Step 1.3:} If the number of erasures in $\mathbf{{p}}'$ is larger than $\min \{q^2- d_l - 1, \lfloor (q^2- d_{q-1} - 1)/2 \rfloor\}$, then the number of errors have exceeded the error correction capability.  So here we will flag the decoding failure and exit the algorithm.

\textbf{Step 1.4:} Since the number of errors is within the error correction capability of the MDS code, decode $\mathbf{{p}}'$ to $\mathbf{p}_{cw}'$ and solve $\mathbf{{x}}$.
%If we can get to this step, it means that the number of errors is within the error correction capability of the MDS code. Decode $\mathbf{{p}}'$ to $\mathbf{p}_{cw}'$ and solve $\mathbf{{x}}$.

\textbf{Step 1.5:} If the $i^{th}$ position symbols of $\mathbf{{p}}'_{cw}$ and $\mathbf{{p}}'$ are different, mark node $i$ as corrupted.

\textbf{Step 1.6:} Compute $\mathbf{\tilde{y}}_{z,l,t}=\mu_{z,l} \cdot S_{l,t} + \lambda_z \cdot \mu_{z,l} \cdot T_{l,t}$ as described in Algorithm \ref{alg:reg_err_free}.
%\end{adjustwidth}

%\item [\step{2}] For every $0 \leq l \leq q-1$, we stack the symbols $\mathbf{\tilde{y}}_{z,l,t}$, $1 \leq t \leq A/\alpha_l$, together to form a row $[\mathbf{\tilde{y}}_{z,l,1}, \cdots, \mathbf{\tilde{y}}_{z,l,A/\alpha_l}]$ as in Algorithm~\ref{alg:reg_err_free}.
%
%\item [\step{3}] By putting all the rows calculated in \textbf{Step 2} in one matrix, we can get the $q \times A$ matrix $\widetilde{Y}_z$.
%
%\item [\step{4}] Here we can calculate the regenerated symbols of the failed node z: $Y_{z'} = Y_z =  B_z \cdot \widetilde{Y}_z$.

\item [\step{2}] Let $\widetilde{Y}_z$ be a $q\times A$ matrix with the $l^{th}$ row defined as $[\mathbf{\tilde{y}}_{z,l,1}, \cdots, \mathbf{\tilde{y}}_{z,l,A/\alpha_l}], 0 \leq l \leq q-1$.
%
%For every $0 \leq l \leq q-1$, we stack the symbols $\mathbf{\tilde{y}}_{z,l,t}$, $1 \leq t \leq A/\alpha_l$, together to form a row $[\mathbf{\tilde{y}}_{z,l,1}, \cdots, \mathbf{\tilde{y}}_{z,l,A/\alpha_l}]$ as in Algorithm~\ref{alg:reg_err_free}.

%\item [\step{3}] By putting all the rows calculated in \textbf{Step 2} in one matrix, we can get the $q \times A$ matrix $\widetilde{Y}_z$.

\item [\step{3}] Calculate the regenerated symbols of the failed node $z$: $Y_{z'} = Y_z =  B_z  \widetilde{Y}_z$.
\end{namelist}
\end{algorithms}

For data regeneration described in Algorithm~\ref{alg:reg_with_err_recovery}, we have the following theorem:
\begin{theorem}[H-MSR Regeneration--Recovery Mode]\label{thm:reg_with_err_recovery}
For data regeneration, the number of errors that the H-MSR code can correct is 
\begin{equation}\label{eq:num_of_errs_msr_recovery}
\tau_{H-MSR}  =  q \cdot \lfloor (q^2 - {d_{q-1}} -1)/2  \rfloor.
\end{equation}
\end{theorem}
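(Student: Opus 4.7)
The plan is to analyze Algorithm~\ref{alg:reg_with_err_recovery} layer by layer (indexed by $l$) and exploit the MDS structure of the decoded codewords at each level. For each fixed $(l,t)$, Step~1.1 identifies $\mathbf{p}'$ as a received word of a $(q^2-1, d_l, q^2-d_l)$ MDS code over $\mathbb{F}_{q^2}$ obtained from the Vandermonde-like matrix $\mathbf{V}_{0,q^2-2,l}$. The whole argument then reduces to accounting for how many errors each layer can correct and how the information about corrupted nodes propagates between layers.

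First I would handle the outermost iteration $l = q-1$, which the algorithm processes before any other. No erasures are known at this stage, so the decoder in Step~1.4 performs pure error correction. Because $\alpha_0 > \alpha_1 > \cdots > \alpha_{q-1}$ is strictly decreasing, $d_{q-1} = 2\alpha_{q-1}$ is the smallest $d_l$, hence $q^2 - d_{q-1}$ is the largest minimum distance across layers, and standard MDS decoding corrects up to $\lfloor (q^2 - d_{q-1} - 1)/2\rfloor$ symbol errors per codeword. Since each malicious helper node contributes at most one erroneous coordinate to a codeword, this also bounds the number of corrupted helpers reliably flagged by Step~1.5 after decoding.

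Next I would show that for every subsequent iteration $l < q-1$, Step~1.2 converts the already-flagged coordinates into erasures, so $\mathbf{p}'$ at those layers carries only erasures (no residual errors). The MDS erasure-correction bound at layer $l$ is $q^2 - d_l -1$, and the $\min$ guard of Step~1.3 enforces that the number of erasures stays simultaneously within this bound and within $\lfloor (q^2 - d_{q-1} - 1)/2\rfloor$. Whenever the aggregate adversarial presence respects both, the decoding in Step~1.4 succeeds uniquely. This inductive propagation from the $l = q-1$ pass down through $l = 0$ establishes that the entire matrix $\widetilde{Y}_z$, and therefore $Y_z = B_z \widetilde{Y}_z$ in Step~3, is reconstructed exactly. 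Finally I would count: at most $\lfloor (q^2 - d_{q-1} - 1)/2\rfloor$ malicious helper nodes are tolerated, each injects one erroneous symbol into the codeword of every one of the $q$ layers $l \in \{0,1,\ldots,q-1\}$, and multiplying yields the stated total $\tau_{H-MSR} = q \cdot \lfloor (q^2 - d_{q-1} -1)/2 \rfloor$.

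The principal obstacle will be justifying the inductive carry-over of flagged nodes from the first iteration to the later ones, and in particular ruling out that Step~1.5 ever mis-flags an honest helper. Both points follow from uniqueness of MDS nearest-codeword decoding within the correction radius, but writing this down cleanly---and verifying that the bookkeeping is compatible with the $\min$ guard of Step~1.3 across all $q$ layers---is the delicate part; once it is in place, the final tally is just a product of the per-layer error count and the number of layers.
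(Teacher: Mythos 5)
Your proposal is correct and takes essentially the same route as the paper's proof: view the recovery as $q$ MDS codes $(q^2-1,\,d_l,\,q^2-d_l)$, decode the largest-distance layer $l=q-1$ first to correct $\lfloor (q^2-d_{q-1}-1)/2\rfloor$ errors and flag the corrupted helpers, carry those flags forward as erasures for the remaining layers, and multiply the per-layer count by $q$. The inductive carry-over you flag as the delicate point is handled just as loosely in the paper (via the condition $q^2-d_0-1 \geq \lfloor (q^2-d_{q-1}-1)/2\rfloor$ ensuring the weaker layers can absorb the flagged positions as erasures), so your argument matches the paper's own level of rigor.
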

\begin{proof}
H-MSR code can be viewed as $q$ MDS codes with parameters ($q^2-1, d_l,q^2-d_l$), $l=0,\cdots, q-1$. Since $\alpha_l \leq\kappa(l)$ and $\kappa(l)$ is strictly decreasing, we can choose the sequence $\alpha_l$ to be strictly decreasing. So $d_l$ is also strictly decreasing. For the $q$ MDS codes, the minimum distance of the ($q^2-1,d_{q-1},q^2-d_{q-1}$) code is the largest. In Algorithm~\ref{alg:reg_with_err_recovery}, this code is decoded first and it can correct up to $\tau_{q-1} = \left \lfloor (q^2-d_{q-1}-1)/2 \right \rfloor$ errors, where $\left \lfloor x \right \rfloor$ is the floor function of $x$. Then the code $(q^2-1,d_l,q^2-d_l),\, l=q-2, \cdots, 0$, will be decoded sequentially.
%----
%For simplicity, we only consider the error correction capability of the code itself instead of the error correction capability of all the $A/\alpha_{q-i}$ codewords.
The ($q^2-1, d_l,q^2-d_l$) code can correct at most $\tau_l = \tau_{q-1}$ errors when $q^2- d_0 - 1 \geq \tau_{q-1}$.
%---------
Thus, the total numbers errors that the H-MSR code can correct is $\tau_{H-MSR} = q \cdot {\tau_{q-1}} = q \cdot \lfloor (q^2 - {d_{q-1}} -1)/2  \rfloor$.
\end{proof}

\section{Reconstruction of the H-MSR code}
\label{Sec:H-MSR-Reconstruct}
In this section, we will first discuss reconstruction of the H-MSR code in error-free network. Then we will discuss reconstruction of the H-MSR code when there are corrupted nodes in the network.

\subsection{Reconstruction in Error-free Network}
\label{Sec:H-MSR-Reconstruct-error-free}

The main idea of the reconstruction algorithms is to reconstruct $f_l(\phi^{s_i}) +  \lambda_ie_l(\phi^{s_i})$, $0 \leq l \leq q-1$, by downloading help symbols from $k_l = \alpha_l + 1$ nodes, where $k_l$ is used to represent the reconstruction parameter $k$ for $f_l(\phi^{s_i}) +  \lambda_ie_l(\phi^{s_i})$ in the H-MSR code reconstruction.
%Inspired by the work of~\cite{Rashmi},
We devise Algorithm~\ref{alg:recon_err_free} in the network for the data collector DC to reconstruct the original file. For convenience, we suppose $\alpha_q=0$.

First, DC will send requests to the storage nodes for reconstruction: DC sends integer $j$ to $k_j - k_{j+1}$ helper nodes, to which DC has not sent requests before, for every $j$ from $q-1$ to $0$ in descending order.
%$j$ indicates that DC is requesting symbols related to $[f_0(\phi^{s_i}) +  \lambda_ie_0(\phi^{s_i}), \cdots , f_j(\phi^{s_i})+  \lambda_ie_j(\phi^{s_i})]^T$.

%%old algorithm DC sends requests for \textcolor{red}{reconstruction}
%\begin{algorithm}
%\caption{DC sends requests for \textcolor{red}{reconstruction}}
%\label{alg:recon_req_err_free:old}
%\begin{algorithmic}[1]
%\FOR{$j = q-1 \to 0$}
%\STATE
%$z'$ sends the integer $j$ to $\textcolor{red}{k_j -k_{j+1}}$  helper nodes.
%
%\ENDFOR
%\end{algorithmic}
%\end{algorithm}

Upon receiving the request integer $j$, node $i$ will calculate and send symbols as follows: first node $i$ will calculate $\tilde{Y_i} =  B_i^{-1} Y_i$ to remove the coefficient matrix $B_i$ from the codeword matrix. Since
%$j$ indicates that DC is requesting symbols calculated by $[f_0(\phi^{s_i}) +  \lambda_ie_0(\phi^{s_i}) , \cdots , f_j(\phi^{s_i})+  \lambda_ie_j(\phi^{s_i})]^T$ and
the $(l+1)^{th}$ row of $\tilde{Y_i}$ corresponds to the symbols related to $f_l(\phi^{s_i}) +  \lambda_ie_l(\phi^{s_i})$, for $0 \leq l \leq j$, node $i$ will send out the $(l+1)^{th}$ row of $\tilde{Y_i}$: $\mathbf{\tilde{y}}_{i,l}$. Here $j$ indicates that DC is requesting symbols of $\mathbf{\tilde{y}}_{i,l}$, $0 \leq l \leq j$, calculated by $[f_0(\phi^{s_i}) +  \lambda_ie_0(\phi^{s_i}) , \cdots , f_j(\phi^{s_i})+  \lambda_ie_j(\phi^{s_i})]^T$.

Since $k_{l_1} > k_{l_2}$ for $l_1 < l_2$, for efficiency consideration, only $k_{q-1}$ helper nodes need to send out symbols of $\mathbf{\tilde{y}}_{i,l}$, $0 \leq l \leq q-1$, calculated by $[f_0(\phi^{s_i}) +  \lambda_ie_0(\phi^{s_i}) , f_1(\phi^{s_i})+  \lambda_ie_1(\phi^{s_i}), \cdots , f_{q-1}(\phi^{s_i})+  \lambda_ie_{q-1}(\phi^{s_i})]^T$. Then $k_j - k_{j+1}$ nodes only need to send out symbols of $\mathbf{\tilde{y}}_{i,l}$, $0 \leq l \leq j$, calculated by $[f_0(\phi^{s_i}) +  \lambda_ie_0(\phi^{s_i}) , f_1(\phi^{s_i})+  \lambda_ie_1(\phi^{s_i}), \cdots , f_j(\phi^{s_i})+  \lambda_ie_j(\phi^{s_i})]^T$ for $0 \leq j \leq q-2$. In this way, the total number of helper nodes that send out symbols of $\mathbf{\tilde{y}}_{i,l}$ calculated by $f_l(\phi^{s_i}) +  \lambda_ie_l(\phi^{s_i})$ is $k_{q-1} + \sum_{j=l}^{q-2}(k_j - k_{j+1}) = k_l$.

%%old algorithm Node $i$ sends symbols for reconstruction
%\begin{algorithm}
%\caption{Node $i$ sends symbols for reconstruction}
%\label{alg:recon_ack_err_free:old}
%\begin{algorithmic}[1]
%\REQUIRE Integer $j$ sent from DC.
%\ENSURE Corresponding symbols for reconstruction.
%\STATE Calculate $\tilde{Y_i} =  B_i^{-1} \cdot {Y}_i$.
%\FOR{$l = 0 \to j$}
%\STATE Send out the $(l+1)^{th}$ row of $\tilde{Y_i}$: $\mathbf{\tilde{y}}_{i,l}$ .
%\ENDFOR
%\end{algorithmic}
%\end{algorithm}

When DC receives all the requested symbols, it can reconstruct the original file using the following algorithm:
\begin{algorithms}DC reconstructs the original file
\label{alg:recon_err_free}

\normalfont
\begin{namelist}{\textbf{Step n:}}
\item [\step 1] For every  $0 \leq l \leq q-1$ , divide the response symbol vector $\mathbf{\tilde{y}}_{i,l}$ from the $i^{th}$ node into $A/\alpha_{l}$ equal row vectors: $[\mathbf{\tilde{y}}_{i,l,1},\mathbf{\tilde{y}}_{i,l,2},\cdots,\mathbf{\tilde{y}}_{i,l,A/\alpha_{l}}]$, $0 \leq i \leq k_l-1$.

\item [\step 2] For every $0 \leq l \leq q-1$ and $1 \leq t \leq A/\alpha_{l}$, DC reconstructs the matrices related to the original file:

%\begin{adjustwidth}{1.0em}{0pt}
\textbf{Step 2.1:} Let $R=[\mathbf{\tilde{y}}_{0,l,t}^T, \mathbf{\tilde{y}}_{1,l,t}^T, \cdots, \mathbf{\tilde{y}}_{k_l-1,l,t}^T]^T$, we have the equation: $\mathbf{V}_{0,k_l-1,l} \cdot \begin{bmatrix}
S_{l,t}\\
T_{l,t}
\end{bmatrix}  =  R$
according to the encoding algorithm.

\textbf{Step 2.2:} DC reconstructs $S_{l,t},T_{l,t}$ using techniques similar to \cite{Rashmi}.
%\end{adjustwidth}

\item [\step 3] DC reconstructs the original file from all the matrices $S_{l,t},T_{l,t}$, $0 \leq l \leq q-1$ and $1 \leq t \leq A/\alpha_{l}$.
\end{namelist}
\end{algorithms}

%%old algorithm DC reconstructs the original file
%\begin{algorithm}
%\caption{DC reconstructs the original file}
%\label{alg:recon_err_free:old}
%\begin{algorithmic}[1]
%\REQUIRE All the symbols requested in Algorithm~\ref{alg:recon_req_err_free}.
%\ENSURE The original file.
%\FOR{$l = 0 \to q-1$}
%\STATE Divide $\mathbf{\tilde{y}}_{i,l}$ into $A/\alpha_{l}$ equal row vectors: $[\mathbf{\tilde{y}}_{i,l,1},\mathbf{\tilde{y}}_{i,l,2},\cdots,\mathbf{\tilde{y}}_{i,l,A/\alpha_{l}}]$.
%\FOR{$t = 1 \to A/\alpha_{l}$}
%
%\STATE
%Suppose the symbol vector $\mathbf{\tilde{y}}_{i,l,t}$ is the response from the $i^{th}$ helper node, without loss of generality, we assume $0 \leq i \leq k_l-1$.
%Let $R=[\mathbf{\tilde{y}}_{0,l,t}^T, \mathbf{\tilde{y}}_{1,l,t}^T, \cdots, \mathbf{\tilde{y}}_{k_l-1,l,t}^T]^T$.
%
%\STATE
%$\mathbf{V}_{0,k_l-1,l} \cdot \begin{bmatrix}
%S_{l,t}\\
%T_{l,t}
%\end{bmatrix}  =  R$.
%
%\STATE DC reconstructs $S_{l,t},T_{l,t}$ using techniques similar to \cite{Rashmi}.
%
%\ENDFOR
%
%\ENDFOR
%\STATE DC reconstructs the original file from $S,T$.
%\end{algorithmic}
%\end{algorithm}

\subsection{Reconstruction in Hostile Network}

Similar to the regeneration algorithms, the reconstruction algorithms in error-free network do not work in hostile network. Even if the data collecter can calculate the symbol matrices $S,T$ using Algorithm~\ref{alg:recon_err_free}, it cannot verify whether the result is correct or not. There are two modes for the original file to be reconstructed in hostile network. One mode is the detection mode, in which no error has been found in the symbols received from the storage nodes. Once errors are detected in the detection mode, the recovery mode will be used to correct the errors and locate the malicious nodes.

\subsubsection{Detection Mode}

In the detection mode, DC will send requests in the way similar to that for the error-free network in Section~\ref{Sec:H-MSR-Reconstruct-error-free}. The only difference is that when $j=q-1$, DC will send requests to $k_{q-1} - k_q + 1$ nodes instead of $k_{q-1} - k_q$ nodes. Storage nodes will still use the way similar to that for the error-free network in Section~\ref{Sec:H-MSR-Reconstruct-error-free} to send symbols. The reconstruction algorithm is described in Algorithm~\ref{alg:recon_with_err_normal} with the detection probability described in Theorem~\ref{th:recon_with_err_normal}.

\begin{algorithms}[Detection mode] DC reconstructs the original file in hostile network
\label{alg:recon_with_err_normal}

\normalfont
\begin{namelist}{\textbf{Step n:}}
\item [\step 1] For every  $0 \leq l \leq q-1$ , we can divide the symbol vector ${\mathbf{\tilde{y}}_{i,l}}'$ into $A/\alpha_{l}$ equal row vectors: $[{\mathbf{\tilde{y}}_{i,l,1}}',{\mathbf{\tilde{y}}_{i,l,2}}',\cdots,{\mathbf{\tilde{y}}_{i,l,A/\alpha_{l}}}']$. ${\mathbf{\tilde{y}}_{i,l}}'= \mathbf{\tilde{y}}_{i,l} + \mathbf{ e}_{i,l}$ is the response from the $i^{th}$ storage node. If $\mathbf{\tilde{y}}_{i,l}$ has been modified by the malicious node $i$, we have $\mathbf{e}_{i,l} \in (\mathbb{F}_{q^2})^{A} \backslash \{\mathbf{0}\}$. To detect whether there are errors, we will reconstruct the original file from two sets of storage nodes then compare the results. (Without loss of generality, we assume $0 \leq i \leq k_l$.)

\item [\step 2]For every $0 \leq l \leq q-1$ and $1 \leq t \leq A/\alpha_{l}$, DC can reconstruct the matrices related to the original file:

%\begin{adjustwidth}{1.0em}{0pt}
\textbf{Step 2.1:} Let
 ${R}'=[{\mathbf{\tilde{y}}_{0,l,t}}'^T, {\mathbf{\tilde{y}}_{1,l,t}}'^T, \cdots, {\mathbf{\tilde{y}}_{k_l,l,t}}'^T]^T$.

\textbf{Step 2.2:} Let
${R_1}' = [{\mathbf{\tilde{y}}_{0,l,t}}'^T, {\mathbf{\tilde{y}}_{1,l,t}}'^T, \cdots, {\mathbf{\tilde{y}}_{\alpha_l,l,t}}'^T]^T$, which are the symbols collected from node $0$ to node $k_l-1=\alpha_l$, then we have $\mathbf{V}_{0,\alpha_l,l}\cdot \begin{bmatrix}
S_1\\
T_1
\end{bmatrix} ={R_1}'$. Solve $S_1,T_1$ using the method same to algorithm~\ref{alg:recon_err_free}.

\textbf{Step 2.3:} Let
${R_2}' = [{\mathbf{\tilde{y}}_{0,l,t}}'^T, \cdots, {\mathbf{\tilde{y}}_{\alpha_l - 1,l,t}}'^T, {\mathbf{\tilde{y}}_{\alpha_l + 1,l,t}}'^T]^T$, which are the symbols collected from node $0$ to node $k_l=\alpha_l + 1$ except node $\alpha_l$,
and
$\Psi_{DC2}=\begin{bmatrix}
\nu_{0,l}\\
\vdots \\
\nu_{\alpha_l - 1,l}\\
\nu_{\alpha_l + 1,l}
\end{bmatrix}$, then we have $\Psi_{DC2} \cdot \begin{bmatrix}
S_2\\
T_2
\end{bmatrix}= {R_2}'$. Solve $S_2,T_2$ using the method same to algorithm~\ref{alg:recon_err_free}.

\textbf{Step 2.4:} Compare $[S_1,T_1]$ with $[S_2,T_2]$. If they are the same, let $[S_{l,t},T_{l,t}] = [S_1,T_1]$. Otherwise, errors are detected in the received symbols. Exit the algorithm and switch to recovery reconstruction mode.
%\end{adjustwidth}

\item [\step 3] No error has been detected for the calculating of the reconstruction so far.
%If we can get to this step, it means that no error is detected during the reconstruction. 
So DC can reconstruct the original file from all the matrices $S_{l,t},T_{l,t}$, $0 \leq l \leq q-1$ and $1 \leq t \leq A/\alpha_{l}$.
\end{namelist}
\end{algorithms}

\begin{theorem}[H-MSR Reconstruction--Detection Mode]
\label{th:recon_with_err_normal}
%For bogus symbols ${\mathbf{\tilde{y}}_{i,l,t}}'$ received from the malicious nodes, $l=0,\cdots,q-1, t=1,\cdots,A/\alpha_l$, the probability for the bogus symbols to be detected using Algorithm~\ref{alg:recon_with_err_normal} is at least $1-(1/q^2)^{2(\alpha_l - 1)}$.

When the number of malicious nodes in the $k_l + 1$ nodes of Algorithm~\ref{alg:recon_with_err_normal} is less than $k_l + 1$, the probability for the bogus symbols sent from the malicious nodes to be detected is at least $1-(1/q^2)^{2(\alpha_l - 2)}$.
\end{theorem}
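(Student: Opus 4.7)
The plan is to mirror the approach of Lemma~\ref{lm:reg_with_err_normal} and Theorem~\ref{th:reg_with_err_normal}, but lifted from the vector-valued regeneration setting to the matrix-valued reconstruction setting. Since the Rashmi-style MSR decoder invoked in Steps~2.2 and~2.3 of Algorithm~\ref{alg:recon_with_err_normal} is linear in the received symbols, I can write $[S_i, T_i] = [S_{l,t}, T_{l,t}] + [\hat S_i, \hat T_i]$ for $i = 1, 2$, where the perturbations $[\hat S_i, \hat T_i]$ are linear images of the error matrices $E_i$ that stack the per-node error row-vectors $\mathbf{e}_j \in \mathbb{F}_{q^2}^{\alpha_l}$ contributed by the $\alpha_l + 1$ nodes used in system $i$. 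Detection fails precisely when $[\hat S_1, \hat T_1] = [\hat S_2, \hat T_2]$, so the task is to upper-bound the probability of this event.

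Systems~1 and~2 share the $\alpha_l$ nodes $0, \ldots, \alpha_l - 1$ and differ only in node $\alpha_l$ versus node $\alpha_l + 1$, so I would isolate the discriminating contribution just as in the proof of Lemma~\ref{lm:reg_with_err_normal}. Unlike the regeneration case, however, the $\alpha_l + 2$ encoding vectors $\nu_{0,l}, \ldots, \nu_{\alpha_l+1,l}$ live in the $2\alpha_l$-dimensional row space of $[\Phi_l, \Delta\Phi_l]$ and do not span the full space for $\alpha_l > 1$, so the naive ``express the odd row as a linear combination of the other rows'' trick does not immediately apply. The key is that the Rashmi decoder exploits $S_{l,t} = S_{l,t}^T$ and $T_{l,t} = T_{l,t}^T$ to double the effective number of equations; combining the two decoding instances with criterion~(ii) of the construction of $\Delta$ (any $2\alpha_l$ rows of $[\Phi_l, \Delta\Phi_l]$ are linearly independent), the collision $[\hat S_1, \hat T_1] = [\hat S_2, \hat T_2]$ rewrites as a matrix-valued analogue of equation~(\ref{eq:msr_reg_detect}): a set of linear relations among the error rows weighted by generalized secret coefficients (playing the role of $\zeta_r$) that are determined by the encoding matrix and are therefore unknown to the adversary under the assumption of Section~\ref{Sec:Preliminary}.

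The argument then splits into the same two cases as Lemma~\ref{lm:reg_with_err_normal}. If exactly one of the $\alpha_l + 2$ involved nodes is malicious, the relevant generalized coefficients are all nonzero (a consequence of the linear independence property above), so the matrix equation would require a nonzero error to equal a nonzero combination of zero vectors, which is impossible; detection is certain. When several but fewer than $k_l + 1$ nodes are malicious, they cannot collude to pin down the generalized coefficients from the information they control, so from their viewpoint each independent scalar condition buried in the matrix equation is satisfied with probability at most $1/q^2$ over a uniformly distributed unknown coefficient in $\mathbb{F}_{q^2}$.

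The main obstacle, and the technical heart of the proof, is to count the number of independent scalar conditions inside the matrix-valued collision equation and show that this count equals $2(\alpha_l - 2)$. My plan is to decompose the matrix equation into its symmetric $\hat S$- and $\hat T$-components, accounting for the factor of $2$, and then within each symmetric $\alpha_l \times \alpha_l$ perturbation matrix to identify precisely which entries give linearly independent conditions on the secret coefficients, tracking how the $\alpha_l$ shared nodes force agreement on certain positions and how the Vandermonde coupling coming from the Rashmi decoder makes other positions redundant once the independent ones are imposed. A careful bookkeeping should isolate $\alpha_l - 2$ independent scalar conditions per symmetric component, and the stated bound $1 - (1/q^2)^{2(\alpha_l - 2)}$ follows by a direct probability calculation analogous to the final step of Theorem~\ref{th:reg_with_err_normal}.
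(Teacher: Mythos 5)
Your proposal correctly identifies the overall shape of the argument (linearity of the decoder, collision of the two error perturbations as the failure event, the non-spanning issue that blocks a direct reuse of Lemma~\ref{lm:reg_with_err_normal}, and a per-scalar-condition failure probability of $1/q^2$ tied to coefficients the colluders cannot learn), but the technical heart is left as a plan, and the plan as stated does not lead to the exponent $2(\alpha_l-2)$. The paper circumvents the non-spanning problem not by an abstract ``symmetry doubles the equations'' argument but by a concrete device: right-multiply the received data by $\Phi_{DC1}^T$ to form the symmetric matrices $C_1=\Phi_{DC1}S_1\Phi_{DC1}^T$ and $D_1=\Phi_{DC1}T_1\Phi_{DC1}^T$, solve each off-diagonal pair $(C_{1,i,j},D_{1,i,j})$ from the two scalar equations in~(\ref{eq:recon}) using $\lambda_i\neq\lambda_j$, and then recover the rows of $\Omega S_1$. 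Only after this reduction does the failure event become, for each row index $i$, a single scalar condition of the form~(\ref{eq:msr_recon_detect_number}) in the transformed errors $\hat e_{r,i}$, with coefficients $\zeta_r/\lambda_r$ coming from expanding $\mu_{\alpha_l}^T$ in the basis $\{\mu_r^T\}_{r\neq i,\alpha_l}$. Your sketch never produces this family of conditions, and without it the ``bookkeeping'' step you defer has nothing concrete to count.

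More importantly, the count $2(\alpha_l-2)$ does not come from the bookkeeping you describe (shared nodes forcing agreement plus Vandermonde redundancy). It comes from two specific mechanisms that are absent from your proposal. First, the exponent is driven by a case analysis on the $\lambda_i$: when all $\lambda_i\neq 0$ one gets $\alpha_l$ row conditions, of which the colluding nodes can deterministically satisfy at most one, giving failure probability at most $(1/q^2)^{\alpha_l-1}$ per component; but when some $\lambda_i=0$ (say $\lambda_0=0$), the error $\mathbf{e}_0$ passes through both reconstructions identically, so the corresponding row of $\Omega S_1$ and $\Omega S_2$ agrees for free, only $\alpha_l-1$ conditions remain (cf.~(\ref{eq:msr_recon_detect_number_case_b})), and after the adversary forces one of them the bound degrades to $(1/q^2)^{\alpha_l-2}$ per component. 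The theorem's bound is the worse of the two cases. Second, your uniform claim that ``each independent scalar condition is satisfied with probability at most $1/q^2$'' overlooks exactly this collusion effect: the adversary controlling several of the $k_l+1$ nodes can satisfy one condition with certainty, which is why the exponent is $\alpha_l-2$ (or $\alpha_l-1$) rather than $\alpha_l-1$ (or $\alpha_l$). Without the $\lambda_i=0$ case and the ``one condition can be forced'' observation, there is no route in your outline to the stated exponent, so the proposal as written has a genuine gap at precisely the step the theorem hinges on.
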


\begin{proof}
We arrange this proof as follows. We will first study the requirements for $S_1 = S_2,T_1 = T_2$ in Algorithm~\ref{alg:recon_with_err_normal} which will lead to the failure of the Algorithm when there are bogus symbols. Then we will study the corresponding failure probabilities depending on different values of $\lambda_i$ of the matrix $\Delta$ defined in section~\ref{Sec:H-MSR}.

For convenience we write $\mathbf{ e}_{i,l,t}$ as $\mathbf{ e}_i$ in the proof. $\mathbf{ e}_i \in {[\mathbb{F}_{q^2}]^{\alpha_l}}$ for $0 \leq i \leq \alpha_l + 1$. We also write $\Psi_{DC} = [\Phi_{DC}, \Delta_{DC} \cdot \Phi_{DC}]$, where $\Phi_{DC} = \begin{bmatrix}
\mu_{0}\\
\mu_{1}\\
\vdots \\
\mu_{k_l-1}
\end{bmatrix}$ and $\mu_{i}$ represents ${\mu}_{i,l}$ which is the $i^{th}$ row of the encoding matrix $\Phi_l$ defined in section~\ref{Sec:H-MSR}.$\\$

\paragraph*{Step 1. Derive the requirements}

For ${R_1}'=R_1 + W_1$ in Algorithm~\ref{alg:recon_with_err_normal}, we have:
\begin{equation}
\label{eq:recon_raw}
\Phi_{DC1}  S_1  \Phi_{DC1}^T + \Delta_{DC1}  \Phi_{DC1}  T_1  \Phi_{DC1}^T =   R_1\Phi_{DC1}^T + W_1\Phi_{DC1}^T,
\end{equation}
where $\Phi_{DC1} = \begin{bmatrix}
\mu_{0}\\
\mu_{1}\\
\vdots \\
\mu_{\alpha_l}
\end{bmatrix}$, $W_1 = \begin{bmatrix}
\mathbf{ e}_{0}\\
\mathbf{ e}_{1}\\
\vdots \\
\mathbf{ e}_{\alpha_l}
\end{bmatrix}$. Suppose $C_1=\Phi_{DC1}  S_1  \Phi_{DC1}^T, D_1 = \Phi_{DC1}  T_1  \Phi_{DC1}^T$, we can write equation~(\ref{eq:recon_raw}) as:
\begin{equation}
C_1 + \Delta_{DC1}  D_1 =   R_1\Phi_{DC1}^T + W_1\Phi_{DC1}^T  = \hat{R}_1 + \hat{W}_1.
\end{equation}
It is easy to see that $C_1$ and $D_1$ are symmetric, so we have
\begin{equation}
\label{eq:recon}
\left\{\begin{matrix}
C_{1,i,j} + \lambda_i \cdot D_{1,i,j}= \hat{R}_{1,i,j} + \hat{W}_{1,i,j}\\
C_{1,i,j} + \lambda_j \cdot D_{1,i,j}= \hat{R}_{1,j,i} + \hat{W}_{1,j,i}
\end{matrix}\right.,
\end{equation}
where $C_{1,i,j}, D_{1,i,j}, \hat{R}_{1,i,j}, \hat{W}_{1,i,j}$ are the elements in the $i^{th}$ row, $j^{th}$ column of $C_1, D_1, \hat{R}_1, \hat{W}_1$ respectively. Solve equation~(\ref{eq:recon}) for all the $i,j$ ($i \neq j, 0 \leq i \leq \alpha_l, 0 \leq j \leq \alpha_l-1$), we can get the corresponding $C_{1,i,j}, D_{1,i,j}$. Because the structure of $C_1$ and $D_1$ are the same, we will only focus on $C_1$ (corresponding to $S_1$) in the proof. The calculation for $D_1$ (corresponding to $T_1$) is the same.
\begin{equation}
\Phi_{DC1}  S_1  \Phi_{DC1}^T = \begin{bmatrix}
\mu_{0}\\
\mu_{1}\\
\vdots \\
\mu_{\alpha_l}
\end{bmatrix} \cdot S_1 \cdot
[\mu_{0}^T, \mu_{1}^T, \cdots , \mu_{\alpha_l}^T] = C_1.
\end{equation}
So the elements of the $i^{th}$ row of $C_1$ (except the element in the diagonal position) can be written as:
\begin{equation}
\begin{matrix}
\mu_i \cdot S_1 \cdot [\mu_{0}^T,\cdots,  \mu_{i-1}^T,  \mu_{i+1}^T \cdots , \mu_{\alpha_l}^T]
=
[C_{1,i,0}, \cdots, C_{1,i,i-1}, C_{1,i,i+1}, \cdots, C_{1,i,\alpha_l}].
\end{matrix}
\end{equation}

Let $\Omega=\begin{bmatrix}
\mu_0\\ \mu_1\\ \vdots \\ \mu_{\alpha_l-1}
\end{bmatrix}$, then $\Omega$ is an $\alpha_l \times \alpha_l$ full rank matrix, and we can derive $S_1$ from
%Put $\mu_i$ ($0 \leq i \leq \alpha_l-1 $) in one matrix $\Omega$, we can get:
\begin{equation}
\Omega \cdot S_1  =
\begin{bmatrix}
[C_{1,0,1}, C_{1,0,2},\cdots, C_{1,0,\alpha_l}][\mu_1^T, \mu_2^T,\cdots, \mu_{\alpha_l}^T]^{-1}\\
[C_{1,1,0}, C_{1,1,2},\cdots, C_{1,1,\alpha_l}][\mu_0^T, \mu_2^T,\cdots, \mu_{\alpha_l}^T]^{-1}\\
\cdots \\
[C_{1,\alpha_l-1,0}, C_{1,\alpha_l-1,1},\cdots, C_{1,\alpha_l-1,\alpha_l}][\mu_0^T, \mu_1^T,\cdots, \mu_{\alpha_l}^T]^{-1}\\
\end{bmatrix}.
\end{equation}
%Because $\Omega$ is an $\alpha_l \times \alpha_l$ full rank matrix,

For ${R_2}'=R_2 + W_2$ in Algorithm~\ref{alg:recon_with_err_normal}, we can get $\Omega \cdot S_2$ the same way. If $\Omega\cdot S_1 = \Omega\cdot S_2$, Algorithm~\ref{alg:recon_with_err_normal} will fail to detect the errors. This will happen if all the rows of $\Omega\cdot S_1$ and $\Omega\cdot S_2$ are the same. So we will focus on the $i^{th}$ row of $\Omega\cdot S_1$ and $\Omega\cdot S_2$.

\paragraph*{Step 2. Calculate the failure probabilities}

Depending on the values of $\lambda_i$, we discuss two cases:

(a) If none of the $\lambda_i$ ($0 \leq i \leq \alpha_l$) equals to 0, we can solve $C_{1,i,j}$ in equation~(\ref{eq:recon}):

\begin{eqnarray}
\label{eq:solve_C}
C_{1,i,j} &=& \frac{\lambda_j \cdot \hat{R}_{1,i,j} - \lambda_i \cdot \hat{R}_{1,j,i} }{\lambda_i \cdot \lambda_j} + \frac{\mathbf{ e}_i \cdot \mu_j^T}{\lambda_i} - \frac{\mathbf{ e}_j \cdot \mu_i^T}{\lambda_j} \nonumber \\ &=& N_{1,i,j} + Q_{1,i,j}.
\end{eqnarray}
In equation~(\ref{eq:solve_C}), $N_{1,i,j}$ represents the original solution without errors, while $Q_{1,i,j}$ represents the impact of the errors. So the $i^{th}$ row of $\Omega\cdot S_1$ can be written as:
\begin{eqnarray}
&&[C_{1,i,0},\cdots,C_{1,i,i-1},C_{1,i,i+1},\cdots,C_{1,i,\alpha_l}]\cdot\Pi_{1,i}^{-1} \nonumber \\
&=& [N_{1,i,0},\cdots,N_{1,i,i-1},N_{1,i,i+1},\cdots,N_{1,i,\alpha_l}]\cdot\Pi_{1,i}^{-1} \nonumber \\
&&+ [Q_{1,i,0},\cdots,Q_{1,i,i-1},Q_{1,i,i+1},\cdots,Q_{1,i,\alpha_l}]\cdot\Pi_{1,i}^{-1}  \label{eq:solve_C_nom_plus_err} \\
&=& \xi_i + \delta_{1,i},\nonumber
\end{eqnarray}
where $\Pi_{1,i} = [\mu_0^T,\cdots,\mu_{i-1}^T, \mu_{i+1}^T,\cdots, \mu_{\alpha_l}^T]$. $\xi_i$ corresponds to the part independent of the errors. $\delta_{1,i}$ is the error part and can be further expanded as:
\begin{eqnarray}
\delta_{1,i} &=& \left[\frac{\mathbf{e}_i \cdot \mu_0^T}{\lambda_i},\cdots,\frac{\mathbf{ e}_i \cdot \mu_{i-1}^T}{\lambda_i}, \frac{\mathbf{ e}_i \cdot \mu_{i+1}^T}{\lambda_i},\cdots,\frac{\mathbf{ e}_i \cdot \mu_{\alpha_l}^T}{\lambda_i}\right] \cdot \Pi_{1,i}^{-1} \nonumber\\
&&- \left[\frac{\mathbf{ e}_0 \cdot \mu_i^T}{\lambda_0},\cdots, \frac{\mathbf{ e}_{i-1} \cdot \mu_i^T}{\lambda_{i-1}},\frac{\mathbf{ e}_{i+1} \cdot \mu_i^T}{\lambda_{i+1}},\cdots, \frac{\mathbf{ e}_{\alpha_l} \cdot \mu_i^T}{\lambda_{\alpha_l}}\right] \cdot \Pi_{1,i}^{-1}.
\label{eq:solve_C_err}
\end{eqnarray}
The first part of equation~(\ref{eq:solve_C_err}) can be reduced as follows:
\begin{eqnarray}
&&\left[\frac{\mathbf{ e}_i \cdot \mu_0^T}{\lambda_i},\cdots,\frac{\mathbf{ e}_i \cdot \mu_{i-1}^T}{\lambda_i}, \frac{\mathbf{ e}_i \cdot \mu_{i+1}^T}{\lambda_i},\cdots,\frac{\mathbf{ e}_i \cdot \mu_{\alpha_l}^T}{\lambda_i}\right] \cdot \Pi_{1,i}^{-1} \nonumber\\
&=&\frac{\mathbf{ e}_i}{\lambda_i} \cdot \left[\mu_0^T,\cdots,\mu_{i-1}^T, \mu_{i+1}^T,\cdots, \mu_{\alpha_l}^T\right] \cdot \Pi_{1,i}^{-1} \label{eq:solve_C_err_1st_part}\\
&=& \frac{\mathbf{ e}_i}{\lambda_i}. \nonumber
\end{eqnarray}
So we have:
\begin{eqnarray}
\delta_{1,i} &=& \frac{\mathbf{ e}_i}{\lambda_i} -  \left[\frac{\mathbf{ e}_0 \cdot \mu_i^T}{\lambda_0},\cdots, \frac{\mathbf{ e}_{i-1} \cdot  \mu_i^T}{\lambda_{i-1}},\frac{\mathbf{ e}_{i+1} \cdot\mu_i^T}{\lambda_{i+1}},\cdots, \frac{\mathbf{ e}_{\alpha_l} \cdot \mu_i^T}{\lambda_{\alpha_l}}\right] \cdot \Pi_{1,i}^{-1}
\nonumber \\
&=&\frac{\mathbf{ e}_i}{\lambda_i} - \rho_{1,i}.
\end{eqnarray}

For ${R_2}'=R_2 + W_2$ in Algorithm~\ref{alg:recon_with_err_normal} where $W_2 = \begin{bmatrix}
\mathbf{ e}_{0}\\
\vdots \\
\mathbf{ e}_{\alpha_l-1}\\
\mathbf{ e}_{\alpha_l+1}
\end{bmatrix}$, we can derive $C_{2,i,j}$, then $\Omega\cdot S_2$ the same way. The $i^{th}$ row of $\Omega\cdot S_2$ can be written as:
\begin{equation}
\xi_i + \delta_{2,i} = \xi_i + \frac{\mathbf{ e}_i}{\lambda_i} - \rho_{2,i},
\end{equation}
where $\rho_{2,i} = \left[\frac{\mathbf{ e}_0 \cdot \mu_i^T}{\lambda_0},\cdots, \frac{\mathbf{ e}_{i-1}\cdot \mu_i^T}{\lambda_{i-1}},\frac{\mathbf{ e}_{i+1} \cdot\mu_i^T}{\lambda_{i+1}},\cdots, \frac{\mathbf{e}_{\alpha_l-1} \cdot \mu_i^T}{\lambda_{\alpha_l-1}},\frac{\mathbf{ e}_{\alpha_l+1} \cdot \mu_i^T}{\lambda_{\alpha_l+1}}\right]\cdot\Pi_{2,i}^{-1}$, $\Pi_{2,i} = [\mu_0^T,\cdots,\mu_{i-1}^T, \mu_{i+1}^T$, $\cdots, \mu_{\alpha_l-1}^T, \mu_{\alpha_l+1}^T]$.

Because $\Pi_{1,i}$ is a full rank matrix, $\rho_{1,i} = \rho_{2,i}$ is equivalent to $\rho_{1,i} \cdot \Pi_{1,i} = \rho_{2,i} \cdot \Pi_{1,i}$. Similar to the proof of Lemma~\ref{lm:reg_with_err_normal}, suppose $\Pi_{2,i}^{-1} = \begin{bmatrix}
\eta_{0}\\
\vdots \\
\eta_{\alpha_l-1}\\
\eta_{\alpha_l+1}
\end{bmatrix}$, we have $\eta_s \cdot \mu_r^T = \left\{\begin{matrix}
 1 \:\:r=s\\
 0 \:\:r\neq s
\end{matrix}\right.$. So
\begin{eqnarray}
\rho_{1,i}  \cdot  \Pi_{1,i}  &=& \left[\cdots, \frac{\mathbf{ e}_{i-1} \cdot  \mu_i^T}{\lambda_{i-1}},\frac{\mathbf{ e}_{i+1}  \cdot \mu_i^T}{\lambda_{i+1}},\cdots, \frac{\mathbf{ e}_{\alpha_l-1}  \cdot  \mu_i^T}{\lambda_{\alpha_l-1}},\frac{\mathbf{ e}_{\alpha_l}  \cdot  \mu_i^T}{\lambda_{\alpha_l}}\right],\\
\rho_{2,i}  \cdot  \Pi_{1,i}  &=& \left[\cdots, \frac{\mathbf{ e}_{i-1} \cdot  \mu_i^T}{\lambda_{i-1}},\frac{\mathbf{ e}_{i+1}  \cdot \mu_i^T}{\lambda_{i+1}},\cdots, \frac{\mathbf{ e}_{\alpha_l-1}  \cdot  \mu_i^T}{\lambda_{\alpha_l-1}}, x_{2,\alpha_l}\right].
\end{eqnarray}

Because $\mu_0^T,\cdots,\mu_{i-1}^T, \mu_{i+1}^T,\cdots, \mu_{\alpha_l-1}^T, \mu_{\alpha_l+1}^T$ are linearly independent, they can be viewed as a set of bases of the $\alpha_l$ dimensional linear space. So we have
\begin{equation}
\label{eqn:mali_col_recon}
\mu_{\alpha_l}^T = \sum_{r=0,r \neq i,\alpha_l}^{r=\alpha_l + 1} \zeta_r \cdot \mu_r^T.
\end{equation}
Thus
\begin{eqnarray}
x_{2,\alpha_l} &=& \left[\cdots, \frac{\mathbf{e}_{i-1}\cdot  \mu_i^T}{\lambda_{i-1}},\frac{\mathbf{ e}_{i+1} \cdot\mu_i^T}{\lambda_{i+1}},\cdots, \frac{\mathbf{ e}_{\alpha_l-1}\cdot \mu_i^T}{\lambda_{\alpha_l-1}},\frac{\mathbf{ e}_{\alpha_l+1}  \cdot  \mu_i^T}{\lambda_{\alpha_l+1}}\right] \cdot
\Pi_{2,i}^{-1}  \cdot  \left(\sum_{r=0,r \neq i,\alpha_l}^{r=\alpha_l + 1} \zeta_r  \cdot  \mu_r^T\right) \nonumber \\
&=& \left(\sum_{r=0,r \neq i,\alpha_l}^{r=\alpha_l + 1} \zeta_r  \cdot \frac{\mathbf{e}_r  \cdot  \mu_i^T}{\lambda_r}\right).
\end{eqnarray}
If
\begin{equation}
\label{eq:msr_recon_detect}
\frac{\mathbf{ e}_{\alpha_l}  \cdot  \mu_i^T}{\lambda_{\alpha_l}} =  \sum_{r=0,r \neq i,\alpha_l}^{r=\alpha_l + 1}  \zeta_r  \cdot  \frac{\mathbf{ e}_r  \cdot  \mu_i^T}{\lambda_r} \:\:(0 \leq i \leq \alpha_l-1),
\end{equation}
$\rho_{1,i}$ and $\rho_{2,i}$ will be equal, so are $\Omega\cdot S_1$ and $\Omega\cdot S_2$. Therefore, Algorithm~\ref{alg:recon_with_err_normal} will fail.

For the error $\mathbf{ e}_i\ (0 \leq i \leq \alpha_l + 1)$, the following equation holds:
\begin{equation}
\mathbf{e}_i \cdot [\mu_0^T, \mu_1^T,\cdots,\mu_{\alpha_l - 1}^T] = [\hat{e}_{i,0}, \hat{e}_{i,1}, \cdots , \hat{e}_{i,\alpha_l - 1}] = \hat{\mathbf{e}}_i.
\end{equation}
Because $[\mu_0^T, \mu_1^T,\cdots,\mu_{\alpha_l - 1}^T]$ is a full rank matrix, there is a one-to-one mapping between $\mathbf{e}_i$ and $\hat{\mathbf{e}}_i$. 
%Since $\mathbf{e}_i$ is an independent random vector in ${[\mathbb{F}_{q^2}]^{\alpha_l}}$, $\hat{\mathbf{e}}_i$ can also be viewed as an independent random vector in ${[\mathbb{F}_{q^2}]^{\alpha_l}}$.
 Equation~(\ref{eq:msr_recon_detect}) can be written as:
\begin{equation}
\label{eq:msr_recon_detect_number}
\frac{\hat{e}_{\alpha_l,i}}{\lambda_{\alpha_l}} =  \sum_{r=0,r \neq i,\alpha_l}^{r=\alpha_l + 1}  \zeta_r  \cdot  \frac{\hat{e}_{r,i}}{\lambda_r} \:\:(0 \leq i \leq \alpha_l-1).
\end{equation}
When the number of malicious nodes in the $k_l +1$ nodes is less than $k_l +1$, the malicious nodes can collude to satisfy equation~(\ref{eq:msr_recon_detect_number}) for at most one particular $i$. So the probability that equation~(\ref{eq:msr_recon_detect_number}) holds is $1/q^2$ for at least $\alpha_l - 1$ out of  $\alpha_l$ $i's$ between $0$ and $\alpha_l - 1$. If we consider equation~(\ref{eq:msr_recon_detect_number}) for all the $i's$ simultaneously, the probability will be at most $(1/q^2)^{\alpha_l - 1}$. As discussed before, the probability for $T_1 = T_2$ will be at most $(1/q^2)^{\alpha_l - 1}$. In this case, the detection probability is at least $1-(1/q^2)^{2(\alpha_l-1)}$.

%Because $\hat{e}_{0,i}, \hat{e}_{1,i},\cdots,\hat{e}_{\alpha_l + 1,i}$ are independent random numbers in $\mathbb{F}_{q^2}$, the probability that equation~(\ref{eq:msr_recon_detect_number}) holds for each $i$ is $1/q^2$. If we consider equation~(\ref{eq:msr_recon_detect_number}) for all the $i's$ between $0$ and $\alpha_l - 1$ simultaneously, the probability will be $(1/q^2)^{\alpha_l}$. As we have mentioned above, the probability for $T_1 = T_2$ will be $(1/q^2)^{\alpha_l}$. In this case, the detection probability is $1-(1/q^2)^{2\alpha_l}$.

(b) If one of the $\lambda_i$ ($0 \leq i \leq \alpha_l$) equals to $0$, we can assume $\lambda_0 = 0$ without loss of generality.
When $i=0$, the solution for equation~(\ref{eq:recon}) is:
\begin{equation}
C_{1,0,j} = \hat{R}_{1,0,j} + \mathbf{ e}_0 \cdot \mu_j^T = N_{1,0,j} + Q_{1,0,j}.
\end{equation}
Similar to equations~(\ref{eq:solve_C_nom_plus_err}), (\ref{eq:solve_C_err}) and (\ref{eq:solve_C_err_1st_part}), we have $\delta_{1,0} = \mathbf{ e}_0$. For ${R_2}'=R_2 + W_2$, it is easy to see that $\delta_{2,0} = \mathbf{ e}_0$. So the first rows of $\Omega\cdot S_1$ and $\Omega\cdot S_2$ are the same no matter what the error vector $\mathbf{e}_0$ is.

When $i>0,j=0$, the solution for equation~(\ref{eq:recon}) is:
\begin{equation}
C_{1,i,0} = \hat{R}_{1,i,0} + \mathbf{ 0} \cdot \mu_0^T + \mathbf{ e}_0 \cdot \mu_i^T = N_{1,i,0} + Q_{1,i,0},
\end{equation}
where $\mathbf{ 0}$ is a zero row vector. When $i>0,j>0$, the solution has the same expression as equation~(\ref{eq:solve_C}). In this case, for the $i^{th}\: (i>0)$ row of $\Omega\cdot S_1$, equation~(\ref{eq:solve_C_err}) can be written as:
\begin{eqnarray}
\delta_{1,i} &=& \left[\mathbf{ 0},\cdots,\frac{\mathbf{ e}_i \cdot \mu_{i-1}^T}{\lambda_i}, \frac{\mathbf{ e}_i \cdot \mu_{i+1}^T}{\lambda_i},\cdots,\frac{\mathbf{ e}_i \cdot \mu_{\alpha_l}^T}{\lambda_i}\right] \cdot \Pi_{1,i}^{-1} \nonumber\\
&&-\left[-\mathbf{e}_0 \cdot \mu_i^T,\cdots, \frac{\mathbf{ e}_{i-1} \cdot \mu_i^T}{\lambda_{i-1}},\frac{\mathbf{ e}_{i+1} \cdot \mu_i^T}{\lambda_{i+1}},\cdots, \frac{\mathbf{ e}_{\alpha_l} \cdot \mu_i^T}{\lambda_{\alpha_l}}\right] \cdot \Pi_{1,i}^{-1}.
\label{eq:solve_C_err_case_b}
\end{eqnarray}
The first part of equation~(\ref{eq:solve_C_err_case_b}) can be divided into two parts:
\begin{eqnarray}
&& \left[\frac{\mathbf{e}_i \cdot \mu_0^T}{\lambda_i},\cdots,\frac{\mathbf{e}_i \cdot \mu_{i-1}^T}{\lambda_i}, \frac{\mathbf{e}_i \cdot \mu_{i+1}^T}{\lambda_i},\cdots,\frac{\mathbf{e}_i \cdot \mu_{\alpha_l}^T}{\lambda_i}\right] \cdot \Pi_{1,i}^{-1}
-  \left[\frac{\mathbf{e}_i \cdot \mu_0^T}{\lambda_i},\mathbf{0},\cdots,\mathbf{0} \right]  \cdot  \Pi_{1,i}^{-1}\nonumber \\
& =&
\frac{\mathbf{ e}_i}{\lambda_i}  -  \frac{\mathbf{ e}_i}{\lambda_i} \cdot  [\mu_0^T, \mathbf{ 0},\cdots,\mathbf{ 0}]\cdot \Pi_{1,i}^{-1}.
\end{eqnarray}
So equation~(\ref{eq:solve_C_err_case_b}) can be further written as:
\begin{eqnarray}
\delta_{1,i} &=& \frac{\mathbf{e}_i}{\lambda_i} -\left[\frac{\mathbf{ e}_i \cdot \mu_0^T}{\lambda_i}-\mathbf{ e}_0 \cdot \mu_i^T,\cdots, \frac{\mathbf{ e}_{i-1} \cdot \mu_i^T}{\lambda_{i-1}},\frac{\mathbf{ e}_{i+1} \cdot \mu_i^T}{\lambda_{i+1}},\cdots,
\frac{\mathbf{ e}_{\alpha_l} \cdot \mu_i^T}{\lambda_{\alpha_l}}\right] \cdot \Pi_{1,i}^{-1} \nonumber \\
&=& \frac{\mathbf{ e}_i}{\lambda_i} - \rho_{1,i}.
\end{eqnarray}

By employing the same derivation in case~(a), for $1 \leq i \leq \alpha_l - 1$, $\rho_{1,i}$ and $\rho_{2,i}$ will be equal if
\begin{eqnarray}
\label{eq:msr_recon_detect_case_b}
\frac{\mathbf{ e}_{\alpha_l}  \cdot  \mu_i^T}{\lambda_{\alpha_l}} &=&  \sum_{r=1,r \neq i,\alpha_l}^{r=\alpha_l + 1}  \zeta_r  \cdot  \frac{\mathbf{ e}_r  \cdot  \mu_i^T}{\lambda_r} - \zeta_0  \cdot  \mathbf{ e}_0  \cdot  \mu_i^T + \zeta_0  \cdot  \frac{\mathbf{ e}_i  \cdot  \mu_0^T}{\lambda_i},\\
\label{eq:msr_recon_detect_number_case_b}
\frac{\hat{e}_{\alpha_l,i}}{\lambda_{\alpha_l}} &=&  \sum_{r=1,r \neq i,\alpha_l}^{r=\alpha_l + 1}  \zeta_r  \cdot  \frac{\hat{e}_{r,i}}{\lambda_r} - \zeta_0  \cdot  \hat{e}_{0,i} + \zeta_0  \cdot  \frac{\hat{e}_{i,0}}{\lambda_i}.
\end{eqnarray}
When the number of malicious nodes in the $k_l + 1$ nodes is less than $k_l + 1$, similar to case~(a), the probability that equation~(\ref{eq:msr_recon_detect_number_case_b}) holds is $1/q^2$ for at least $\alpha_l - 2$ out of  $\alpha_l - 1$ $i's$ between $1$ and $\alpha_l - 1$. If we consider equation~(\ref{eq:msr_recon_detect_number_case_b}) for all the $i's$ simultaneously, the probability will be at most $(1/q^2)^{\alpha_l - 2}$. Here the probability for $T_1 = T_2$ will be at most $(1/q^2)^{\alpha_l-2}$. In this case, the detection probability is $1-(1/q^2)^{2(\alpha_l - 2)}$.

%Because $\hat{e}_{0,i}, \hat{e}_{1,i},\cdots,\hat{e}_{\alpha_l + 1,i}$ are independent random numbers in $\mathbb{F}_{q^2}$ and each of them only appears once in all of the $\alpha_l - 1$ equations, the probability that equation~(\ref{eq:msr_recon_detect_number_case_b}) holds for each $i$ is $1/q^2$. If we consider equation~(\ref{eq:msr_recon_detect_number_case_b}) for all the $i's$ between $1$ and $\alpha_l - 1$ simultaneously, the probability will be $(1/q^2)^{\alpha_l - 1}$. Here the probability for $T_1 = T_2$ will be $(1/q^2)^{\alpha_l-1}$. In this case, the detection probability is $1-(1/q^2)^{2(\alpha_l - 1)}$.

Combining both cases, the detection probability is at least $1-(1/q^2)^{2(\alpha_l - 2)}$.
\end{proof}

%Due to page limitation, the proof is omitted here.

\subsubsection{Recovery Mode}

Once DC detects errors using Algorithm~\ref{alg:recon_with_err_normal}, it will send integer $j = q-1$ to all the $q^2$ nodes in the network requesting symbols. Storage nodes will still use the way similar to that of the error-free network in Section~\ref{Sec:H-MSR-Reconstruct-error-free} to send symbols. The reconstruct procedures are described in Algorithm~\ref{alg:recon_with_err_recovery}.

\begin{algorithms}[Recovery mode] DC reconstructs the original file in hostile network
\label{alg:recon_with_err_recovery}

\normalfont
\begin{namelist}{\textbf{Step n:}}
\item [\step 1] For every  $0 \leq l \leq q-1$ , we divide the symbol vector ${\mathbf{\tilde{y}}_{i,l}}'$ into $A/\alpha_{l}$ equal row vectors: $[{\mathbf{\tilde{y}}_{i,l,1}}',{\mathbf{\tilde{y}}_{i,l,2}}',\cdots,{\mathbf{\tilde{y}}_{i,l,A/\alpha_{l}}}']$. (Without loss of generality, we assume $0 \leq i \leq q^2-1$.)

\item [\step 2] For every $q-1 \geq l \geq 0$ in descending order and $1 \leq t \leq A/\alpha_{l}$ in ascending order, DC can reconstruct the matrices related to the original file when the errors in the received symbol vectors ${\mathbf{\tilde{y}}_{i,l,t}}'$ from $q^2$ storage nodes can be corrected:

%\begin{adjustwidth}{1.0em}{0pt}
\textbf{Step 2.1:} Let
${R}'=[{\mathbf{\tilde{y}}_{0,l,t}}'^T, {\mathbf{\tilde{y}}_{1,l,t}}'^T, \cdots, {\mathbf{\tilde{y}}_{q^2-1,l,t}}'^T]^T$.

\textbf{Step 2.2:} If the number of corrupted nodes detected is larger than $\min \{ q^2- k_l , \lfloor (q^2 - k_{q-1})/2 \rfloor\}$, then the number of errors have exceeded the error correction capability.  We will flag the decoding failure and exit the algorithm.

\textbf{Step 2.3:}  Since the number of errors is within the error correction capability of the H-MSR code, substitute ${\mathbf{\tilde{y}}_{i,l,t}}'$ in ${R}'$ with the symbol $\otimes$
representing an erasure vector if node $i$ has been detected to be corrupted in the previous loops (previous values of $l,t$).

\textbf{Step 2.4:} Solve $S_{l,t},T_{l,t}$ using the method described in section \ref{sec:Rec-ST}. If symbols from node $i$ are detected to be erroneous during the calculation, mark node $i$ as corrupted.
%\end{adjustwidth}

\item [\step 3] DC reconstructs the original file from all the matrices $S_{l,t},T_{l,t}$, $0 \leq l \leq q-1$ and $1 \leq t \leq A/\alpha_{l}$.
\end{namelist}
\end{algorithms}

For data reconstruction described in Algorithm~\ref{alg:recon_with_err_recovery}, we have the following theorem:
\begin{theorem}[H-MSR Reconstruction--Recovery Mode]\label{thm:recon_with_err_recovery}
For data reconstruction, the number of errors that the H-MSR code can correct is 
\begin{equation}\label{eq:num_of_errs_msr_recovery_recon}
\tau_{H-MSR}  =  q \cdot \lfloor (q^2 - {k_{q-1}})/2  \rfloor.
\end{equation}
\end{theorem}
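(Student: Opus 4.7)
The plan is to mirror the structure of the proof of Theorem~\ref{thm:reg_with_err_recovery} (H-MSR Regeneration--Recovery Mode), adapting it from the regeneration setting (with the parameter $d_l$) to the reconstruction setting (with the parameter $k_l = \alpha_l + 1$). The key conceptual observation is that, after stripping off the invertible matrix $B_i^{-1}$ at each storage node, the symbols requested by DC in each row $l+1$ of $\widetilde{Y}_i$ correspond to evaluations of the quantity $f_l(\phi^{s_i}) + \lambda_i e_l(\phi^{s_i})$. Collecting these across the $q^2$ storage nodes produces, for each $l$, an MDS codeword with parameters $(q^2, k_l, q^2 - k_l + 1)$. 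Thus the H-MSR reconstruction problem decomposes into the decoding of $q$ parallel MDS codes, exactly as in the regeneration case (just with $q^2$ rather than $q^2-1$ positions, and with $k_l$ rather than $d_l$).

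First I would verify the strictly-decreasing structure of the $k_l$'s: since $\alpha_l \le \kappa(l)$ and $\kappa(l)$ is strictly decreasing, we chose $\alpha_l$ to be strictly decreasing, so $k_l = \alpha_l + 1$ is strictly decreasing as well. Consequently the code with index $l = q-1$ has the largest minimum distance $q^2 - k_{q-1} + 1$, and Algorithm~\ref{alg:recon_with_err_recovery} decodes it first. This code can correct up to $\tau_{q-1} = \lfloor (q^2 - k_{q-1})/2 \rfloor$ errors, at which point all corrupted storage nodes have been identified (Step 2.4 marks each position whose decoded symbol differs from the received one).

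Next I would argue that for each subsequent $l = q-2, \ldots, 0$, the information learned in the previous rounds converts the (up to $\tau_{q-1}$) malicious positions into erasures, so the residual problem for the $(q^2, k_l, q^2 - k_l + 1)$ code is an erasure-correction problem of size at most $\tau_{q-1}$. This is feasible so long as $q^2 - k_l \ge \tau_{q-1}$, which holds since $k_l \le k_0 \le q^2 - \tau_{q-1}$ is guaranteed by the decoding-failure safeguard in Step 2.2 (which checks against $\min\{q^2 - k_l, \lfloor(q^2-k_{q-1})/2\rfloor\}$). Each of the $q$ MDS codes therefore accommodates $\tau_{q-1}$ erroneous symbols coming from the malicious nodes, and summing over the $q$ codes yields $\tau_{H\text{-}MSR} = q \cdot \lfloor (q^2 - k_{q-1})/2\rfloor$.

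The main obstacle is the bookkeeping step of justifying that a single malicious node, which contaminates $q$ symbols in the stacked view (one per MDS code), counts as $q$ errors in the total H-MSR error count — this is the same convention used in the regeneration-mode theorem, so the conclusion is consistent. Beyond this accounting issue, the argument is essentially immediate once the decomposition into $q$ MDS decodings is in place and once the descending order of decoding is shown to convert errors into erasures for all but the first round.
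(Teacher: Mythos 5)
There is a genuine gap in the key structural step. You claim that, for each $l$, stacking the received symbols from the $q^2$ storage nodes yields an MDS codeword with parameters $(q^2,k_l,q^2-k_l+1)$. That is false for the H-MSR code: each received symbol from node $i$ in layer $l$, block $t$, is $\nu_{i,l}\begin{bmatrix}S_{l,t}\\ T_{l,t}\end{bmatrix}$ (a column at a time), where $\nu_{i,l}$ is the $i$-th row of $[\Phi_l,\Delta\Phi_l]$. Hence each column of the stacked matrix $R'$ lies in the column space of $\Psi_{DC}=[\Phi_{DC},\Delta_{DC}\Phi_{DC}]$, which has dimension $d_l=2\alpha_l=2(k_l-1)$, not $k_l$. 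Treated column-by-column as you propose, the code is $(q^2,\,2\alpha_l,\,q^2-2\alpha_l+1)$, whose error-correction radius is only $\lfloor (q^2-2\alpha_l)/2\rfloor$ per layer, strictly weaker than what the theorem asserts. (Your decomposition is essentially the right one for H-MBR reconstruction, where $R_2'=\Omega_{DC}T'$ does give $(q^2,k_l,q^2-k_l+1)$ codes as in Section~\ref{sec:Rec-M}, but it does not apply to the MSR product-matrix structure.)

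The paper's argument instead goes through Step 2.4 of Algorithm~\ref{alg:recon_with_err_recovery} and Section~\ref{sec:Rec-ST}: one multiplies $R'$ on the right by $\Phi_{DC}^T$, writes $C+\Delta_{DC}D=\hat R'$ with $C=\Phi_{DC}S'\Phi_{DC}^T$ and $D=\Phi_{DC}T'\Phi_{DC}^T$ symmetric, solves the off-diagonal entries of $C$ and $D$ from the pairwise relations, and only then views each column of $C$ (off-diagonal entries, so length $q^2-1$ because the diagonal is unknowable from those relations) as a codeword of the $(q^2-1,\alpha_l,q^2-\alpha_l)$ MDS code generated by $\Phi_{DC}$. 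It is this exploitation of the symmetry of $S_{l,t},T_{l,t}$ jointly across the columns of a block that reduces the effective dimension from $d_l$ to $k_l-1$ and yields minimum distance $q^2-\alpha_l=q^2-k_l+1$; the per-layer radius $\lfloor (q^2-k_{q-1})/2\rfloor$ and hence equation~(\ref{eq:num_of_errs_msr_recovery_recon}) then follow by the same descending-order, errors-to-erasures bookkeeping you describe (which itself matches the paper). Because your proposal skips this reduction, the central quantitative claim is unsupported as written, even though the final formula happens to coincide numerically with the correct one.
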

\begin{proof}
Similar to the proof of Theorem~\ref{thm:reg_with_err_recovery} in data regeneration, for data reconstruction Algorithm~\ref{alg:recon_with_err_recovery}, H-MSR code can be viewed as $q$ MDS codes with parameters $(q^2-1, k_l - 1,q^2 - k_l + 1)$.  The decoding for the reconstruction is performed from the code with the largest minimum distance to the code with the smallest minimum distance as in the data regeneration case. So here we have similar result as in equation~(\ref{eq:num_of_errs_msr_recovery}).
\end{proof}

\subsection{Recover Matrices $S_{l,t}, T_{l,t}$ from $q^2$ Storage Nodes}\label{sec:Rec-ST}

When there are bogus symbols ${\tilde{p}_{i,l,t}}'$ sent by the corrupted nodes for certain $l,t$, we can recover the matrices $S_{l,t}, T_{l,t}$ as follows:

For ${R}'$ in Algorithm~\ref{alg:recon_with_err_recovery}, we have $\Psi_{DC} \cdot \begin{bmatrix}
{S}'\\
{T}'
\end{bmatrix}  =  {R}'$, and
\begin{equation}
\Phi_{DC}  {S}'  \Phi_{DC}^T + \Delta_{DC}  \Phi_{DC}  {T}'  \Phi_{DC}^T =   {R}'\Phi_{DC}^T,
\end{equation}
where $\Psi_{DC} = [\Phi_{DC}, \Delta_{DC} \cdot \Phi_{DC}]$, $\Phi_{DC} = \begin{bmatrix}
\mu_{0}\\
\mu_{1}\\
\vdots \\
\mu_{q^2-1}
\end{bmatrix}$ and $\mu_{i}$ represents ${\mu}_{i,l}$ which is the $i^{th}$ row of the encoding matrix $\Phi_l$ in the proof of Theorem~\ref{th:msr}.

Let $C=\Phi_{DC}  {S}'  \Phi_{DC}^T$, $D=\Phi_{DC}  {T}'  \Phi_{DC}^T$, and ${\hat{R}}'={R}'\Phi_{DC}^T$,  then
\begin{equation}
C + \Delta_{DC}  D  =   {\hat{R}}'.
\end{equation}
Since $C,D$ are both symmetric, we can solve the non-diagonal elements of them as follows:
\begin{equation}
\label{eq:recon_recovery}
\left\{\begin{matrix}
C_{i,j} + \lambda_i \cdot D_{i,j}= {\hat{R}}'_{i,j}\\
C_{i,j} + \lambda_j \cdot D_{i,j}= {\hat{R}}'_{j,i}
\end{matrix}\right..
\end{equation}
Because matrices $C$ and $D$ have the same structure, here we only focus on $C$ (corresponding to ${S}'$). It is straightforward to see that if node $i$ is malicious and there are errors in the $i^{th}$ row of ${R}'$, there will be errors in the $i^{th}$ row of ${\hat{R}}'$. Furthermore, there will be errors in the $i^{th}$ row and $i^{th}$ column of $C$. Define ${S}'\Phi_{DC}^T={\hat{S}}'$, we have
\begin{equation}
\Phi_{DC} {\hat{S}}' = C.
\end{equation}
Here we can view each column of $C$ as a $(q^2-1, \alpha_l, q^2 - \alpha_l )$ MDS code because $\Phi_{DC}$ is a Vandermonde matrix. The length of the code is $q^2 - 1$ since the diagonal elements of $C$ is unknown.
Suppose node $j$ is uncorrupted. If the number of erasures $\sigma$ (corresponding to the previously detected corrupted nodes) and the number of the corrupted nodes $\tau$ that have not been detected satisfy:
\begin{equation}
\sigma + 2\tau + 1 \leq q^2 - \alpha_l,
\end{equation}
then the $j^{th}$ column of $C$ can be recovered and the error locations (corresponding to the corrupted nodes) can be pinpointed. The non-diagonal elements of $C$ can be recovered. So DC can reconstruct $S_{l,t}$ using the method similar to \cite{Rashmi}. For $T_{l,t}$, the recovering process is similar.

\section{Encoding H-MBR Code}\label{Sec:H-MBR}

In this section, we will analyze the H-MBR code based on the MBR point with $\beta = 1$. According to equation~(\ref{eq:MBR_tradeoff}), we have $d=\alpha$.

Let $\alpha_0,\cdots,\alpha_{q-1}$ be a strictly decreasing integer sequence satisfying $0 < \alpha_i \leq \kappa(i),0 \leq i \leq q-1$. The least common multiple of $\alpha_0,\cdots,\alpha_{q-1}$ is $A$. Let $k_0,\cdots,k_{q-1}$ be a integer sequence satisfying $0 < k_i \leq \alpha_i,0 \leq i \leq q-1$. Suppose the data contains $B=A \cdot \sum_{i=0}^{q-1}{(k_i (2\alpha_i - k_i + 1)/(2 \alpha_i))}$ message symbols from the finite field $\mathbb{F}_{q^2}$. In practice, if the size of the actual data is larger than $B$ symbols, we can fragment it into blocks of size $B$ and process each block individually.

We arrange the $B$ symbols into matrix $M$ as below:
\begin{equation}
M = \begin{bmatrix}
M_0\\ M_1\\ \vdots\\ M_{q-1}
\end{bmatrix},
\end{equation}
where
\begin{equation}
M_i = [M_{i,1}, M_{i,2}, \cdots, M_{i,A/\alpha_i} ]
\end{equation}
and
\begin{equation}
\label{eq:mbr_m_matrix}
M_{i,j} = \begin{bmatrix}
S_{i,j} & T_{i,j} \\
T_{i,j}^T & \bf{0}.
\end{bmatrix}
\end{equation}
$S_{i,j}, 0 \leq i \leq q-1, 1 \leq j \leq A/\alpha_i$ is a symmetric matrix of size $k_i \times k_i$ with the upper-triangular entries filled by data symbols. $T_{i,j}$ is a $k_i \times (\alpha_i - k_i)$ matrix. Thus $M_{i,j}$ contains $k_i(2\alpha_i - k_i + 1)/2$ symbols, $M_i$ contains $A \cdot k_i(2\alpha_i - k_i + 1)/(2\alpha_i)$ symbols and $M$ contains $B$ symbols.

For distributed storage, we encode $M$ using Algorithm~\ref{alg:enc_MBR}:

\begin{algorithms}Encoding H-MBR Code
\label{alg:enc_MBR}

\normalfont
\begin{namelist}{\textbf{Step n:}}
\item [\step 1] First we encode the data matrices $M$ defined above using a Hermitian code $\calH_m$ over $\mathbb{F}_{q^2}$ with parameters $\kappa(j)\ (0 \leq j \leq q-1)$ and $m\ (m \geq q^2 -1)$. The $q^3 \times A$ codeword matrix can be written as $Y = \calH_m(M)$.

\item [\step 2] Then we divide the codeword matrix $Y$ into $q^2$ submatrices $Y_0,\cdots,Y_{q^2-1}$ of the size $q \times A$ and store one submatrix in each of the $q^2$ storage nodes as shown in Fig.~\ref{fig:store_codeword}.
\end{namelist}
\end{algorithms}

%%old algorithm Encoding H-MBR Code
%\begin{algorithm}
%\caption{Encoding H-MBR Code}
%\label{alg:enc_MBR:old}
%\begin{algorithmic}[1]
%\STATE Encode the data matrix $M$ defined above using a Hermitian code $\calH_m$ over $\mathbb{F}_{q^2}$ with parameters $\kappa(j)\ (0 \leq j \leq q-1)$ and $m\ (m \geq q^2 -1)$.
%
%\STATE Divide the $q^3 \times A$ codeword matrix $Y=\calH_m(M)$ into $q^2$ submatrices $Y_0,\cdots,Y_{q^2-1}$ of the size $q \times A$. Store one submatrix in each of the $q^2$ storage nodes.
%\end{algorithmic}
%\end{algorithm}

Then we have the following theorem:

\begin{theorem}
\label{th:mbr}
By processing the data symbols using Algorithm~\ref{alg:enc_MBR}, we can achieve the MBR point in distributed storage.
\end{theorem}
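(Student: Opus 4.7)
The plan is to mirror the structure of the proof of Theorem~\ref{th:msr}, exploiting the Hermitian-code decomposition at each storage node to reduce the problem to a collection of underlying product-matrix MBR codes in the sense of~\cite{Rashmi}, and then to aggregate the per-block MBR guarantees into an overall MBR statement for the H-MBR code.

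First, I would analyze the structure of the codeword matrix $\calH_m(M)$ column by column, exactly as was done for $\calH_m(S)$ in Theorem~\ref{th:msr}. Each column of $\calH_m(M)$ is an independent Hermitian codeword of the form $\varrho(P_{i,l}) = \sum_{j=0}^{q-1} (y(P_{i,l}))^j f_j(\phi^{s_i})$, and because $P_{i,0},\ldots,P_{i,q-1}$ share the same $x$-coordinate, the $q$ stored symbols at node $i$ give rise to a Vandermonde system with coefficient matrix $B_i$. Left-multiplying by $B_i^{-1}$, node $i$ recovers the vector $[f_0(\phi^{s_i}),\ldots,f_{q-1}(\phi^{s_i})]^T$. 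Running this across all $A$ columns of $\calH_m(M)$ shows that for each $l$ and each block index $j$, the $(l+1)$-th row of $B_i^{-1}Y_i$ contains exactly the $i$-th row of $\Phi_l M_{l,j}$.

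Next I would invoke the product-matrix MBR construction of~\cite{Rashmi} on each block. By the definition in equation~(\ref{eq:mbr_m_matrix}), each $M_{l,j}$ has precisely the symmetric block form $\left[\begin{smallmatrix} S_{l,j} & T_{l,j} \\ T_{l,j}^T & \mathbf{0} \end{smallmatrix}\right]$ with $S_{l,j}$ symmetric $k_l\times k_l$ and $T_{l,j}$ of size $k_l\times(\alpha_l-k_l)$, carrying $B_{l,j} = k_l(2\alpha_l-k_l+1)/2$ message symbols. Thus, $\Phi_l M_{l,j}$ is a product-matrix MBR codeword over the parameters $(d,k,\alpha,\beta) = (\alpha_l, k_l, \alpha_l, 1)$, so this per-block code meets the MBR bound (\ref{eq:MBR_tradeoff}). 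Because the encoding matrices $\Phi_l$ are obtained directly from the stored symbols through the same Vandermonde inversion that is available at every node, regeneration and reconstruction at the MBR operating point can be carried out block-by-block.

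Finally, I would aggregate: summing $B_{l,j}$ over $l=0,\ldots,q-1$ and $j=1,\ldots,A/\alpha_l$ recovers $B = A\sum_{l=0}^{q-1} k_l(2\alpha_l-k_l+1)/(2\alpha_l)$, while the per-node storage adds up to $\alpha = A$ symbols and the per-helper regeneration bandwidth adds up to $\beta = A/\alpha_l$ symbols per sub-code. A direct substitution verifies that these aggregate parameters satisfy equation~(\ref{eq:MBR_tradeoff}) with $d=\alpha$, which concludes that Algorithm~\ref{alg:enc_MBR} attains the MBR point. The main obstacle I anticipate is bookkeeping: one must check carefully that the Vandermonde decomposition at node $i$ yields exactly one row of $\Phi_l M_{l,j}$ for each $l$ and each $j$ (so that the per-block MBR structure of~\cite{Rashmi} applies verbatim), and that the disjoint block summation produces the claimed total file size $B$ and the correct $(\alpha,\beta)$ pair without overcounting.
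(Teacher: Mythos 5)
Your proposal is correct and follows essentially the same route as the paper's own proof: it reuses the column-wise Hermitian/Vandermonde decomposition from Theorem~\ref{th:msr} to obtain $\Phi_l M_{l,j}$ row-by-row at the storage nodes, invokes the product-matrix MBR guarantee of~\cite{Rashmi} for each symmetric block $M_{l,j}$ with $(d,k,\alpha,\beta)=(\alpha_l,k_l,\alpha_l,1)$, and then aggregates over blocks. The explicit bookkeeping of the aggregate parameters $\alpha=A$, $\beta=A/\alpha_l$ matches the paper's remark following Algorithm~\ref{alg:reg_err_free_mbr}, so no gap remains.
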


\begin{proof}
Similar to the proof of Theorem~\ref{alg:enc}, we can get the following equation considering all the columns of $\calH_m(M)$:
\begin{equation}
\Phi_i \cdot M_{i,j} = G_{i,j},
\end{equation}
where $G_{i,j}=[\calG_i^{(1)},\cdots,\calG_i^{(\alpha_i)}]$, $0 \leq i \leq q-1$, $1 \leq j \leq A/\alpha_i$. $\calG_i^{(l)}$ corresponds to the $l^{th}$ column of the submatrix $M_{i,j}$ and each element of $\calG_i=[g_i(0),g_i(1),\cdots,g_i(\phi^{q^2-2})]^T$ can be derived from a distinct storage node. $\Phi_i$ is defined in equation~(\ref{eq:phi}).

Next we will study the optimality of the code in the sense of the MBR point. For $\Phi_i \cdot M_{i,j},\, 0 \leq i \leq q-1,1 \leq j \leq A/\alpha_i$,
$M_{i,j}$ is symmetric and satisfies the requirements for MBR point according to~\cite{Rashmi} with parameters $d= \alpha_i, k= k_i, \alpha = \alpha_i, \beta = 1, B= k_i(2\alpha_i - k_i + 1)/2$.
By encoding $M$ using $\calH_m(M)$ and distributing $Y_0,\cdots,Y_{q^2-1}$ into $q^2$ storage nodes, each row of the matrix $\Phi_i \cdot M_{i,j}, \, 0 \leq i \leq q-1,\, 1 \leq j \leq A/\alpha_i$, can be derived in a corresponding storage node. Because $\Phi_i \cdot M_{i,j}$ achieves the MBR point, data related to matrices $M_{i,j}, \, 0 \leq i \leq q-1,\, 1 \leq j \leq A/\alpha_i$, can be regenerated at the MBR point. Therefore,  Algorithm~\ref{alg:enc_MBR} can achieve the MBR point.
\end{proof}

\section{Regeneration of the H-MBR Code}\label{Sec:H-MBR-Regeneration}
In this section, we will first discuss regeneration of the H-MBR code in error-free network. Then we will discuss regeneration in hostile network.

\subsection{Regeneration in Error-free Network}
\label{Sec:H-MBR-Regeneration-error-free}

Let $\mathbf{w}_i=[g_0(\phi(^{s_i})), g_1(\phi(^{s_i})), \cdots, g_{q-1}(\phi(^{s_i}))]^T$, then
$\mathbf{w}_i =  B_i^{-1} \cdot \mathbf{{y}}_i
= [g_0(\phi^{s_i}), \cdots , g_{q-1}(\phi^{s_i})]^T,$
for every column $\mathbf{y}_i$ of $Y_i$.

The main idea of the regeneration algorithms is similar to that of the H-MSR code: regenerate $g_l(\phi(^{s_i})),0\leq l \leq q-1$, by downloading help symbols from $d_l = \alpha_l$ nodes, where $d_l$ is the regeneration parameter $d$ for $g_l(\phi(^{s_i}))$ in the H-MBR code regeneration.

Suppose node $z$ fails, we use Algorithm~\ref{alg:reg_err_free_mbr} to regenerate the exact H-MBR code symbols of node $z$. For convenience, we suppose $d_q = \alpha_q = 0$ and define
\begin{equation} \label{Eq:W-Definition}
\mathbf{W}_{i,j,l}=\begin{bmatrix}
\mu_{i,l}\\
\mu_{i+1,l}\\
\vdots \\
\mu_{j,l}
\end{bmatrix},
\end{equation}
where $\mathbf{\mu}_{t,l}, \, i\leq t\leq j$, is the $t^{th}$ row of $\Phi_l$.

Similar to the H-MSR code, replacement node $z'$ will send requests to helper nodes in the way same to that in Section~\ref{Sec:H-MSR-Regeneration-error-free}. Upon receiving the request integer $j$, helper node $i$ will calculate and send the help symbols similar to that of Section~\ref{Sec:H-MSR-Regeneration-error-free}.

When the replacement node $z'$ receives all the requested symbols, it can regenerate the symbols stored in the failed node $z$ using the following algorithm:
\begin{algorithms}$z'$ regenerates symbols of the failed node $z$
\label{alg:reg_err_free_mbr}

\normalfont
\begin{namelist}{\textbf{Step n:}}
\item [\step 1] For every $0 \leq l \leq q-1$ and $1 \leq t \leq A/\alpha_{l}$, we can calculate the regenerated symbols which are related to the help symbols $\tilde{p}_{i,l,t}$ from $d_l$ helper nodes: (Without loss of generality, we assume $0 \leq i \leq d_l-1$.)

%\begin{adjustwidth}{1.0em}{0pt}
\textbf{Step 1.1:} Let $\mathbf{{p}}=[\tilde{p}_{0,l,t}, \tilde{p}_{1,l,t}, \cdots, \tilde{p}_{d_l-1,l,t}]^T$, solve the equation: $\mathbf{W}_{0,d_l-1,l}\cdot \mathbf{{x}} =\mathbf{{p}}$.

\textbf{Step 1.2:} Since $\mathbf{{x}} = M_{l,t} \cdot \mu_{z,l}^T$ and $M_{l,t}$ is symmetric, we can calculate $\mathbf{\tilde{y}}_{z,l,t} = \mathbf{{x}}^T = \mu_{z,l} \cdot M_{l,t}$.
%\end{adjustwidth}

\item [\step 2] Let $\widetilde{Y}_z$ be a $q\times A$ matrix with the $l^{th}$ row defined as $[\mathbf{\tilde{y}}_{z,l,1}, \cdots, \mathbf{\tilde{y}}_{z,l,A/\alpha_l}], 0 \leq l \leq q-1$.

\item [\step 3] Calculate the regenerated symbols of the failed node $z$: $Y_{z'} = Y_z =  B_z \cdot \widetilde{Y}_z$.
\end{namelist}
\end{algorithms}

%%old algorithm $z'$ regenerates symbols of the failed node $z$
%\begin{algorithm}
%\caption{$z'$ regenerates symbols of the failed node $z$}
%\label{alg:reg_err_free_mbr:old}
%\begin{algorithmic}[1]
%\REQUIRE All the symbols requested in Algorithm~\ref{alg:reg_req_err_free}.
%\ENSURE Exact symbols stored in node $z$: $Y_z$.
%\FOR{$l = 0 \to q-1$}
%\FOR{$t = 1 \to A/\alpha_{l}$}
%\STATE
%Suppose $\tilde{p}_{i,l,t}$ is the response from $i^{th}$ helper node. Without loss of generality, we assume $0 \leq i \leq d_l-1$.
%Let $\mathbf{{p}}=[\tilde{p}_{0,l,t}, \tilde{p}_{1,l,t}, \cdots, \tilde{p}_{d_l-1,l,t}]^T$.
%
%\STATE
%Solve the equation: $\mathbf{W}_{0,d_l-1,l}\cdot \mathbf{{x}} =\mathbf{{p}}$.
%
%\STATE Since $\mathbf{{x}} = M_{l,t} \cdot \mu_{z,l}^T$ and $M_{l,t}$ is symmetric, we can calculate $\mathbf{\tilde{y}}_{z,l,t} = \mathbf{{x}}^T = \mu_{z,l} \cdot M_{l,t}$.
%
%\ENDFOR
%\ENDFOR
%
%\STATE
%Let $\widetilde{Y}_z$ be a $q\times A$ matrix with the $l^{th}$ row defined as $[\mathbf{\tilde{y}}_{z,l,1}, \cdots, \mathbf{\tilde{y}}_{z,l,A/\alpha_l}],\, 0 \leq l \leq q-1$.
%
%\STATE Calculate $Y_{z'} = Y_z =  B_z \cdot \widetilde{Y}_z$.
%\end{algorithmic}
%\end{algorithm}

For Algorithm~\ref{alg:reg_err_free_mbr} we can derive the equivalent storage parameters for each symbol block of size
$B_j = A k_j(2\alpha_j - k_j + 1)/(2\alpha_j): d = \alpha_j, k = k_j, \alpha = A, \beta = A/\alpha_j,\,  0 \leq j \leq q-1$ and equation~(\ref{eq:MBR_tradeoff}) of the MBR point holds for these parameters. Theorem \ref{th:mbr} guarantees that Algorithm~\ref{alg:reg_err_free_mbr} can achieve the MBR point for data regeneration of the H-MBR code.

\subsection{Regeneration in Hostile Network}
In hostile network, Algorithm \ref{alg:reg_err_free_mbr} may be unable to regenerate the failed node due to the possible bogus symbols received from the responses.
In fact,  even if the replacement node $z'$ can derive the symbol matrix $Y_{z'}$ using Algorithm~\ref{alg:reg_err_free_mbr}, it cannot verify the correctness of the result.

Similar to the H-MSR code, there are two modes for the helper nodes to regenerate the H-MBR code of a failed storage node in hostile network.
One mode is the detection mode, in which no error has been found in the symbols received from the helper nodes.
Once errors are detected, the recovery mode will be used to correct the errors and locate the malicious nodes.

\subsubsection{Detection Mode}

In the detection mode, the replacement node $z'$ will send requests in the way similar to that of the error-free network in Section~\ref{Sec:H-MBR-Regeneration-error-free}. The only difference is that when $j=q-1$, $z'$ sends requests to $d_{q-1} - d_q + 1$ nodes instead of $d_{q-1} - d_q$ nodes. Helper nodes will still use the way similar to that of the error-free network in Section~\ref{Sec:H-MBR-Regeneration-error-free} to send the help symbols. The regeneration algorithm is described in Algorithm~\ref{alg:reg_with_err_normal_mbr} with the detection probability characterized in Theorem~\ref{th:reg_with_err_normal_mbr}.

\begin{algorithms}[Detection mode] $z'$ regenerates symbols of the failed node $z$ in hostile network
\label{alg:reg_with_err_normal_mbr}

\normalfont
\begin{namelist}{\textbf{Step n:}}
\item [\step 1] For every $0 \leq l \leq q-1$ and $1 \leq t \leq A/\alpha_{l}$, we can calculate the regenerated symbols which are related to the help symbols ${\tilde{p}_{i,l,t}}'$ from $d_l$ helper nodes. ${\tilde{p}_{i,l,t}}' = \tilde{p}_{i,l,t} + e_{i,l,t}$ is the response from the $i^{th}$ helper node.  If $\tilde{p}_{i,l,t}$ has been modified by the malicious node $i$, we have $e_{i,l,t} \in{\mathbb{F}_{q^2}}\backslash \{0\}$. To detect whether there are errors, we will calculate symbols from two sets of helper nodes then compare the results. (Without loss of generality, we assume $0 \leq i \leq d_l$.)

%\begin{adjustwidth}{1.0em}{0pt}
\textbf{Step 1.1:} Let ${\mathbf{{p}}_1}' = [{\tilde{p}_{0,l,t}}', {\tilde{p}_{1,l,t}}', \cdots, {\tilde{p}_{d_l-1,l,t}}']^T$, where the symbols are collected from node $0$ to node $d_l-1$, solve the equation $\mathbf{W}_{0,d_l-1,l} \cdot \mathbf{{x}}_1  = {\mathbf{{p}}_1}'$.

\textbf{Step 1.2:} Let ${\mathbf{{p}}_2}' = [{\tilde{p}_{1,l,t}}', {\tilde{p}_{2,l,t}}', \cdots, {\tilde{p}_{d_l,l,t}}']^T$, where the symbols are collected from node $1$ to node $d_l$, solve the equation $\mathbf{W}_{1,d_l,l}\cdot \mathbf{{x}}_2 = {\mathbf{{p}}_2}'$.

\textbf{Step 1.3:} If $\mathbf{{x}}_1=\mathbf{{x}}_2$, compute $\mathbf{\tilde{y}}_{z,l,t}= \mu_{z,l} \cdot M_{l,t} $ as described in Algorithm \ref{alg:reg_err_free_mbr}. Otherwise, errors are detected in the help symbols. Exit the algorithm and switch to recovery regeneration mode.
%\end{adjustwidth}

\item [\step 2] No error has been detected for the calculating of the regeneration so far.
%If we can get to this step, it means that no error is detected during the calculating of the regeneration.
Let $\widetilde{Y}_z$ be a $q\times A$ matrix with the $l^{th}$ row defined as $[\mathbf{\tilde{y}}_{z,l,1}, \cdots, \mathbf{\tilde{y}}_{z,l,A/\alpha_l}], 0 \leq l \leq q-1$.

%For every $0 \leq l \leq q-1$, we stack the symbols $\mathbf{\tilde{y}}_{z,l,t}$, $1 \leq t \leq A/\alpha_l$, together to form a row $[\mathbf{\tilde{y}}_{z,l,1}, \cdots, \mathbf{\tilde{y}}_{z,l,A/\alpha_l}]$ as in Algorithm~\ref{alg:reg_err_free_mbr}.

%\item [\step 3] By putting all the rows calculated in \textbf{Step 2} in one matrix, we can get the $q \times A$ matrix $\widetilde{Y}_z$.

\item [\step 3] Calculate the regenerated symbols of the failed node $z$: $Y_{z'} = Y_z =  B_z \cdot \widetilde{Y}_z$.
\end{namelist}
\end{algorithms}

\begin{theorem}[H-MBR Regeneration--Detection Mode]
\label{th:reg_with_err_normal_mbr}
%For bogus symbols ${\tilde{p}_{i,l,t}}',\, l=0,\cdots,q-1, t=1,\cdots,A/\alpha_l$, received from the malicious nodes, the probability for the bogus symbols to be detected using Algorithm~\ref{alg:reg_with_err_normal_mbr} is at least $1-1/q^2$.

When the number of malicious nodes in the $d_l + 1$ helper nodes of Algorithm~\ref{alg:reg_with_err_normal_mbr} is less than $d_l + 1$, the probability for the bogus symbols sent from the malicious nodes to be detected is at least $1-1/q^2$.
\end{theorem}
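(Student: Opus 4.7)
The plan is to mirror the proof strategy used for Theorem~\ref{th:reg_with_err_normal} in the H-MSR detection mode, adapting it to the H-MBR setting where the per-node matrix is $M_{l,t}$ (rather than the pair $S_{l,t},T_{l,t}$), the encoding rows are $\mu_{i,l}$ (rather than $\nu_{i,l}$), and the regeneration parameter is $d_l=\alpha_l$ (rather than $2\alpha_l$). First I would observe that, exactly as in the H-MSR case, the linearity of Steps~1.1 and~1.2 of Algorithm~\ref{alg:reg_with_err_normal_mbr} gives
\[
\mathbf{x}_1 = \mathbf{x} + \hat{\mathbf{x}}_1,\qquad \mathbf{x}_2 = \mathbf{x} + \hat{\mathbf{x}}_2,
\]
where $\hat{\mathbf{x}}_1 = \mathbf{W}_{0,d_l-1,l}^{-1}[e_0,\dots,e_{d_l-1}]^T$ and $\hat{\mathbf{x}}_2 = \mathbf{W}_{1,d_l,l}^{-1}[e_1,\dots,e_{d_l}]^T$. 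Both inverses exist because $\Phi_l$ is a Vandermonde matrix with $\alpha_l=d_l$ columns, so any $d_l$ of its rows are linearly independent. The algorithm fails to detect precisely when $\hat{\mathbf{x}}_1 = \hat{\mathbf{x}}_2$, so it suffices to upper-bound the probability of this event.

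Next I would prove the H-MBR analogue of Lemma~\ref{lm:reg_with_err_normal}. Multiplying through by $\mathbf{W}_{0,d_l-1,l}$, the identity $\hat{\mathbf{x}}_1=\hat{\mathbf{x}}_2$ becomes
\[
[e_0,e_1,\dots,e_{d_l-1}]^T \;=\; \mathbf{W}_{0,d_l-1,l}\,\mathbf{W}_{1,d_l,l}^{-1}\,[e_1,\dots,e_{d_l}]^T.
\]
Writing $\mathbf{W}_{1,d_l,l}^{-1}=[\eta_0,\dots,\eta_{d_l-1}]$, the duality relation $\mu_{r,l}\,\eta_s = 1$ when $r=s+1$ and $0$ otherwise (for $1\le r\le d_l$, $0\le s\le d_l-1$) makes the last $d_l-1$ coordinates match automatically, reducing the obstruction to the single scalar equation
\[
e_0 \;=\; \mu_{0,l}\,[\eta_0,\dots,\eta_{d_l-1}]\,[e_1,\dots,e_{d_l}]^T.
\]
Since any $d_l$ of the rows $\mu_{0,l},\mu_{1,l},\dots,\mu_{d_l,l}$ of the Vandermonde matrix $\Phi_l$ are linearly independent, we may expand $\mu_{0,l}=\sum_{r=1}^{d_l}\zeta_r\mu_{r,l}$ with every $\zeta_r\neq 0$, which collapses the condition to
\[
e_0 \;=\; \sum_{r=1}^{d_l}\zeta_r\,e_r.
\]

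Finally I would argue the probability bound exactly as in Lemma~\ref{lm:reg_with_err_normal}. If only one of $e_0,\dots,e_{d_l}$ is nonzero, the equation cannot hold because all $\zeta_r$ are nonzero, so the detection probability is $1$. If there are multiple malicious nodes but strictly fewer than $d_l+1$ of them, then the colluders do not know the secret coefficients $\zeta_r$ (which are determined by the full set of encoding rows $\mu_{0,l},\dots,\mu_{d_l,l}$), and by the assumption that DC keeps the encoding matrix secret they can satisfy the single scalar equation over $\mathbb{F}_{q^2}$ only with probability $1/q^2$. Combining the two cases, the probability that $\hat{\mathbf{x}}_1=\hat{\mathbf{x}}_2$ is at most $1/q^2$, and hence the detection probability is at least $1-1/q^2$. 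The main obstacle is purely bookkeeping: one has to verify carefully that the Vandermonde property of $\Phi_l$ alone (without the $\Delta$ extension used in H-MSR) still yields the required ``any $d_l$ rows linearly independent'' condition, which it does because $d_l=\alpha_l$ equals the number of columns of $\Phi_l$; no new ideas beyond those in Lemma~\ref{lm:reg_with_err_normal} are needed.
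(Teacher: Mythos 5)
Your proposal is correct and takes essentially the same route as the paper: the paper's proof also writes $\mathbf{x}_1=\mathbf{x}+\hat{\mathbf{x}}_1$, $\mathbf{x}_2=\mathbf{x}+\hat{\mathbf{x}}_2$ and then invokes the argument of Lemma~\ref{lm:reg_with_err_normal}, noting that $\mathbf{W}_{0,d_l-1,l}$ and $\mathbf{W}_{1,d_l,l}$ are full rank and that any $d_l$ of the rows $\mu_{0,l},\dots,\mu_{d_l,l}$ are linearly independent because $\Phi_l$ is Vandermonde with $d_l=\alpha_l$ columns. You merely spell out explicitly the reduction to $e_0=\sum_{r=1}^{d_l}\zeta_r e_r$ that the paper cites by reference, which is a faithful adaptation rather than a different argument.
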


\begin{proof}
Similar to the proof of Theorem~\ref{th:reg_with_err_normal}, we can write
\begin{eqnarray}
\mathbf{x}_1  &=& \mathbf{x} +
\mathbf{W}_{0,d_l-1,l}^{-1}  \cdot
[e_0, \cdots,e_{d_l-1} ]^T
  =\mathbf{x} +  \hat{\mathbf{x}}_1, \\
\mathbf{x}_2  &=& \mathbf{x} +
\mathbf{W}_{1,d_l,l}^{-1}  \cdot
[e_1, \cdots,e_{d_l} ]^T
  =\mathbf{x} +  \hat{\mathbf{x}}_2.
\end{eqnarray}

Since $\mathbf{W}_{0,d_l-1,l},\mathbf{W}_{1,d_l,l}$ are full rank matrices like the matrices $\mathbf{V}_{0,d_l-1,l},\mathbf{V}_{1,d_l,l}$ in the proof of Lemma~\ref{lm:reg_with_err_normal} and any $d_l$ vectors out of $\mu_{0,l},\mu_{1,l},\cdots,\mu_{d_l,l}$ are linearly independent, the rest of this proof is similar to that of Lemma~\ref{lm:reg_with_err_normal}. When the number of malicious nodes in the $d_l + 1$ helper nodes is less than $d_l + 1$, the probability for $\hat{\mathbf{x}}_1 = \hat{\mathbf{x}}_2$ is at most $1/q^2$. Therefore, the detection probability is at least $1-1/q^2$.
\end{proof}

\subsubsection{Recovery Mode}
Once the replacement node $z'$ detects errors using Algorithm~\ref{alg:reg_with_err_normal_mbr}, it will send integer $j = q-1$ to all the other $q^2 - 1$ nodes in the network requesting help symbols. Helper nodes will still use the way similar to that of the error-free network in Section~\ref{Sec:H-MBR-Regeneration-error-free} to send the help symbols. $z'$ can regenerate symbols using Algorithm~\ref{alg:reg_with_err_recovery_mbr}.

\begin{algorithms}[Recovery mode] $z'$ regenerates symbols of the failed node $z$ in hostile network
\label{alg:reg_with_err_recovery_mbr}

\normalfont
\begin{namelist}{\textbf{Step n:}}
\item [\step 1] For every $q-1 \geq l \geq 0$ in descending order and $1 \leq t \leq A/\alpha_{l}$ in ascending order, we can regenerate the symbols when the errors in the received help symbols ${\tilde{p}_{i,l,t}}'$ from $q^2 - 1$ helper nodes can be corrected. Without loss of generality, we assume $0 \leq i \leq q^2-2$.

%\begin{adjustwidth}{1.0em}{0pt}
\textbf{Step 1.1:} Let $\mathbf{{p}}'=[{\tilde{p}_{0,l,t}}', {\tilde{p}_{1,l,t}}', \cdots,  {\tilde{p}_{q^2-2,l,t}}']^T$. Since $\mathbf{W}_{0,q^2-2,l} \cdot \mathbf{{x}}  = \mathbf{p}'$,  $\mathbf{p}'$ can be viewed as an MDS code with parameters $(q^2-1,d_l,q^2-d_l)$.

\textbf{Step 1.2:} Substitute ${\tilde{p}_{i,l,t}}'$ in $\mathbf{{p}}'$ with the symbol $\otimes$ representing an erasure if node $i$ has been detected to be corrupted in the previous loops (previous values of $l,t$).

\textbf{Step 1.3:} If the number of erasures in $\mathbf{{p}}'$ is larger than $\min \{q^2- d_l - 1, \lfloor (q^2- d_{q-1} - 1)/2 \rfloor\}$, then the number of errors have exceeded the error correction capability. We will flag the decoding failure and exit the algorithm.

\textbf{Step 1.4:} Since the number of errors is within the error correction capability of the MDS code, decode $\mathbf{{p}}'$ to $\mathbf{p}_{cw}'$ and solve $\mathbf{{x}}$.

\textbf{Step 1.5:} If the $i^{th}$ position symbols of $\mathbf{{p}}'_{cw}$ and $\mathbf{{p}}'$ are different, mark node $i$ as corrupted.

\textbf{Step 1.6:} Compute $\mathbf{\tilde{y}}_{z,l,t}= \mu_{z,l} \cdot M_{l,t} $ as described in Algorithm \ref{alg:reg_err_free_mbr}.
%\end{adjustwidth}

\item [\step 2]
Let $\widetilde{Y}_z$ be a $q\times A$ matrix with the $l^{th}$ row defined as $[\mathbf{\tilde{y}}_{z,l,1}, \cdots, \mathbf{\tilde{y}}_{z,l,A/\alpha_l}], 0 \leq l \leq q-1$.

%For every $0 \leq l \leq q-1$, we stack the symbols $\mathbf{\tilde{y}}_{z,l,t}$, $1 \leq t \leq A/\alpha_l$, together to form a row $[\mathbf{\tilde{y}}_{z,l,1}, \cdots, \mathbf{\tilde{y}}_{z,l,A/\alpha_l}]$ as in Algorithm~\ref{alg:reg_err_free_mbr}.

%\item [\step 3] By putting all the rows calculated in \textbf{Step 2} in one matrix, we can get the $q \times A$ matrix $\widetilde{Y}_z$.

\item [\step 3] Calculate the regenerated symbols of the failed node $z$: $Y_{z'} = Y_z =  B_z \cdot \widetilde{Y}_z$.
\end{namelist}
\end{algorithms}

For data regeneration described in Algorithm~\ref{alg:reg_with_err_recovery_mbr}, since the structures of the underlying Hermitian codes of H-MSR code and H-MBR code with the same code rates are the same, we have similar result as that in Theorem~\ref{thm:reg_with_err_recovery}.
\begin{theorem}[H-MBR Regeneration--Recovery Mode]\label{thm:reg_with_err_recovery_mbr}
For data regeneration, the number of errors that the H-MBR code can correct is 
\begin{equation}\label{eq:num_of_errs_msr_recovery_mbr}
\tau_{H-MBR}  =  q \cdot \lfloor (q^2 - {d_{q-1}} -1)/2  \rfloor.
\end{equation}
\end{theorem}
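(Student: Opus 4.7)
The plan is to mirror the argument used for Theorem~\ref{thm:reg_with_err_recovery} (H-MSR Recovery Mode), since Algorithm~\ref{alg:reg_with_err_recovery_mbr} has the same layered MDS structure as Algorithm~\ref{alg:reg_with_err_recovery}, only with $d_l=\alpha_l$ in the H-MBR setting rather than $d_l=2\alpha_l$.

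First I would observe that in Step~1.1 of Algorithm~\ref{alg:reg_with_err_recovery_mbr}, the received vector $\mathbf{p}'$ satisfies $\mathbf{W}_{0,q^2-2,l}\cdot\mathbf{x}=\mathbf{p}'$ where $\mathbf{W}_{0,q^2-2,l}$ is a Vandermonde-type matrix of rank $d_l$ built from rows of $\Phi_l$. This immediately identifies $\mathbf{p}'$ as a codeword of an MDS code with parameters $(q^2-1,\,d_l,\,q^2-d_l)$ for each $l=0,\dots,q-1$. Thus the H-MBR code under regeneration behaves as $q$ nested MDS codes, exactly as in the H-MSR case.

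Next I would exploit the monotonicity of $\alpha_l$ (hence of $d_l=\alpha_l$): since the sequence $\alpha_0,\dots,\alpha_{q-1}$ is strictly decreasing by assumption, the code with smallest dimension $d_{q-1}$ has the largest minimum distance $q^2-d_{q-1}$ and can correct up to $\tau_{q-1}=\lfloor(q^2-d_{q-1}-1)/2\rfloor$ errors. Because Algorithm~\ref{alg:reg_with_err_recovery_mbr} processes $l$ in descending order, this strongest code is decoded first. Any node $i$ whose symbol is flagged inconsistent in Step~1.5 is then marked corrupted and treated as an erasure in all subsequent iterations (Step~1.2). For $l<q-1$ the code $(q^2-1,d_l,q^2-d_l)$ has smaller minimum distance, but by that stage all up to $\tau_{q-1}$ corrupted nodes have been localized and converted to erasures; hence Step~1.3's bound $\min\{q^2-d_l-1,\lfloor(q^2-d_{q-1}-1)/2\rfloor\}$ is met as long as the total number of corrupted nodes does not exceed $\tau_{q-1}$, and every one of the $q$ layers can be decoded successfully.

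Summing the contribution of each of the $q$ layers, the total number of errors correctable across a single regeneration is $q\cdot\tau_{q-1}=q\cdot\lfloor(q^2-d_{q-1}-1)/2\rfloor$, giving equation~(\ref{eq:num_of_errs_msr_recovery_mbr}). The only non-routine point is justifying that error localization transfers across layers: this works because the indexing of helper nodes is the same in every layer, so a node identified as corrupted while decoding layer $q-1$ remains the same physical storage node when layers $l<q-1$ are processed, and can therefore legitimately be declared an erasure. Apart from this bookkeeping observation, the proof is structurally identical to that of Theorem~\ref{thm:reg_with_err_recovery}, so I would state the claim and refer to that proof for the remaining details.
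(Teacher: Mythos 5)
Your proposal is correct and follows essentially the same route as the paper, which itself disposes of this theorem by noting that the H-MBR recovery algorithm has the same layered structure as the H-MSR case and invoking the proof of Theorem~\ref{thm:reg_with_err_recovery}: view the code as $q$ MDS codes $(q^2-1,d_l,q^2-d_l)$, decode the $l=q-1$ layer (largest minimum distance) first, propagate detected corrupt nodes as erasures, and sum $q\cdot\lfloor(q^2-d_{q-1}-1)/2\rfloor$. Your added remark about error localization carrying over between layers is just an explicit statement of what Step~1.2 of Algorithm~\ref{alg:reg_with_err_recovery_mbr} already encodes, so no substantive difference remains.
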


\section{Reconstruction of the H-MBR code}
\label{Sec:H-MBR-Reconstruct}

In this section, we will first discuss reconstruction of the H-MBR code in error-free network. Then we will discuss reconstruction when there are corrupted nodes in the network.

\subsection{Reconstruction in Error-free Network}
\label{Sec:H-MBR-Reconstruct-error-free}

The main idea of the reconstruction algorithms is similar to that of the H-MSR code: reconstruct $g_l(\phi(^{s_i})),0\leq l \leq q-1$, by downloading help symbols from $k_l$ nodes, where $k_l$ represents the reconstruction parameter $k$ for $g_l(\phi(^{s_i}))$ in the H-MBR code.
%Inspired by the work of~\cite{Rashmi},
We use Algorithm~\ref{alg:recon_err_free_mbr} in the network for the data collector DC to reconstruct the original file. For convenience, we suppose $k_q=0$.

Similar to the H-MSR code described in Section~\ref{Sec:H-MSR-Reconstruct-error-free}, DC will send requests to storage nodes. Upon receiving the request integer $j$, node $i$ will calculate and send symbols. When DC receives all the requested symbols, it can reconstruct the original file using the following algorithm:

\begin{algorithms}DC reconstructs the original file
\label{alg:recon_err_free_mbr}

\normalfont
\begin{namelist}{\textbf{Step n:}}
\item [\step 1] For every  $0 \leq l \leq q-1$, divide the symbol vector $\mathbf{\tilde{y}}_{i,l}$ into $A/\alpha_{l}$ equal row vectors: $[\mathbf{\tilde{y}}_{i,l,1},\mathbf{\tilde{y}}_{i,l,2}$, $\cdots,\mathbf{\tilde{y}}_{i,l,A/\alpha_{l}}]$.  ( $\mathbf{\tilde{y}}_{i,l}$ is the response from the $i^{th}$ node and we assume $0 \leq i \leq k_l-1$ without loss of generality.)

\item [\step 2] For every $0 \leq l \leq q-1$ and $1 \leq t \leq A/\alpha_{l}$, DC reconstructs the matrices related to the original file:

%\begin{adjustwidth}{1.0em}{0pt}
\textbf{Step 2.1:} Let $R=[\mathbf{\tilde{y}}_{0,l,t}^T, \mathbf{\tilde{y}}_{1,l,t}^T, \cdots, \mathbf{\tilde{y}}_{k_l-1,l,t}^T]^T$, we have the equation: $\mathbf{W}_{0,k_l-1,l} \cdot M_{l,t}  =  R$ according to the encoding algorithm.

\textbf{Step 2.2:} DC reconstructs $M_{l,t}$ using techniques similar to that of~\cite{Rashmi}.
%\end{adjustwidth}

\item [\step 3] DC reconstructs the original file from all the matrices $M_{l,t}$, $0 \leq l \leq q-1$ and $1 \leq t \leq A/\alpha_{l}$.
\end{namelist}
\end{algorithms}

%%old algorithm DC reconstructs the original file
%\begin{algorithm}
%\caption{DC reconstructs the original file}
%\label{alg:recon_err_free_mbr:old}
%\begin{algorithmic}[1]
%\REQUIRE All the symbols requested in Algorithm~\ref{alg:recon_req_err_free}.
%\ENSURE The original file.
%\FOR{$l = 0 \to q-1$}
%\STATE Divide $\mathbf{\tilde{y}}_{i,l}$ into $A/\alpha_{l}$ equal row vectors: $[\mathbf{\tilde{y}}_{i,l,1},\mathbf{\tilde{y}}_{i,l,2},\cdots,\mathbf{\tilde{y}}_{i,l,A/\alpha_{l}}]$.
%\FOR{$t = 1 \to A/\alpha_{l}$}
%
%\STATE
%Suppose the symbol vector $\mathbf{\tilde{y}}_{i,l,t}$ is the response from the $i^{th}$ helper node, without loss of generality, we assume $0 \leq i \leq k_l-1$.
%Let $R=[\mathbf{\tilde{y}}_{0,l,t}^T, \mathbf{\tilde{y}}_{1,l,t}^T, \cdots, \mathbf{\tilde{y}}_{k_l-1,l,t}^T]^T$.
%
%\STATE
%$\mathbf{W}_{0,k_l-1,l} \cdot M_{l,t}  =  R$.
%
%\STATE DC reconstructs $M_{l,t}$ using techniques similar to \cite{Rashmi}.
%
%\ENDFOR
%
%\ENDFOR
%\STATE DC reconstructs the original file from $M$.
%\end{algorithmic}
%\end{algorithm}

\subsection{Reconstruction in Hostile Network}

Similar to the H-MSR code, the reconstruction algorithms for H-MBR code in error-free network do not work in hostile network. Even if the data collecter can calculate the symbol matrices $M$ using Algorithm~\ref{alg:recon_err_free_mbr}, it cannot verify whether the result is correct or not. There are two modes for the original file to be reconstructed in hostile network. One mode is the detection mode, in which no error has been found in the symbols received from the storage nodes. Once errors are detected in the detection mode, the recovery mode will be used to correct the errors and locate the malicious nodes.

\subsubsection{Detection Mode}

In the detection mode, DC will send requests in the way similar to that of the error-free network in Section~\ref{Sec:H-MBR-Reconstruct-error-free}. The only difference is that when $j=q-1$, DC will send requests to $k_{q-1} - k_q + 1$ nodes instead of $k_{q-1} - k_q$ nodes. Storage nodes will send symbols similar to that of the error-free network in Section~\ref{Sec:H-MBR-Reconstruct-error-free}. The reconstruction algorithm is described in Algorithm~\ref{alg:recon_with_err_normal_mbr} with the detection probability described in Theorem~\ref{th:recon_with_err_normal_mbr}.

\begin{algorithms}[Detection mode] DC reconstructs the original file in hostile network
\label{alg:recon_with_err_normal_mbr}

\normalfont
\begin{namelist}{\textbf{Step n:}}
\item [\step 1] For every  $0 \leq l \leq q-1$ , we can divide the symbol vector ${\mathbf{\tilde{y}}_{i,l}}'$ into $A/\alpha_{l}$ equal row vectors: $[{\mathbf{\tilde{y}}_{i,l,1}}',{\mathbf{\tilde{y}}_{i,l,2}}',\cdots,{\mathbf{\tilde{y}}_{i,l,A/\alpha_{l}}}']$. ${\mathbf{\tilde{y}}_{i,l}}'= \mathbf{\tilde{y}}_{i,l} + \mathbf{ e}_{i,l}$ is the response from the $i^{th}$ storage node. If $\mathbf{\tilde{y}}_{i,l}$ has been modified by the malicious node $i$, we have $\mathbf{e}_{i,l} \in (\mathbb{F}_{q^2})^{A} \backslash \{\mathbf{0}\}$. To detect whether there are errors, we will reconstruct the original file from two sets of storage nodes then compare the results. (Without loss of generality, we assume $0 \leq i \leq k_l$.)

\item [\step 2] For every $0 \leq l \leq q-1$ and $1 \leq t \leq A/\alpha_{l}$, DC can reconstruct the matrices related to the original file:

%\begin{adjustwidth}{1.0em}{0pt}
\textbf{Step 2.1:} Let
 ${R}'=[{\mathbf{\tilde{y}}_{0,l,t}}'^T, {\mathbf{\tilde{y}}_{1,l,t}}'^T, \cdots, {\mathbf{\tilde{y}}_{k_l,l,t}}'^T]^T$.

\textbf{Step 2.2:} Let
${R_1}' = [{\mathbf{\tilde{y}}_{0,l,t}}'^T, \cdots, {\mathbf{\tilde{y}}_{k_l-1,l,t}}'^T]^T$, which are the symbols collected from node $0$ to node $k_l-1$, then we have $\mathbf{W}_{0,k_l-1,l}\cdot M_1 ={R_1}'$. Solve $M_1$ using the method same to algorithm~\ref{alg:recon_err_free_mbr}.

\textbf{Step 2.3:} Let
${R_2}' = [{\mathbf{\tilde{y}}_{1,l,t}}'^T, \cdots, {\mathbf{\tilde{y}}_{k_l,l,t}}'^T]^T$, which are the symbols collected from node $1$ to node $k_l$, then we have $\mathbf{W}_{1,k_l,l} \cdot M_2= {R_2}'$. Solve $M_2$ using the method same to algorithm~\ref{alg:recon_err_free_mbr}.

\textbf{Step 2.4: } Compare $M_1$ with $M_2$. If they are the same, let $M_{l,t} = M_1$. Otherwise, errors are detected in the received symbols. Exit the algorithm and switch to recovery reconstruction mode.
%\end{adjustwidth}

\item [\step 3] No error has been detected for the calculating of the reconstruction so far.
%If we can get to this step, it means that no error is detected during the reconstruction. 
So DC can reconstruct the original file from all the matrices $M_{l,t}$, $0 \leq l \leq q-1$ and $1 \leq t \leq A/\alpha_{l}$.
\end{namelist}
\end{algorithms}

\begin{theorem}[H-MBR Reconstruction--Detection Mode]
\label{th:recon_with_err_normal_mbr}
%For bogus symbols ${\mathbf{\tilde{y}}_{i,l,t}}'$ received from the malicious nodes, $l=0,\cdots,q-1, t=1,\cdots,A/\alpha_l$, the probability for the bogus symbols to be detected using Algorithm~\ref{alg:recon_with_err_normal_mbr} is at least $1-1/q^{2\alpha_l}$.

When the number of malicious nodes in the $k_l + 1$ nodes of Algorithm~\ref{alg:recon_with_err_normal_mbr} is less than $k_l + 1$, the probability for the bogus symbols sent from the malicious nodes to be detected is at least $1-1/q^{2\alpha_l}$.
\end{theorem}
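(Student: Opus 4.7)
The plan is to adapt the proof of Theorem~\ref{th:reg_with_err_normal_mbr} (H-MBR regeneration detection) by accounting for the fact that each storage node in reconstruction returns a length-$\alpha_l$ row vector rather than a single symbol, which is what lifts the detection bound from $1-1/q^2$ to $1-1/q^{2\alpha_l}$.

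First I would express the two candidate reconstructed matrices in Algorithm~\ref{alg:recon_with_err_normal_mbr} as $M_1 = M_{l,t} + \mathbf{W}_{0,k_l-1,l}^{-1}W_1$ and $M_2 = M_{l,t} + \mathbf{W}_{1,k_l,l}^{-1}W_2$, where $W_1$ (respectively $W_2$) is the $k_l\times \alpha_l$ matrix whose rows are the error vectors $\mathbf{e}_{i,l,t}$ from nodes $0,\ldots,k_l-1$ (respectively $1,\ldots,k_l$). Detection failure then corresponds exactly to $\mathbf{W}_{0,k_l-1,l}^{-1}W_1 = \mathbf{W}_{1,k_l,l}^{-1}W_2$.

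Next I would break this matrix identity into $\alpha_l$ scalar conditions by examining it column by column. Writing $e_{i,j}$ for the $j$-th entry of $\mathbf{e}_{i,l,t}$, the $j$-th column of the identity is the scalar comparison analyzed in the proof of Theorem~\ref{th:reg_with_err_normal_mbr} via Lemma~\ref{lm:reg_with_err_normal}: expanding $\mu_{0,l}$ as the unique combination $\sum_{r=1}^{k_l}\zeta_r\mu_{r,l}$ of the remaining rows of $\Phi_l$, where every $\zeta_r\neq 0$ because $\Phi_l$ is Vandermonde and any $k_l$ of its rows are linearly independent, the column-$j$ condition collapses to the single equation $e_{0,j} = \sum_{r=1}^{k_l}\zeta_r e_{r,j}$ in $\mathbb{F}_{q^2}$. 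Because fewer than $k_l+1$ of the nodes in the set are malicious, the attackers cannot determine the coefficients $\zeta_r$, so each column-wise condition is satisfied with probability at most $1/q^2$.

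The main obstacle, and the only step that substantively differs from Theorem~\ref{th:reg_with_err_normal_mbr}, is justifying that the $\alpha_l$ column events are independent. The coordinates $e_{i,j}$ for distinct $j$ are chosen independently by the attackers since the protocol induces no cross-column coupling of a node's outgoing symbols, while the shared coefficient tuple $(\zeta_1,\ldots,\zeta_{k_l})$ remains unknown to them and must be guessed afresh for every column. Multiplying the per-column failure probabilities then yields an overall failure probability of at most $(1/q^2)^{\alpha_l}=1/q^{2\alpha_l}$, giving the claimed detection probability of at least $1-1/q^{2\alpha_l}$.
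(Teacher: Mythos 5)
Your proposal founders on a structural point specific to the MBR reconstruction setting: $\mathbf{W}_{0,k_l-1,l}$ and $\mathbf{W}_{1,k_l,l}$ are $k_l\times\alpha_l$ matrices (their rows are rows of $\Phi_l$ defined in equation~(\ref{eq:phi}), which has $\alpha_l$ columns), so unless $k_l=\alpha_l$ they are rectangular and have no inverse; the identity $M_1=M_{l,t}+\mathbf{W}_{0,k_l-1,l}^{-1}W_1$ on which your whole reduction rests is therefore not available. The decoder in Algorithm~\ref{alg:recon_with_err_normal_mbr} does not invert $\mathbf{W}$: it exploits the block structure of $M_{l,t}$ in equation~(\ref{eq:mbr_m_matrix}), splitting $\mathbf{W}_{0,k_l-1,l}=[\Omega_{DC1},\Delta_{DC1}]$ with $\Omega_{DC1}$ a $k_l\times k_l$ Vandermonde block, first solving $T_1=\Omega_{DC1}^{-1}(R_{1,2}+Q_{1,2})=T+\hat T_1$ from the last $\alpha_l-k_l$ columns, and then $S_1=S+\hat S_1$ with $\hat S_1=\Omega_{DC1}^{-1}(Q_{1,1}-\Delta_{DC1}\hat T_1^T)$, where the $S$-offset is coupled to the $T$-offset. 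For the same reason your key expansion $\mu_{0,l}=\sum_{r=1}^{k_l}\zeta_r\mu_{r,l}$ does not exist when $k_l<\alpha_l$: any $k_l+1\le\alpha_l$ full rows of the Vandermonde matrix $\Phi_l$ are linearly independent, so $\mu_{0,l}$ is not in the span of the other $k_l$ rows. The coefficients $\zeta_r$ that make the Lemma~\ref{lm:reg_with_err_normal}-type argument work live in the truncated, $k_l$-dimensional world of the rows of $\Omega_{DC1},\Omega_{DC2}$, not of $\Phi_l$ itself.

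The paper's proof carries out exactly this structured analysis: it applies the Lemma~\ref{lm:reg_with_err_normal} argument column by column to $\Omega_{DC1}^{-1}\Pi_{1,j}$ versus $\Omega_{DC2}^{-1}\Pi_{2,j}$, obtaining failure probability at most $(1/q^2)^{\alpha_l-k_l}$ for $\hat T_1=\hat T_2$ and, by the same procedure on the $S$-part, at most $(1/q^2)^{k_l}$ for $\hat S_1=\hat S_2$, which combine to the claimed $1/q^{2\alpha_l}$. Your final count of $\alpha_l$ independent scalar conditions, each holding with probability $1/q^2$, lands on the same number, but only by accident of bookkeeping; as written, the derivation of those conditions (matrix inversion of a non-square $\mathbf{W}$, expansion of a full row of $\Phi_l$ in too few rows) does not go through, and it also misses the $\hat S$--$\hat T$ coupling that the correct decomposition must handle. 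To repair the argument you would essentially have to redo it along the paper's $[\Omega,\Delta]$ split, at which point the per-column independence issue you single out becomes the same (and no harder) step as in the paper.
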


\begin{proof}
For convenience, we write $\mathbf{e}_{i,l,t}$ as $\mathbf{e}_i$ in the proof. $\mathbf{e}_i \in {[\mathbb{F}_{q^2}]^{\alpha_l}}$ for $0 \leq i \leq k_l$. In Algorithm~\ref{alg:recon_with_err_normal_mbr}, ${R_1}'= R_1 + Q_1$
where $Q_1 = \begin{bmatrix}
\mathbf{e}_{0}\\
\mathbf{e}_{1}\\
\vdots \\
\mathbf{e}_{k_l-1}
\end{bmatrix}$.
Let $\mathbf{W}_{0,k_l-1,l} = [\Omega_{DC1},\Delta_{DC1}]$, $R_1 = [R_{1,1}, R_{1,2}]$ and $Q_1 = [Q_{1,1}, Q_{1,2}]$, where $\Omega_{DC1}$, $R_{1,1}$, $Q_{1,1}$ are $k_l \times k_l$ submatrices and $\Delta_{DC1}$, $R_{1,2}$, $Q_{1,2}$ are $k_l \times (\alpha_l - k_l)$ submatrices.

According to equation~(\ref{eq:mbr_m_matrix}), we have
\begin{eqnarray}
\mathbf{W}_{0,k_l-1,l}\cdot M_1 &=& [\Omega_{DC1}S_1 + \Delta_{DC1}T_1^T, \Omega_{DC1}T_1] \nonumber \\
&=& [R_{1,1} + Q_{1,1}, R_{1,2} + Q_{1,2}].
\end{eqnarray}
Since $\Omega_{DC1}$ is a submatrix of a Vandermonde matrix, it is a full rank matrix. So we have
\begin{eqnarray}
T_1 &=& \Omega_{DC1}^{-1} R_{1,2} + \Omega_{DC1}^{-1} Q_{1,2} = T + \hat{T}_1, \\
S_1 &=& \Omega_{DC1}^{-1} (R_{1,1} + Q_{1,1} - \Delta_{DC1}T_1^T) \nonumber \\
   &=& \Omega_{DC1}^{-1} (R_{1,1} - \Delta_{DC1}T^T) +
   \Omega_{DC1}^{-1} (Q_{1,1} - \Delta_{DC1} \hat{T}_1^T ) \nonumber \\
   &=& S + \Omega_{DC1}^{-1} (Q_{1,1} - \Delta_{DC1} \hat{T}_1^T) = S + \hat{S}_1.
\end{eqnarray}

For ${R_2}'= R_2 + Q_2$ in Algorithm~\ref{alg:recon_with_err_normal_mbr},
Let $R_2 = [R_{2,1}, R_{2,2}]$, $Q_2 = [Q_{2,1}, Q_{2,2}]$ and $\mathbf{W}_{1,k_l,l} = [\Omega_{DC2},\Delta_{DC2}]$, where $R_{2,1}$, $Q_{2,1}$, $\Omega_{DC2}$ are $k_l \times k_l$ submatrices and $R_{2,2}$, $Q_{2,2}$, $\Delta_{DC2}$ are $k_l  \times  (\alpha_l - k_l)$ submatrices. Similarly, we have
\begin{eqnarray}
T_2 &=& \Omega_{DC2}^{-1} R_{2,2} + \Omega_{DC2}^{-1} Q_{2,2} = T + \hat{T}_2, \\
S_2 &=& S + \Omega_{DC2}^{-1} (Q_{2,1} - \Delta_{DC2} \hat{T}_2^T) = S + \hat{S}_2.
\end{eqnarray}
If $\hat{T}_1 = \hat{T}_2$ and $\hat{S}_1 = \hat{S}_2$, Algorithm~\ref{alg:recon_with_err_normal_mbr} will fail to detect the bogus symbols. So we will focus on $\hat{T}_1,\hat{T}_2$ and $\hat{S}_1,\hat{S}_2$.

Suppose $\Pi_{1,j} = [e_0,\cdots,e_{k_l-1}]^T,\Pi_{2,j}=[e_1,\cdots,e_{k_l}]^T$ are the $j^{th},1 \leq j \leq \alpha_l-k_l,$ columns of $Q_{1,2}$ and $Q_{2,2}$ respectively, where $e_i \in \mathbb{F}_{q^2}$.
%, 0 \leq i \leq k_l$ can be seen as $k_l+1$ independent random numbers. 
Since $\Omega_{DC1}$ and $\Omega_{DC2}$ are Vandermonde matrices and have the same relationship as that of between $\mathbf{V}_{0,d_l-1,l}$ and $\mathbf{V}_{1,d_l,l}$, similar as the proof of Lemma~\ref{lm:reg_with_err_normal}, we can prove that when the number of malicious nodes in the $k_l + 1$ nodes is less than $k_l + 1$, the probability of $\Omega_{DC1}^{-1}\Pi_{1,j} = \Omega_{DC2}^{-1}\Pi_{2,j}$ is at most $1/q^2$. Thus the probability for $\hat{T}_1 = \hat{T}_2$ is at most $1/q^{2(\alpha_l - k_l)}$. Through the same procedure, we can derive that the probability of $\hat{S}_1 = \hat{S}_2$ is at most $1/q^{2k_l}$. The probability for both $\hat{S}_1 = \hat{S}_2$ and $\hat{T}_1 = \hat{T}_2$ is at most $1/q^{2\alpha_l}$. So the detection probability is at least $1- 1/q^{2\alpha_l}$.
\end{proof}

\subsubsection{Recovery Mode}

Once DC detects errors using Algorithm~\ref{alg:recon_with_err_normal_mbr}, it will send integer $j = q-1$ to all the $q^2$ nodes in the network requesting symbols. Storage node $i$ will use the way similar to that of the error-free network in Section~\ref{Sec:H-MBR-Reconstruct-error-free} to send symbols. The reconstruct procedures are described in Algorithm~\ref{alg:recon_with_err_recovery_mbr}.

\begin{algorithms}[Recovery mode] DC reconstructs the original file in hostile network
\label{alg:recon_with_err_recovery_mbr}

\normalfont
\begin{namelist}{\textbf{Step n:}}
\item [\step 1] For every  $0 \leq l \leq q-1$ , divide the symbol vector ${\mathbf{\tilde{y}}_{i,l}}'$ into $A/\alpha_{l}$ equal row vectors: $[{\mathbf{\tilde{y}}_{i,l,1}}',{\mathbf{\tilde{y}}_{i,l,2}}',$ $\cdots,{\mathbf{\tilde{y}}_{i,l,A/\alpha_{l}}}']$. (Without loss of generality, we assume $0 \leq i \leq q^2-1$.)

\item [\step 2] For every $q-1 \geq l \geq 0$ in descending order and $1 \leq t \leq A/\alpha_{l}$ in ascending order, DC reconstructs the matrices related to the original file when the errors in the received symbol vectors ${\mathbf{\tilde{y}}_{i,l,t}}'$ from $q^2$ storage nodes can be corrected:

%\begin{adjustwidth}{1.0em}{0pt}
\textbf{Step 2.1:} Let
${R}'=[{\mathbf{\tilde{y}}_{0,l,t}}'^T, {\mathbf{\tilde{y}}_{1,l,t}}'^T, \cdots, {\mathbf{\tilde{y}}_{q^2-1,l,t}}'^T]^T$.

\textbf{Step 2.2:} If the number of corrupted nodes detected is larger than $\min \{ q^2- k_l , \lfloor (q^2 - k_{q-1})/2 \rfloor\}$, then the number of errors have exceeded the error correction capability. So here we will flag the decoding failure and exit the algorithm.

\textbf{Step 2.3:}  Since the number of errors is within the error correction capability of the H-MBR code, substitute ${\mathbf{\tilde{y}}_{i,l,t}}'$ in ${R}'$ with the symbol $\otimes$
representing an erasure vector if node $i$ has been detected to be corrupted in the previous loops (previous values of $l,t$).

\textbf{Step 2.4:} Solve $M_{l,t}$ using the method in section \ref{sec:Rec-M}. If symbols from node $i$ are detected to be erroneous during the calculation, mark node $i$ as corrupted.
%\end{adjustwidth}

\item [\step 3] DC reconstructs the original file from all the matrices $M_{l,t}$, $0 \leq l \leq q-1$ and $1 \leq t \leq A/\alpha_{l}$.
\end{namelist}
\end{algorithms}

For data reconstruction described in Algorithm~\ref{alg:recon_with_err_recovery_mbr}, since the structures of the underlying Hermitian codes of H-MSR code and H-MBR code with the same code rates are the same, we have similar result as that in Theorem~\ref{thm:recon_with_err_recovery}.
\begin{theorem}[H-MBR Reconstruction--Recovery Mode]\label{thm:recon_with_err_recovery_mbr}
For data reconstruction, the number of errors that the H-MBR code can correct is 
\begin{equation}\label{eq:num_of_errs_msr_recovery_recon_mbr}
\tau_{H-MBR}  =  q \cdot \lfloor (q^2 - {k_{q-1}})/2  \rfloor.
\end{equation}
\end{theorem}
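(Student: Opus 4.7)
The plan is to mirror the argument for Theorem~\ref{thm:recon_with_err_recovery} (H-MSR Reconstruction--Recovery Mode), as the authors indicate in the paragraph immediately preceding the theorem: H-MSR and H-MBR share the same underlying Hermitian-code structure for the same code rates, so the decomposition argument transfers. Concretely, I would first show that after each storage node $i$ removes its Vandermonde factor via $\tilde Y_i = B_i^{-1} Y_i$, the responses collected across all $q^2$ storage nodes split, for each $l \in \{0,1,\dots,q-1\}$, into an MDS codeword of length $q^2$, dimension $k_l$, and minimum distance $q^2 - k_l + 1$. This step uses the Vandermonde structure of $\Phi_l$ from equation~(\ref{eq:phi}) together with the product-matrix structure of $M_{l,t}$ from equation~(\ref{eq:mbr_m_matrix}) reduced as in Section~\ref{sec:Rec-M}.

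Second, because the H-MBR design makes $\alpha_l$ (and hence $k_l \le \alpha_l$) strictly decreasing in $l$, the code at $l = q-1$ has the largest minimum distance and can on its own correct up to $\tau_{q-1} = \lfloor (q^2 - k_{q-1})/2 \rfloor$ errors. Following the order imposed by Algorithm~\ref{alg:recon_with_err_recovery_mbr}, I would decode the $q$ MDS codes sequentially from $l = q-1$ down to $l = 0$; each time a decoded coordinate disagrees with the received symbol, the corresponding storage node is permanently flagged as corrupted and its coordinates are passed forward as erasures to all later decodings. Because an already located corruption costs one unit of minimum-distance budget rather than two, the remaining $q-1$ codes can each still correct up to $\tau_{q-1}$ errors, which is the key observation that makes the final bound depend only on $k_{q-1}$ rather than on all the $k_l$. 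Summing across the $q$ independent MDS codes then yields $\tau_{H-MBR} = q \cdot \lfloor (q^2 - k_{q-1})/2\rfloor$.

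The main obstacle, identical in spirit to the one in Theorem~\ref{thm:recon_with_err_recovery}, is verifying that the erasure bookkeeping leaves sufficient budget in the smaller-distance codes: one must check that $\sigma + 2\tau \le q^2 - k_l$ holds in every round, where $\sigma$ is the number of accumulated erasures passed down from earlier rounds and $\tau$ is the number of as-yet unlocated corruptions. This follows from the strict monotonicity of $k_l$ (so $q^2 - k_l$ grows as $l$ decreases) together with the threshold check in \textbf{Step~2.2} of Algorithm~\ref{alg:recon_with_err_recovery_mbr}, which caps the total number of corrupted nodes at $\lfloor (q^2 - k_{q-1})/2\rfloor$. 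Once this bookkeeping is justified, the remainder of the proof is a direct restatement of the MDS decoding argument and requires no new ideas beyond those used in Theorem~\ref{thm:recon_with_err_recovery}.
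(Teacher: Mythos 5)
Your skeleton is the same as the paper's: the paper proves this theorem purely by analogy with Theorems~\ref{thm:recon_with_err_recovery} and~\ref{thm:reg_with_err_recovery}, viewing the responses for each $l$ as an MDS codeword (here, per Section~\ref{sec:Rec-M}, a $(q^2,k_l,q^2-k_l+1)$ code, and your use of length $q^2$ rather than the $q^2-1$ of the MSR case is correct), decoding from the largest-minimum-distance code at $l=q-1$ down to $l=0$, flagging located corrupt nodes and passing them forward as erasures, and summing to $q\cdot\lfloor(q^2-k_{q-1})/2\rfloor$.

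The problem is in your justification of the step you yourself single out as the main obstacle: the monotonicity claim is backwards. If the $k_l$ are decreasing in $l$, then as $l$ decreases from $q-1$ to $0$ the dimension $k_l$ \emph{increases}, so the budget $q^2-k_l$ in the condition $\sigma+2\tau\le q^2-k_l$ of Section~\ref{sec:Rec-M} \emph{shrinks}; the later-decoded codes are the weaker ones. Monotonicity therefore does not hand you the erasure bookkeeping for free; it is precisely the reason you must check the accumulated erasure load against the worst case $l=0$, i.e., you need $\lfloor(q^2-k_{q-1})/2\rfloor\le q^2-k_l$ for all $l$ (tightest at $l=0$), which is an additional parameter condition and is exactly the analogue of the side condition ``when $q^2-d_0-1\ge\tau_{q-1}$'' that the paper invokes in the proof of Theorem~\ref{thm:reg_with_err_recovery}. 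A second, smaller slip: ``$\alpha_l$ strictly decreasing, hence $k_l\le\alpha_l$ strictly decreasing'' is a non sequitur — the H-MBR construction only requires $0<k_i\le\alpha_i$, so the decreasing order of the $k_l$ (needed for $l=q-1$ to have the largest minimum distance, and implicitly assumed by the paper) must be taken as an assumption on the parameter choice, not derived from the monotonicity of $\alpha_l$. With these two repairs — assume $k_0\ge\cdots\ge k_{q-1}$ and impose the erasure-budget condition explicitly — the remainder of your argument coincides with the paper's.
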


\subsection{Recover Matrices $M_{\alpha_l,t}$ from $q^2$ Storage Nodes}\label{sec:Rec-M}

When there are bogus symbols ${\tilde{p}_{i,l,t}}'$ sent by the corrupted nodes for certain $l,t$, we can recover the matrices $M_{\alpha_l,t}$ as follows:

For ${R}'$ in Algorithm~\ref{alg:recon_with_err_recovery_mbr}, we have $\Phi_{DC} \cdot M'  =  {R}'$, where $\Phi_{DC} = \mathbf{W}_{0,q^2-1,l} = [\Omega_{DC}, \Delta_{DC}]$, $R' = [R_1',R_2']$. $\Omega_{DC}$, $R_1'$ are $q^2 \times k_l$ submatrices and $\Delta_{DC}$, $R_2'$ are $q^2 \times (\alpha_l - k_l)$ submatrices.

According to equation~(\ref{eq:mbr_m_matrix}), we have
\begin{equation}
\Phi_{DC} \cdot M' = [\Omega_{DC}S' + \Delta_{DC}T'^T, \Omega_{DC}T'] = [R_1', R_2'].
\end{equation}

For $R_2'=\Omega_{DC}T'$, we can view each column of $R_2'$ as a $(q^2, k_l, q^2 - k_l + 1 )$ MDS code because $\Phi_{DC}$ is a Vandermonde matrix. If the number of erasures $\sigma$ (corresponding to the previously detected corrupted nodes) and the number of corrupted nodes $\tau$ that have not been detected satisfy:
\begin{equation}
\sigma + 2\tau \leq q^2 - k_l,
\end{equation}
then all the columns of $T'$ can be recovered and the error locations (corresponding to the corrupted nodes) can be pinpointed. After $T'$ has been recovered, we can recover $S'$ following the same process because $\Omega_{DC}S' = R_1' - \Delta_{DC}T'^T$. So DC can reconstruct $M_{\alpha_l,t}$.

\section{Performance Analysis}\label{Sec:Performance}

In this section, we analyze the performance of the H-MSR code and compare it with the performance of the RS-MSR code. We will first analyze their error correction capability then their complexity.

The comparison results between the H-MBR code and the RS-MBR code are the same since the error correction capability and the complexity of the H-MSR code and the H-MBR code are similar while these performance parameters of the RS-MSR code and the RS-MBR code are similar.

\subsection{Scalable Error Correction}

\subsubsection{Error correction for data regeneration}

The RS-MSR code in~\cite{Rashmi-err} can correct up to $\tau$ errors by downloading symbols from $d+2\tau$ nodes. However, the number of errors may vary in the symbols sent by helper nodes. When there is no error or the number of errors is far less than $\tau$, downloading symbols from extra nodes will be a waste of bandwidth. When the number of errors is larger than $\tau$, the decoding process will fail without being detected. In this case, the symbols stored in the replacement node will be erroneous. If this erroneous node becomes a helper node later, the errors will propagate to other nodes.

The H-MSR code can detect the erroneous decodings using Algorithm~\ref{alg:reg_with_err_normal}. If no error is detected, regeneration of H-MSR only needs to download symbols from one more node than the regeneration in error-free network, while the extra cost for the RS-MSR code is $2\tau$.
If errors are detected in the symbols received from the helper nodes,
the H-MSR code can correct the errors using Algorithm~\ref{alg:reg_with_err_recovery}. Moreover,
the algorithm can determine whether the decoding is successful,
while the RS-MSR code is unable to provide such information.

\subsubsection{Error correction for data reconstruction}

The evaluation result is similar to the data regeneration. The RS-MSR code can correct up to $\tau$ errors with support from $2\tau$ additional helper nodes.
The H-MSR code is more flexible.  For error detection, it only requires symbols from one additional node using Algorithm~\ref{alg:recon_with_err_normal}.  The errors can then be corrected using Algorithm~\ref{alg:recon_with_err_recovery}.  The algorithm can also determine whether the decoding is successful.

\subsection{Error Correction Capability}

%\delete{For data regeneration described in Algorithm~\ref{alg:reg_with_err_recovery}, H-MSR code can be viewed as $q$ MDS codes with parameters ($q^2-1, d_l,q^2-d_l$), $l=0,\cdots, q-1$. Since $\alpha_l \leq\kappa(l)$ and $\kappa(l)$ is strictly decreasing, we can choose the sequence $\alpha_l$ to be strictly decreasing. So $d_l$ is also strictly decreasing. For the $q$ MDS codes, the minimum distance of the ($q^2-1,d_{q-1},q^2-d_{q-1}$) code is the largest. In Algorithm~\ref{alg:reg_with_err_recovery}, this code is decoded first and it can correct up to $\tau_{q-1} = \left \lfloor (q^2-d_{q-1}-1)/2 \right \rfloor$ errors, where $\left \lfloor x \right \rfloor$ is the floor function of $x$. Then the code $(q^2-1,d_l,q^2-d_l),\, l=q-2, \cdots, 0$, will be decoded sequentially.  The ($q^2-1, d_l,q^2-d_l$) code can correct at most $\tau_l = \tau_{q-1}$ errors when $q^2- d_0 - 1 \geq \tau_{q-1}$. Thus, the total numbers errors that the H-MSR code can correct is $\tau_{H-MSR} = q \cdot {\tau_{q-1}}$.}

For data regeneration described in Algorithm~\ref{alg:reg_with_err_recovery}, according to Theorem~\ref{thm:reg_with_err_recovery} and equation~(\ref{eq:num_of_errs_msr_recovery}), the H-MSR code can correct $\tau_{H-MSR}=q \cdot \lfloor (q^2 - {d_{q-1}} -1)/2  \rfloor$ errors, while the $(q^3 - q, \sum_{l=0}^{q-1}{d_l},q^3 - q - \sum_{l=0}^{q-1}{d_l} + 1)$ RS-MSR code with the same rate can correct $\tau_{RS-MSR} = \lfloor (q^3 - q - \sum_{l=0}^{q-1}{d_l})/2 \rfloor$ errors. Therefore, we have the following theorem.
\begin{theorem}
For data regeneration, the number of errors that the H-MSR code and the RS-MSR code can correct satisfy $\tau_{H-MSR} > \tau_{RS-MSR}$ when $q \geq 3$.
\end{theorem}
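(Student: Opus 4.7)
The plan is to directly compare the two quantities by exploiting the fact that the sequence $d_l=2\alpha_l$ is a strictly decreasing sequence of positive integers, so in particular $d_l\geq d_{q-1}+2(q-1-l)$ for every $l$. Summing this lower bound over $l=0,\dots,q-1$ gives
\begin{equation*}
\sum_{l=0}^{q-1} d_l \;\geq\; q\,d_{q-1}+2\sum_{l=0}^{q-1}(q-1-l)\;=\;q\,d_{q-1}+q(q-1).
\end{equation*}
This inequality is the key structural input; intuitively it says that the RS-MSR code is forced to carry a lot of extra redundancy for the outer levels, while the H-MSR code only needs to match the smallest $d_{q-1}$.

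Next I would translate this into an upper bound on $\tau_{RS\text{-}MSR}$. Since $\lfloor x/2\rfloor\leq x/2$, we have
\begin{equation*}
2\tau_{RS\text{-}MSR}\;\leq\;q^3-q-\sum_{l=0}^{q-1}d_l\;\leq\;q^3-q-q\,d_{q-1}-q(q-1)\;=\;q^3-q^2-q\,d_{q-1}.
\end{equation*}
In parallel I would lower-bound $\tau_{H\text{-}MSR}$ using $\lfloor y/2\rfloor\geq(y-1)/2$, which gives
\begin{equation*}
2\tau_{H\text{-}MSR}\;=\;2q\Bigl\lfloor\tfrac{q^2-d_{q-1}-1}{2}\Bigr\rfloor\;\geq\;q(q^2-d_{q-1}-2)\;=\;q^3-q\,d_{q-1}-2q.
\end{equation*}
Since both $\tau_{H\text{-}MSR}$ and $\tau_{RS\text{-}MSR}$ are integers, it suffices to show $2\tau_{H\text{-}MSR}-2\tau_{RS\text{-}MSR}\geq 2$.

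Subtracting the two bounds above yields
\begin{equation*}
2\tau_{H\text{-}MSR}-2\tau_{RS\text{-}MSR}\;\geq\;(q^3-q\,d_{q-1}-2q)-(q^3-q^2-q\,d_{q-1})\;=\;q^2-2q\;=\;q(q-2).
\end{equation*}
For $q\geq 3$ this gives $q(q-2)\geq 3\geq 2$, so $\tau_{H\text{-}MSR}\geq\tau_{RS\text{-}MSR}+1>\tau_{RS\text{-}MSR}$, completing the argument. The only subtlety worth flagging is the handling of the two floor functions; I used the weakest possible bound on each ($\lfloor y/2\rfloor\geq(y-1)/2$ and $\lfloor x/2\rfloor\leq x/2$) so that the final inequality holds uniformly in the parities of $q^2-d_{q-1}-1$ and $q^3-q-\sum d_l$, and this loss of $2q$ on the H-MSR side is exactly what forces the threshold $q\geq 3$ rather than $q\geq 2$, which is consistent with the hypothesis of the theorem.
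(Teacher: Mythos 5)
Your proposal is correct, and it follows the same skeleton as the paper's proof (lower-bound $\sum_{l} d_l$ via the decreasingness of the $d_l$, then handle the two floors and compare), but with one genuinely sharper ingredient. The paper only uses that $d_l$ is strictly decreasing, i.e.\ $d_l \geq d_{q-1} + (q-1-l)$, which gives $\sum_{l=0}^{q-1} d_l \geq q\,d_{q-1} + \tfrac{q(q-1)}{2}$ and hence a gap of at least $\tfrac{q(q-1)}{4} - \tfrac{q}{2} = \tfrac{q^2-3q}{4}$; this is positive only for $q>3$, so the paper has to dispose of $q=3$ by a separate direct verification. You instead exploit that $d_l = 2\alpha_l$ with the $\alpha_l$ a strictly decreasing \emph{integer} sequence, so $d_l \geq d_{q-1} + 2(q-1-l)$ and $\sum_l d_l \geq q\,d_{q-1} + q(q-1)$; this doubles the sum term and yields the uniform bound $2\tau_{H-MSR} - 2\tau_{RS-MSR} \geq q(q-2)$, which is already positive at $q=3$, eliminating the special case. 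The trade is minor but real: the paper's argument is marginally more general (it would survive if the $d_l$ merely decreased by $1$ per level) at the cost of a case split, while yours uses the actual MSR structure $d_l = 2\alpha_l$ to get a cleaner, case-free inequality. One cosmetic remark: to conclude $\tau_{H-MSR} > \tau_{RS-MSR}$ it already suffices that $2\tau_{H-MSR} - 2\tau_{RS-MSR} \geq 1$ (both are integers), so your target of $2$ is slightly stronger than needed, which is harmless.
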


\begin{proof}
For $\tau_{RS-MSR}$, we have
%\begin{eqnarray}
%\label{eq:min_dist_rs}
%\tau_{RS-MSR} &=& \left\lfloor \left(q^3 - q - \sum_{l=0}^{q-1}{d_l}\right)/2 \right\rfloor \\
%       & \leq &  \left\lfloor (q^3 - q - d_{q-1} - \sum_{l=0}^{q-2}(d_{q-1} + 1))/2  \right\rfloor \nonumber \\
%       & = &  \lfloor q \cdot (q^2 - {d_{q-1}} -1)/2 - (q-1)/2 \rfloor \nonumber.
%\end{eqnarray}
%
\begin{eqnarray}
\label{eq:min_dist_rs}
\tau_{RS-MSR} &=& \left\lfloor \left(q^3 - q - \sum_{l=0}^{q-1}{d_l}\right)/2 \right\rfloor \\
       & \leq &  \left\lfloor (q^3 - q - q\cdot d_{q-1} - \frac q2(q-1) )/2  \right\rfloor \nonumber \\
       & = &  \left\lfloor q \cdot (q^2 - {d_{q-1}} -1)/2 -\frac {q(q-1)}4  \right\rfloor \nonumber\\
       & \leq &   q \cdot (q^2 - {d_{q-1}} -1)/2 -\frac {q(q-1)}4  . \nonumber
\end{eqnarray}

For $\tau_{H-MSR}$, we have
\begin{equation}
\label{eq:min_dist_h}
\tau_{H-MSR}  =  q \cdot \lfloor (q^2 - {d_{q-1}} -1)/2  \rfloor.
\end{equation}

When $q=3$, it is easy to verify that $ \tau_{H-MSR} > \tau_{RS-MSR}$.

When $q>3$, We can rewrite equation~(\ref{eq:min_dist_h}) as
\begin{equation}
\tau_{H-MSR}  \geq  q \cdot(q^2 - {d_{q-1}} -1)/2   - q/2.
\end{equation}
The gap between $\tau_{H-MSR}$ and $\tau_{RS-MSR}$ is at least
\begin{equation}
\label{eq:gap}
\frac {q(q-1)}4 - \frac {q}2 =   \frac {q^2 - 3q}4 >0 \, , q>3 ,
\end{equation}
so we have $ \tau_{H-MSR} > \tau_{RS-MSR}$.
\end{proof}

\begin{example}
Suppose $q=4$ and $m=37$, the Hermitian curve is defined by $y^4 + y = x^5$ over $\mathbb{F}_{4^2}$. From the previous discussion, we have $\kappa(0)=10,\kappa(1)=9,\kappa(2)=7,\kappa(3)=6$. Choose $\alpha_0=6,\alpha_1=5,\alpha_2=4,\alpha_3=3$. So $d_0=12,d_1=10,d_2=8,d_3=6$. According to the analysis above, we have $\tau_{H-MSR} = 4 \cdot \tau_3 = 4 \cdot \lfloor (15-6)/2 \rfloor = 16$, which is larger than $\tau_{RS-MSR} = \lfloor (60-36)/2 \rfloor = 12 $.
\end{example}

We also show that the maximum number of malicious nodes from which the errors can be corrected by the H-MSR code in Fig.~\ref{fig:comparison}.  Here the parameter $q$ of the Hermitian code increases from 4 to 16 with a step of 2. In the figure, the code rates for the RS-MSR code and the H-MSR code are the same. The figure demonstrates that for data regeneration, the H-MSR code has better error correction capability than the RS-MSR code.

\begin{figure}
\centering
\includegraphics[width=3.5in]{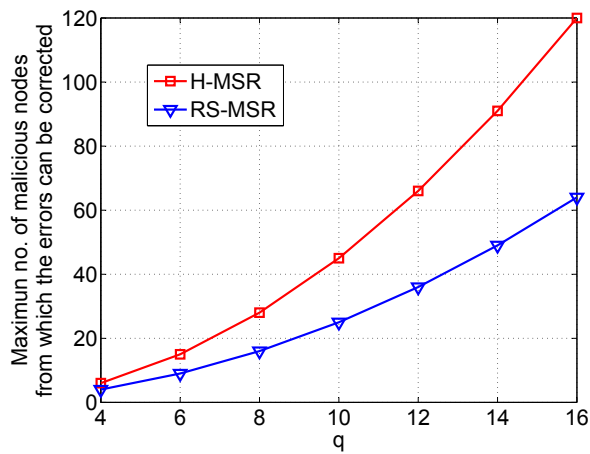}
\caption{Comparison of error correction capability between the H-MSR code and the RS-MSR code}
\label{fig:comparison}
\end{figure}

%\delete{For data reconstruction Algorithm~\ref{alg:recon_with_err_recovery}, H-MSR code can be viewed as $q$ MDS codes with parameters $(q^2-1, k_l - 1,q^2 - k_l + 1)$.  The decoding for the reconstruction is performed from the code with the largest minimum distance to the code with the smallest minimum distance as in the data regeneration case.  }

For data reconstruction described in Algorithm~\ref{alg:recon_with_err_recovery}, according to Theorem~\ref{thm:recon_with_err_recovery} and equation~(\ref{eq:num_of_errs_msr_recovery_recon}), the number of errors that H-MSR code can correct is $\tau_{H-MSR}  =  q \cdot \lfloor (q^2 - {k_{q-1}})/2  \rfloor$.
Similarly, we can conclude that for data reconstruction the H-MSR code has better error correction capability than the RS-MSR code under the same code rate.
%the number of errors that can be corrected by the H-MSR code is larger than the RS-MSR code under the same code rate.

\subsection{Complexity Discussion}

For the complexity of the H-MSR code, we consider two scenarios.

\subsubsection{H-MSR regeneration}

For the H-MSR regeneration, compared with RS-MSR code, the H-MSR code will slightly increase the complexity of the helper nodes. For each helper node, the extra operation is a matrix multiplication between $B_i^{-1}$ and $Y_i$. The  complexity is $O(q^2)=\calO((n^{1/3})^2)=\calO(n^{2/3})$.
Similar to \cite{Hermitian}, for a replacement node, from Algorithm~\ref{alg:reg_err_free} and Algorithm~\ref{alg:reg_with_err_normal}, we can derive that the complexity to regenerate symbols for RS-MSR
is $\calO(n^2)$, while the complexity for H-MSR is only $\calO(n^{5/3})$.
Likewise, for Algorithm \ref{alg:reg_with_err_recovery}, the complexity to recover the H-MSR code is $\calO(n^{5/3})$, and $\calO(n^2)$ for RS-MSR code.
%--------

\subsubsection{H-MSR reconstruction}

For the reconstruction, compared with RS-MSR code, %the extra complexity brought by
the additional complexity of the H-MSR code for each storage node is $\calO(q^2)$, which is
$\calO(n^{2/3})$.  The computational complexity for DC to reconstruct the data is $\calO(n^{5/3})$ for the H-MSR code and $\calO(n^2)$ for the RS-MSR code.

\section{Conclusion} \label{Sec:Conclusion}

In this paper, we developed a Hermitian code based minimum storage regeneration (H-MSR) code and a Hermitian code based minimum bandwidth regeneration (H-MBR) code for distributed storage.  Due to the structure of Hermitian code, our proposed codes can significantly improve the performance of the regenerating code under malicious attacks.  In particular, these codes can deal with errors beyond the maximum distance separable (MDS) code. Our theoretical analyses demonstrate that the H-MSR/H-MBR codes have lower complexity than the Reed-Solomon based minimum storage regeneration (RS-MSR) code and the Reed-Solomon based minimum bandwidth regeneration (RS-MBR) code in both regeneration and reconstruction. As a future research task, we will further analyze the optimal design of regenerating code based on the Hermitian-like codes.

\bibliographystyle{ieeetr} %

\begin{thebibliography}{10}

\bibitem{Ocean}
S.~Rhea, C.~Wells, P.~Eaton, D.~Geels, B.~Zhao, H.~Weatherspoon, and
  J.~Kubiatowicz, ``Maintenance-free global data storage,'' {\em IEEE Internet
  Computing}, vol.~5, pp.~40 -- 49, 2001.

\bibitem{Total}
R.~Bhagwan, K.~Tati, Y.-C. Cheng, S.~Savage, and G.~M. Voelker, ``Total recall:
  System support for automated availability management,'' in {\em roc. Symp.
  Netw. Syst. Design Implementation}, pp.~337--350, 2004.

\bibitem{Dimakis}
A.~Dimakis, P.~Godfrey, Y.~Wu, M.~Wainwright, and K.~Ramchandran, ``Network
  coding for distributed storage systems,'' {\em IEEE Transactions on
  Information Theory}, vol.~56, pp.~4539 -- 4551, 2010.

\bibitem{Rashmi}
K.~Rashmi, N.~Shah, and P.~Kumar, ``Optimal exact-regenerating codes for
  distributed storage at the msr and mbr points via a product-matrix
  construction,'' {\em IEEE Transactions on Information Theory}, vol.~57,
  pp.~5227--5239, 2011.

\bibitem{Ywu}
Y.~Wu, A.~G. Dimakis, and K.~Ramchandran, ``Deterministic regenerating codes
  for distributed storage,'' in {\em 45th Annu. Allerton Conf. Control,
  Computing, and Communication}, 2007.

\bibitem{Duminuco}
A.~Duminuco and E.~Biersack, ``A practical study of regenerating codes for
  peer-to-peer backup systems,'' in {\em Distributed Computing Systems, 2009.
  ICDCS '09. 29th IEEE International Conference on}, pp.~376 -- 384, June 2009.

\bibitem{Shum}
K.~Shum, ``Cooperative regenerating codes for distributed storage systems,'' in
  {\em 2011 IEEE International Conference on Communications (ICC)}, pp.~1--5,
  2011.

\bibitem{Ywu2}
Y.~Wu and A.~G. Dimakis, ``Reducing repair traffic for erasure coding-based
  storage via interference alignment,'' in {\em IEEE International Symposium on
  Information Theory, 2009. ISIT 2009.}, pp.~2276--2280, 2009.

\bibitem{Shah}
N.~Shah, K.~Rashmi, P.~Kumar, and K.~Ramchandran, ``Interference alignment in
  regenerating codes for distributed storage: Necessity and code
  constructions,'' {\em IEEE Transactions on Information Theory}, vol.~58,
  pp.~2134 -- 2158, 2012.

\bibitem{Pawar}
S.~Pawar, S.~El~Rouayheb, and K.~Ramchandran, ``Securing dynamic distributed
  storage systems against eavesdropping and adversarial attacks,'' {\em IEEE
  Transactions on Information Theory}, vol.~57, pp.~6734 -- 6753, 2011.

\bibitem{Han}
Y.~Han, R.~Zheng, and W.~H. Mow, ``Exact regenerating codes for byzantine fault
  tolerance in distributed storage,'' in {\em Proceedings IEEE INFOCOM},
  pp.~2498 -- 2506, 2012.

\bibitem{Rashmi-err}
K.~Rashmi, N.~Shah, K.~Ramchandran, and P.~Kumar, ``Regenerating codes for
  errors and erasures in distributed storage,'' in {\em International Symposium
  on Information Theory (ISIT) 2012}, pp.~1202--1206, 2012.

\bibitem{Hermitian}
J.~Ren, ``On the structure of hermitian codes and decoding for burst errors,''
  {\em IEEE Transactions on Information Theory}, vol.~50, pp.~2850-- 2854,
  2004.

\bibitem{sbjaggi0}
S.~Jaggi, ``Design and analysis of network codes,'' {\em $http:\slash\slash
  thesis.library.caltech.edu\slash2304\slash1\slash jaggi\textunderscore
  thesis.pdf$}.

\bibitem{sbjaggi1}
L.~Nutman and M.~Langberg, ``Adversarial models and resilient schemes for
  network coding,'' in {\em Information Theory, 2008. ISIT 2008. IEEE
  International Symposium on}, pp.~171--175, July 2008.

\bibitem{sbjaggi2}
O.~Kosut and L.-W. Kao, ``On generalized active attacks by causal adversaries
  in networks,'' in {\em Information Theory Workshop (ITW), 2014 IEEE},
  pp.~247--251, Nov 2014.

\bibitem{sbjaggi3}
P.~Wang and R.~Safavi-Naini, ``An efficient code for adversarial wiretap
  channel,'' in {\em Information Theory Workshop (ITW), 2014 IEEE}, pp.~40--44,
  Nov 2014.

\end{thebibliography}

\end{document}